\newcommand{\thickhline}{%
    \noalign {\ifnum 0=`}\fi \hrule height 1pt
    \futurelet \reserved@a \@xhline
}
\newcolumntype{"}{@{\hskip\tabcolsep\vrule width 1pt\hskip\tabcolsep}}
\newtheorem{theorem}{Theorem}
\newtheorem{lemma}[theorem]{Lemma}
\newtheorem{example}{Example}
\newtheorem{definition}{Definition}
\newtheorem{algorithm}{Algorithm}
\begin{document}

%
\title{LDPC Codes over the $q$-ary Multi-Bit Channel}

\author{Rami~Cohen,~\IEEEmembership{Graduate Student Member,~IEEE,}
Netanel~Raviv,~\IEEEmembership{Member,~IEEE,} and~Yuval~Cassuto,~\IEEEmembership{Senior Member,~IEEE}
\thanks{The first and third authors are with the Andrew and Erna Viterbi Faculty of Electrical Engineering, Technion - Israel Institute of Technology, Haifa, Israel 3200003 (email: rc@campus.technion.ac.il, ycassuto@ee.technion.ac.il); the second author is with the Department of Computer Science, Technion - Israel Institute of Technology, Haifa, Israel 3200003 (email: netanel@cs.technion.ac.il) }
\thanks{Parts of this work were presented at the 9th International Symposium on Turbo Codes \& Iterative Information Processing (ISTC), September 2016, Brest, France.}}

%



\maketitle

\begin{abstract}
In this paper, we introduce a new channel model we term the $q$-ary multi-bit channel (QMBC). This channel models a memory device, where $q$-ary symbols ($q=2^s$) are stored in the form of current/voltage levels. The symbols are read in a measurement process, which provides a symbol bit in each measurement step, starting from the most significant bit. An error event occurs when not all the symbol bits are known. To deal with such error events, we use GF($q$) low-density parity-check (LDPC) codes and analyze their decoding performance. We start with iterative-decoding threshold analysis, and derive optimal edge-label distributions for maximizing the decoding threshold. We later move to finite-length iterative-decoding analysis and propose an edge-labeling algorithm for improved decoding performance. We then provide finite-length maximum-likelihood decoding analysis for both the standard non-binary random ensemble and LDPC ensembles. Finally, we demonstrate by simulations that the proposed edge-labeling algorithm improves finite-length decoding performance by orders of magnitude.
\end{abstract}

\IEEEpeerreviewmaketitle
\section{Introduction}

In multi-level memories, information is often stored in the form of $q=2^s$ (for some integer $s$) voltage/current levels. As an example, flash memory chips with triple-level cell (TLC) technology store eight levels in each cell. In the read process, the stored levels are measured and converted to a $q$-ary symbol. In this work, we introduce the \textit{$q$-ary multi-bit channel} (QMBC) model for reading information from memory devices and modeling possible errors. The QMBC is a special case of a \textit{partial-erasure channel} \cite{CC}, where the channel output is a \textit{set} containing the input symbol.

In the QMBC, each $q$-ary symbol is decomposed into $s$ bits with hierarchical structure. The bits are organized such that when the channel erases bit $j\in\{1,\ldots,s\}$, all lower bits $\{1,\ldots,j-1\}$ are erased as well. That is, the QMBC directly models a readout by a binary-search sequence that may terminate while the last $j$ measurements are missing. In addition, the QMBC mimics errors with magnitude limits (common in non-volatile memories), where an error may affect only the $j$ lower bits of the symbol. One use of this channel is when the level-measurement process can return partial-precision read values. Another use is as a loyal and theoretically manageable proxy for designing LDPC codes for graded-magnitude errors, similarly to binary erasures being a good proxy for symmetric bit errors.

In the QMBC, the channel outputs either the input symbol, or a set of $2^j$ ($j\in\{1,...,s\}$) \textit{consecutive} symbols that contain the input symbol. In the latter case, we say that a \textit{partial-erasure} event occurred. For example, in the highest-severity partial-erasure event that is not a full erasure, the output set contains either the lower or upper $q/2$ symbols. This model is different from the $q$-ary partial-erasure channel (QPEC) model \cite{CC}, where the channel output is a random set containing the input symbol.

To deal with QMBC partial-erasure events, we use GF($q$) low-density parity-check (LDPC) codes \cite{Gallager1, MacKay} due to their low complexity of implementation and good performance under iterative decoding. We show that messages exchanged in the iterative-decoding process have certain structural properties that facilitate decoding-performance analysis. To obtain a suitable measure of asymptotic iterative-decoding performance, we extend the binary erasure channel (BEC) decoding threshold \cite{mct}, by defining the QMBC \textit{decoding threshold region}. We use the structure of the messages to both simplify the decoding-threshold region analysis and to derive an optimal code-graph edge label distribution for maximal performance.

We later move to design and analysis of {\em finite-length} LDPC codes for the QMBC. When iterative decoding is applied over the QMBC, in addition to the stopping sets \cite{Di}, the finite-length performance depends strongly on the edge labels. We theoretically characterize this dependence by analyzing the algebraic structure of the partial-erasure sets within the finite field, and propose an edge-labeling algorithm that considerably mitigates the harmful effect of stopping sets. In that, our work extends previous label-optimization algorithms (e.g., \cite{Bazarsky, Amiri}) to the special structure of the QMBC. The advantage here is that the QMBC has strong solvability conditions that are local to a single check, and thus allow neutralizing stopping sets even without relying on the cycle structure of the graph. A very interesting result we show on local solvability is the existence of  {\em universal} edge labels, which guarantee solvability at the check node for all combinations of two QMBC partial-erasure sizes that satisfy $j_1+j_2\leq s$. This generalizes the known capability of the check to resolve $s$ bits of one erased variable node to resolving any combination of $s$ bits in two partially-erased variable nodes. We then study the QMBC finite-length maximum-likelihood decoding performance, both for the standard non-binary ensemble and regular LDPC ensembles. Because QMBC erasures are {\em subsets} of the field GF$(q)$, the main analytical challenge here is in losing the linear structure. Finally, simulation results show that our edge-labeling algorithm offers significant improvement over uniform labeling, and even more so compared to using a binary LDPC code.

%

This paper is structured as follows. In Section \ref{prelim}, the QMBC model and an iterative message-passing decoder are provided. Structural properties of the iterative decoder are given in Section \ref{sec:structural}. The QMBC decoding-threshold region and optimal edge-label distributions are introduced in Section \ref{sec:de_equations}. Finite-length analysis of iterative-decoding performance and an edge-labeling algorithm for improved decoding performance are presented in Section \ref{sec:ss}. We study finite-length \textit{maximum-likelihood} decoding performance in Section \ref{sec:finite}. Finally, simulation results are presented in Section \ref{sec:sim_results} and conclusions are provided in Section \ref{sec:conclusion}.

\section{Channel Model and Iterative Decoder}
\label{prelim}

The $q$-ary multi-bit channel (QMBC) belongs to the class of partial-erasure channels \cite{CC}, where the read process provides either the correct symbol or a \textit{partially-erased} symbol. In the latter case, a subset of the input symbols that contains the correct symbol is provided as the channel output. The binary and the $q$-ary erasure channels (BEC and QEC) are special cases of the QMBC, where \textit{full} erasures may occur, carrying no non-trivial information.

\subsection{Channel model and capacity}
\label{sec:channel_model}
The QMBC input alphabet consists of $q=2^s$ symbols: $\mathcal{X}=\left\{ {0,1,...,q - 1} \right\}$, for some integer $s$. For each input symbol $x$ and $j=0,1,2,...,s$, a \textit{partial-erasure} event of type $j$ occurs when only the $s-j$ left bits of $x$ in binary representation are known. In this case, the channel output is a set of $2^j$ consecutive symbols that have the same $s-j$ left bits as $x$. We denote this output set by ${\mathcal{M}_x^j}$. Note that $x \in \mathcal{M}_x^j$ for any $j$, i.e., the correct input symbol belongs to the output set. In addition, the input symbol is completely known when $j=0$. The transition probabilities governing the QMBC are:
\begin{equation}
\label{tran_matrix}
\Pr \left( {\left. {Y = \mathcal{M}_x^j} \right|X = x} \right) = {\varepsilon _j},
\end{equation}
where $\varepsilon_j$ for $j=0,1,...,s$ are the partial-erasure probabilities. Note that for $\varepsilon_1=\varepsilon_2=...=\varepsilon_{s-1}=0$ the QMBC reduces to the QEC, and when $s=1$ the QMBC reduces to the BEC.

\begin{example}
Assume that $q=4$. Then ${\cal M}_0^1 = {\cal M}_1^1 = \left\{ {0,1} \right\},{\cal M}_2^1 = {\cal M}_3^1 = \left\{ {2,3} \right\},{\cal M}_0^2 = {\cal M}_1^2 = {\cal M}_2^2 = {\cal M}_3^2 = \left\{ {0,1,2,3} \right\}$.
\end{example}

We now move to provide the QMBC capacity.
\begin{theorem}
\label{th:capacity}
The QMBC capacity is
\begin{equation}
\label{qmbc_capacity}
1 - \sum\limits_{j = 1}^s {\frac{{j{\varepsilon _j}}}{s}},
\end{equation}
measured in $q$-ary symbols per channel use.
\end{theorem}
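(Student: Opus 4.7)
The plan is to compute $I(X;Y)$ directly and show that the uniform input on $\{0,1,\ldots,q-1\}$ maximizes it. The first step is to reinterpret the channel output in a more convenient form. Whenever a type-$j$ partial erasure occurs, the set $\mathcal{M}_x^j$ consists of exactly the $2^j$ symbols sharing the top $s-j$ bits of $x$; the cardinality reveals $j$, and the common prefix reveals those top bits. Consequently, $Y$ is informationally equivalent to the pair $(J, X^{(s-J)})$, where $J$ is the erasure-type random variable with $\Pr(J=j)=\varepsilon_j$ (independent of $X$ by \eqref{tran_matrix}) and $X^{(k)}$ denotes the top $k$ bits of $X$ in binary representation.

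Second, using independence of $J$ and $X$, I would write
\begin{equation*}
I(X;Y) \;=\; I\!\left(X;\,J,X^{(s-J)}\right) \;=\; I\!\left(X;\,X^{(s-J)}\,\big|\,J\right) \;=\; \sum_{j=0}^{s}\varepsilon_j\, I\!\left(X;\,X^{(s-j)}\right).
\end{equation*}
Since $X^{(s-j)}$ is a deterministic function of $X$, each term equals $H(X^{(s-j)})$, which is bounded above by $s-j$ bits, with equality if and only if the top $s-j$ bits of $X$ are uniformly distributed. Hence
\begin{equation*}
I(X;Y) \;\leq\; \sum_{j=0}^{s}\varepsilon_j(s-j) \;=\; s - \sum_{j=1}^{s} j\varepsilon_j \quad\text{bits},
\end{equation*}
where I used $\sum_{j=0}^{s}\varepsilon_j=1$ in the last step.

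Third, the bound is achieved by the uniform distribution on $\mathcal{X}$, which simultaneously makes every prefix $X^{(s-j)}$ uniform, so all $s+1$ upper bounds are tight at once. This yields the capacity in bits per channel use; dividing by $s=\log_2 q$ converts to $q$-ary symbols and produces the claimed expression \eqref{qmbc_capacity}.

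The only real subtlety is the initial reduction, namely verifying that the map $\mathcal{M}_x^j \mapsto (j,X^{(s-j)})$ is a bijection on the channel-output alphabet and therefore preserves mutual information; this rests on the structural fact that the output sets are intervals of consecutive symbols of size $2^j$ aligned with the binary prefix, directly from the definition of $\mathcal{M}_x^j$ in Section \ref{sec:channel_model}. Once this equivalence is in hand, the rest is an application of the chain rule and the independence $J\perp X$.
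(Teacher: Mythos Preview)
Your proof is correct and takes a genuinely different route from the paper's. The paper proceeds by writing $I(X;Y)=H(Y)-H(Y|X)$, computing $H(Y|X)=-\sum_{j}\varepsilon_j\log\varepsilon_j$ directly from \eqref{tran_matrix}, expanding $H(Y)$ as a function of the input distribution, and then applying Lagrange multipliers to show that the uniform input maximizes $H(Y)$; the final capacity comes out after substituting $p_x=1/q$ and simplifying. Your argument instead replaces $Y$ by the sufficient statistic $(J,X^{(s-J)})$, uses the independence $J\perp X$ to strip off $J$ via the chain rule, and reduces the problem to bounding $H(X^{(s-j)})\le s-j$ for each $j$ separately. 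What your approach buys is that the optimization over input distributions becomes trivial: the bounds $H(X^{(s-j)})\le s-j$ are elementary and are all tight simultaneously under the uniform law, so no Lagrangian machinery is needed, and the nested structure of the sets $\mathcal{M}_0^j$ is exploited directly rather than hidden inside the computation of $H(Y)$. The paper's approach, on the other hand, is more mechanical and would generalize verbatim to channels where the output sets do not have the prefix structure that your bijection $\mathcal{M}_x^j\mapsto(j,X^{(s-j)})$ relies on.
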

The proof of this theorem is provided in Appendix \ref{proof:capacity}. If the only non-zero partial-erasure probability is $\varepsilon_s$, the QMBC capacity reduces to the QEC capacity $1-\varepsilon_s$, as expected.

\subsection{GF($q$) representation}
\label{sec:gfq_rep}

For analysis purposes, we map the symbols in $\mathcal{X}$ to GF($q=2^s$) elements. Consider a basis $\{\omega_1,\omega_2,...,\omega_s\}$ of GF($q=2^s$) over GF($2$). Denote by $\left\langle {{\omega _1},{\omega _2},...,{\omega _j}} \right\rangle $ the span of the basis elements $\omega_1,\omega_2,...,\omega_j$ for $j=1,2,...,s$.  As an example, $\left\langle {{\omega _1},{\omega _2}} \right\rangle  = \left\{ {a \cdot {\omega _1} + b \cdot {\omega _2}:a,b \in \left\{ {0,1} \right\}} \right\}$. We map the sets $\mathcal{M}_0^j$ for $j=1,2,...,s$ to $\left\langle {{\omega _1},{\omega _2},...,{\omega _j}} \right\rangle $, which are \textit{subgroups} of the additive group of GF($q$). These subgroups are linear subspaces of the field GF($q=2^s$) when viewed as a dimension-$s$ vector space over GF($2$). More generally, for each $j=1,2,...,s$ and $x \in \mathcal{X}$ we map $\mathcal{M}_x^j$ to the $2^{s-j}$ cosets of $\left\langle {{\omega _1},{\omega _2},...,{\omega _j}} \right\rangle$, where the coset representatives are taken from $\left\langle {{\omega _{j+1}},{\omega_{j+2}},...,{\omega_s}} \right\rangle$.

\begin{example}\upshape
\label{ex1}
Let $\alpha$ designate a root of the primitive polynomial $x^2 +x+1$ such that $\{1,\alpha\}$ is a basis of GF($4$) over GF($2$). The sets ${\cal M}_0^0,{\cal M}_0^1$ and ${\cal M}_0^2$ are mapped to the subgroups $\{ 0\},\{ 0,1\}$ and $\{ 0,1,\alpha ,\alpha +1\}$, respectively. The cosets of $\{ 0,1\}$ are $\{ 0,1\}$ and $\{ \alpha,\alpha+1\}$. Thus, ${\cal M}_1^1$ is mapped to $\{0,1\}$, while ${\cal M}_2^1$ and ${\cal M}_3^1$ are mapped to $\{\alpha,\alpha+1\}$. 
\end{example}

We will assume a mapping as above, and will refer to symbol/field representation of the elements in $\mathcal{X}$ interchangeably.

\subsection{GF($q$) LDPC codes}
\label{LDPC_Q}

The error-correcting codes we consider for dealing with the QMBC are GF($q$) LDPC codes \cite{Gallager1, MacKay}. These codes are defined by a sparse parity-check matrix with elements taken from GF($q$). This matrix is commonly visualized as a Tanner graph \cite{tanner}. The graph is bipartite, with \textit{variable} (left) nodes corresponding to codeword symbols, and \textit{check} (right) nodes corresponding to parity-check equations. The edge labels on the graph are taken from the non-zero elements of GF($q$). The parity-check equation induced by check node $\mathtt{c}$ is $\sum\limits_{\mathtt{v} \in \mathcal{N}\left(\mathtt{c} \right)} {{h_{\mathtt{c},\mathtt{v}}}\cdot {\mathtt{v}}}  = 0$, where $\mathcal{N}(\mathtt{c})$ is the set of variable nodes adjacent to check node $\mathtt{c}$ and $h_{\mathtt{c},\mathtt{v}}$ is the label on the edge connecting check node $\mathtt{c}$ to its neighbour $\mathtt{v} \in \mathcal{N}(\mathtt{c})$. The calculations are performed using the GF($q$) arithmetic. 


LDPC codes are usually characterized by the {\it degree distributions} of the variable and check nodes. They are called {\it regular} if both variable nodes and check nodes have constant degree. Otherwise, they are called \textit{irregular}. Denote by $d_v$ and $d_c$ the maximal degree of variable nodes and check nodes, respectively. As is customary \cite{mct}, we define the degree-distribution polynomials ${\lambda \left( x \right) = \sum\limits_{i = 2}^{{d_v}} {{\lambda _i}{x^{i - 1}}} }$ and ${\rho \left( x \right) = \sum\limits_{i = 2}^{{d_c}} {{\rho _i}{x^{i - 1}}} }$, where a fraction $\lambda_i$ ($\rho_i$) of the edges is connected to variable (check) nodes of degree $i$. The {\it design rate} $R$ of an LDPC code, measured in $q$-ary symbols per channel use, is \cite{mct}:
\begin{equation}
\label{LDPC_rate}
R = 1 - \left( {\sum\limits_{i = 2}^{{d_c}} {{\rho _i}/i} } \right)/\left( {\sum\limits_{i = 2}^{{d_v}} {{\lambda _i}/i} } \right).
\end{equation}
The design rate equals the actual rate if the rows of the LDPC code parity-check matrix are linearly independent. Otherwise, the design rate is a lower bound on the actual rate.

\subsection{Set iterative decoder}

Since the QMBC belongs to the class of partial-erasure channels, we use the iterative decoder suggested for such channels in \cite{CC}. In this decoder, sets of symbols are exchanged as messages in the decoding process. The set iterative decoder extends the BEC iterative decoder \cite{mct} to partial erasures, as follows. As usual, we have \textit{variable-to-check} (VTC) and \textit{check-to-variable} (CTV) messages. We denote by ${\rm CTV}_{\mathtt{c} \to \mathtt{v}}^{\left( l \right)}$ the CTV message from check node $\mathtt{c}$ to variable node $\mathtt{v}$ at iteration $l$. In a similar way, ${\rm VTC}_{\mathtt{v} \to \mathtt{c}}^{\left( l \right)}$ denotes the VTC message at iteration $l$. Both the VTC and CTV messages are \textit{sets} containing GF($q$) elements.

An outgoing message from a graph node to a target (adjacent) node depends on incoming messages along edges connected to the source node except the outgoing message edge. At iteration $l=0$ (initialization), variable node $\mathtt{v}$ sends its channel-information set (which can be one of the sets $\mathcal{M}_x^j$ defined in Section \ref{sec:channel_model}) to adjacent check nodes. We denote these initial messages by ${\rm VTC}_{\mathtt{v} }^{\left(0 \right)}$.

A CTV message ${\rm CTV}_{\mathtt{c} \to \mathtt{v}}^{\left( l \right)}$ contains all the possible symbol values of $\mathtt{v}$ that satisfy the parity-check equation at $\mathtt{c}$ given the VTC messages to $\mathtt{c}$ at iteration $l-1$. To calculate the CTV messages efficiently, the \textit{sumset} operation \cite{Tao} is used. This operation is defined for two sets $\mathcal{A}$ and $\mathcal{B}$ that contain GF($q$) elements as
\begin{equation}
\label{sumset_def}
\mathcal{A} + \mathcal{B} \triangleq \left\{ {a + b:a \in \mathcal{A},b \in \mathcal{B}} \right\},
\end{equation}
where the addition is performed using the GF($q$) arithmetic. That is, the set $\mathcal{A} + \mathcal{B}$ contains all pairwise sums of elements taken from $\mathcal{A}$ and $\mathcal{B}$. The CTV message from check node $\mathtt{c}$ to variable node $\mathtt{v}$ is then:
\begin{equation}
\label{CTV_def}
{\rm CTV}_{\mathtt{c} \to \mathtt{v}}^{\left( l \right)} =  \sum\limits_{\mathtt{v}' \in \left\{ {\mathcal{N}\left( \mathtt{c} \right)\backslash \mathtt{v}} \right\}} {\left( {\frac{{{h_{\mathtt{c},\mathtt{v}'}}}}{{{{h_{\mathtt{c},\mathtt{v}}}}}}} \right) \cdot {{\rm VTC}_{\mathtt{v}' \to \mathtt{c}}^{\left( l-1 \right)}}},
\end{equation}
where the sum is a sumset operation and the multiplications are performed element-wise. Once all the CTV messages are calculated, the VTC messages are calculated as the \textit{intersection} of the channel-information set and the incoming CTV message sets:
\begin{equation}
\label{VTC}
{\rm VTC}_{\mathtt{v} \to \mathtt{c}}^{\left( l \right)} = {\rm VTC}_{\mathtt{v}}^{\left( 0 \right)}\bigcap {\left\{ {\bigcap\limits_{\mathtt{c}' \in \left\{ {{\cal N}\left( \mathtt{v} \right)\backslash \mathtt{c}} \right\}} {{\rm CTV}_{\mathtt{c}' \to \mathtt{v}}^{\left( l \right)}} } \right\}}.
\end{equation}
A decoding failure occurs if unresolved variable nodes (i.e., containing sets with more than one symbol) remain after the decoder terminates.

\section{Structural Properties of Exchanged Messages}
\label{sec:structural}

In this section, we show that the VTC and CTV messages admit structural properties that facilitate iterative-decoding performance analysis. Denote the additive group of GF($q$) by $\text{GF}^{+}$($q$). We will see that to analyze the probability of decoding failure, it suffices to consider messages that are subgroups of $\text{GF}^{+}$($q$). Assuming the all-zero codeword, the decoding process starts with the channel-information sets $\mathcal{M}_0^j$ as \textit{channel subgroups}, which evolve into more general subgroups in the message-passing process. We start with two fundamental properties of the sumset and intersection operations between \textit{cosets} of subgroups. Note that sums involving sets are interpreted as sumsets (see \eqref{sumset_def}).

\begin{lemma}
\label{lem:fund_prop}
Consider two subgroups $\mathcal{H}_a, \mathcal{H}_b$ of $\text{GF}^{+}$($q$) and two cosets $\mathcal{H}_{a} + g_a$ and $\mathcal{H}_b + g_b$ for some $g_a,g_b \in \text{GF}^{+}$($q$). Then
\begin{equation}
\label{sumset_property}
\left( {{{\cal H}_a} + {g_a}} \right) + \left( {{{\cal H}_b} + {g_b}} \right) = \left( {{{\cal H}_a} + {{\cal H}_b}} \right) + \left( {{g_a} + {g_b}} \right).
\end{equation}
In addition, if both cosets contain an element $\gamma$, then
\begin{equation}
\label{intersection_property}
\left( {{\mathcal{H}_a} + {g_a}} \right)\bigcap {\left( {{\mathcal{H}_{b}} + {g_{b}}} \right)}  = \left({{\mathcal{H}_a}\bigcap {{\mathcal{H}_{b}}} } \right) + \gamma.
\end{equation}
\end{lemma}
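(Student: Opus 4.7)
The plan is to establish both identities by direct element-chasing arguments, relying only on the fact that $\text{GF}^{+}(q)$ is an abelian group and that any coset of a subgroup is a translate of that subgroup by any one of its own elements.

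For the sumset identity \eqref{sumset_property}, I would unpack the definition \eqref{sumset_def}: a generic element of the left-hand side has the form $(h_a + g_a) + (h_b + g_b)$ with $h_a \in \mathcal{H}_a$ and $h_b \in \mathcal{H}_b$. Using commutativity and associativity of the $\text{GF}^{+}(q)$ addition, this rearranges to $(h_a + h_b) + (g_a + g_b)$, which by definition lies in $(\mathcal{H}_a + \mathcal{H}_b) + (g_a + g_b)$. The reverse inclusion is obtained by reversing the same rearrangement, so set-wise equality follows. No special property of $\mathcal{H}_a$ or $\mathcal{H}_b$ beyond being subsets of the group is even needed for this part.

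For the intersection identity \eqref{intersection_property}, the first step is to rewrite both cosets using the common element $\gamma$ as representative: since $\gamma \in \mathcal{H}_a + g_a$, the standard coset fact gives $\mathcal{H}_a + g_a = \mathcal{H}_a + \gamma$, and similarly $\mathcal{H}_b + g_b = \mathcal{H}_b + \gamma$. It then suffices to prove $(\mathcal{H}_a + \gamma) \cap (\mathcal{H}_b + \gamma) = (\mathcal{H}_a \cap \mathcal{H}_b) + \gamma$. Here an element $x$ lies in the left-hand side iff $x - \gamma$ lies in both $\mathcal{H}_a$ and $\mathcal{H}_b$, i.e., in $\mathcal{H}_a \cap \mathcal{H}_b$, which is precisely the condition $x \in (\mathcal{H}_a \cap \mathcal{H}_b) + \gamma$. (Recall that in characteristic $2$ the subtraction $x - \gamma$ coincides with $x + \gamma$.)

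There is no real obstacle in this proof; the only subtle point is the normalization step in the intersection part. Without the hypothesis that both cosets share $\gamma$, the intersection of two cosets of different subgroups is either empty or a coset of $\mathcal{H}_a \cap \mathcal{H}_b$ shifted by some \emph{a priori} unknown element, so one would not be able to identify the shift with $g_a$ or $g_b$. The assumption that a common element $\gamma$ exists is exactly what pins the shift down and lets the intersection be written in the clean form $(\mathcal{H}_a \cap \mathcal{H}_b) + \gamma$.
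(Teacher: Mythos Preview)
Your proof is correct and follows essentially the same route as the paper's: the sumset identity is handled by associativity (and commutativity) of the group addition, and the intersection identity is obtained by first replacing both coset representatives with the common element $\gamma$ and then observing that $x\in(\mathcal{H}_a+\gamma)\cap(\mathcal{H}_b+\gamma)$ iff $x-\gamma\in\mathcal{H}_a\cap\mathcal{H}_b$. Your write-up is slightly more explicit about the two inclusions in the sumset part, but the argument is the same.
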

\begin{proof}
The relation in \eqref{sumset_property} is due to the associativity of the field addition operation. In addition, the sumset of $\mathcal{H}_a + \mathcal{H}_b$ forms a group, due to the closure of $\mathcal{H}_a$ and $\mathcal{H}_b$. Thus, the right-hand side of \eqref{sumset_property} is a \textit{coset} of $\mathcal{H}_a + \mathcal{H}_b$. To prove \eqref{intersection_property}, note that if $\gamma$ belongs to $\mathcal{H}_a + g_a$ (resp. $\mathcal{H}_b + g_b$) then $\mathcal{H}_a+g_a = \mathcal{H}_a + \gamma$ (resp. $\mathcal{H}_b+g_b = \mathcal{H}_b + \gamma$). An element $\mu$ lies in $\left( {{{\cal H}_a} + {g_a}} \right)\bigcap {\left( {{{\cal H}_b} + {g_b}} \right)} = \left( {{{\cal H}_a} + \gamma } \right)\bigcap {\left( {{{\cal H}_b} + \gamma } \right)}$ if and only if there are $h_a \in \mathcal{H}_a$ and $h_b \in \mathcal{H}_b$ such that $\mu = h_a + \gamma = h_b + \gamma$. This holds if and only if $\mu - \gamma = h_a = h_b$, meaning that $\mu - \gamma \in {{\cal H}_a}\bigcap {{{\cal H}_b}}$ or $\mu \in \left({{\cal H}_a}\bigcap {{{\cal H}_b}}\right) + \gamma$.
\end{proof}

As a result of Lemma \ref{lem:fund_prop}, the right-hand side of \eqref{sumset_property} is a coset of the group $\mathcal{H}_a+\mathcal{H}_b$ and the right-hand side of \eqref{intersection_property} is a coset of the group ${\mathcal{H}_a}\bigcap {{\mathcal{H}_b}}$. That is, the sumset and non-empty intersection operations between cosets result in cosets. Moreover, these operations can be performed between the underlying subgroups, followed by the addition of a constant. We leverage this observation to derive structural properties of the exchanged messages in the set iterative decoder. 

\begin{lemma}
\label{lem:cosets}
The VTC and CTV messages exchanged in the QMBC iterative-decoding process are cosets of subgroups of $\text{GF}^{+}$($q$).
\end{lemma}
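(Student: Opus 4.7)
The plan is to prove the claim by induction on the iteration index $l$. For the base case ($l=0$), the initial VTC messages are the channel-information sets $\mathcal{M}_x^j$. By the mapping described in Section \ref{sec:gfq_rep}, $\mathcal{M}_0^j$ equals the subgroup $\langle \omega_1,\ldots,\omega_j\rangle$ of $\text{GF}^+(q)$, and for general $x$, $\mathcal{M}_x^j$ is a coset of this subgroup. Hence the initial VTC messages are cosets of subgroups.

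For the inductive step, assume that at iteration $l-1$ every VTC message is a coset of some subgroup. I would first handle the CTV update \eqref{CTV_def}. Scaling a coset $\mathcal{H}+g$ by a non-zero element $h \in \text{GF}(q)$ yields $h\mathcal{H} + hg$, and $h\mathcal{H}$ is itself a subgroup of $\text{GF}^+(q)$ because multiplication by $h$ is an additive automorphism (it preserves sums via the field distributive law and is a bijection since $h \ne 0$). Therefore each scaled incoming VTC message is again a coset of a subgroup. Applying the first part of Lemma \ref{lem:fund_prop} inductively across the $|\mathcal{N}(\mathtt{c})\setminus \mathtt{v}|-1$ sumset operations shows that ${\rm CTV}_{\mathtt{c}\to\mathtt{v}}^{(l)}$ is a coset of the subgroup formed by summing the underlying subgroups.

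Next I would handle the VTC update \eqref{VTC}. Here each VTC message is the intersection of the initial channel-information coset with several CTV-message cosets, all of which are cosets of subgroups by the previous paragraph. To apply the second part of Lemma \ref{lem:fund_prop}, the intersection must be non-empty. I would establish this by a parallel inductive argument: the true transmitted symbol at $\mathtt{v}$ belongs to $\text{VTC}_{\mathtt{v}}^{(0)}$ (since the channel never removes the correct symbol), and since the transmitted codeword satisfies every parity-check equation, each CTV message necessarily contains the true value of its target variable. Thus the intersection in \eqref{VTC} contains at least the transmitted symbol, and iterated application of \eqref{intersection_property} shows it is a coset of the intersection of the respective subgroups.

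The main obstacle is the non-emptiness prerequisite for \eqref{intersection_property}: the lemma only describes intersections of cosets that meet. Overcoming this requires the auxiliary invariant that every message in the decoder contains the true transmitted symbol, which itself must be proved by a simultaneous induction on $l$ intertwined with the coset-structure induction. Once this invariant is in place, the rest of the argument is a direct combination of Lemma \ref{lem:fund_prop} with the closure of subgroups under field-element scaling.
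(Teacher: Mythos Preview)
Your proposal is correct and follows essentially the same approach as the paper: both argue by induction on the iteration index, use Lemma~\ref{lem:fund_prop} for the sumset and intersection steps, observe that scaling by a non-zero field element preserves the subgroup property, and rely on the invariant that the true transmitted symbol lies in every message to guarantee non-empty intersections. The only difference is presentational---the paper works out iteration~$1$ explicitly and then invokes ``repeating the arguments,'' whereas you frame the same reasoning as a formal induction and are more explicit about the simultaneous invariant; the mathematical content is identical.
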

\begin{proof}
As we saw in Section \ref{sec:gfq_rep}, the sets $\mathcal{M}_{0}^j$ ($j=0,1,...,s$) are mapped to \textit{subgroups} of $\text{GF}^{+}$($q$). More generally, the channel-information sets $\mathcal{M}_{x}^j$ for $x \in \mathcal{X}$ are mapped to \textit{cosets} of these subgroups. Denote by $x_\mathtt{v}$ the correct codeword symbol at a certain variable node $\mathtt{v}$. The CTV message from an adjacent check node $\mathtt{c}$ to $\mathtt{v}$ at iteration $1$ has the form (see \eqref{CTV_def})
\begin{equation}
\label{CTV_eq}
\sum\limits_{\mathtt{v}' \in \left\{ {\mathcal{N}\left( \mathtt{c} \right)\backslash \mathtt{v}} \right\}}  {\left({{g_{\mathtt{v}'}} \cdot \mathcal{M}_0^{{j_{\mathtt{v}'}}} + {g_{\mathtt{v}'}} \cdot {x_{\mathtt{v}'}}} \right)},
\end{equation}
where for each $\mathtt{v}' \in \{\mathcal{N}\left(\mathtt{c}\right) \backslash \mathtt{v} \}$, $g_{\mathtt{v}'}$ is a constant determined by the graph edge labels and $2^{j_{\mathtt{v}'}}$ is the cardinality of the channel-information set at $\mathtt{v}'$. For each $\mathtt{v}'$, the set ${g_{\mathtt{v}'}} \cdot \mathcal{M}_0^{j_{\mathtt{v}'}}$ is a subgroup of $\text{GF}^{+}$($q$), where closure follows from the closure of the subgroup $\mathcal{M}_0^{j_{{\mathtt{v}'}}}$. Therefore, \eqref{CTV_eq} is a sumset of cosets, resulting in a coset (see the first part of Lemma \ref{lem:fund_prop}).

Recall that the correct codeword symbol ${x_{\mathtt{v}}}$ is contained in any CTV message to $\mathtt{v}$, as the channel may introduce partial erasures but no errors. Thus, the sumset of cosets \eqref{CTV_eq} can be written as
\begin{equation}
\label{CTV_eq2}
\left( \sum\limits_{\mathtt{v}' \in \left\{ {\mathcal{N}\left( \mathtt{c} \right)\backslash \mathtt{v}} \right\}} {g_{\mathtt{v}'}} \cdot \mathcal{M}_0^{{j_{{\mathtt{v}'}}}} \right)+ {x_\mathtt{v}}.
\end{equation}
The VTC message at iteration $1$ from $\mathtt{v}$ to $\mathtt{c}$ is the intersection between the channel-information set at $\mathtt{v}$ and the CTV message sets from $\left\{ {{\cal N}\left( \mathtt{v} \right)\backslash \mathtt{c}} \right\}$ to $\mathtt{v}$. Both types of sets were shown above to be cosets, and all of them contain the correct codeword symbol ${x_{\mathtt{v}}}$. According to the second part of Lemma \ref{lem:fund_prop}, the intersection between these cosets is a coset. Repeating the arguments above for the next decoding iterations, an invariant is maintained that the VTC and CTV messages are cosets of subgroups of $\text{GF}^{+}$($q$).
\end{proof}

In the following theorem, we provide an important simplification for iterative-decoding performance analysis.

\begin{theorem}
\label{thm:all_zero}
The probability of decoding failure is independent of the transmitted codeword. Furthermore, if the all-zero codeword was transmitted, the exchanged messages are subgroups of $\text{GF}^{+}$($q$).
\end{theorem}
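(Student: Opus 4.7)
The plan is to handle the two claims in tandem by tracking, for an arbitrary transmitted codeword $\mathbf{x}$, the \emph{subgroup} underlying each message and its \emph{coset representative} separately. First I would establish the second claim (all-zero transmission yields subgroups) as a warm-up, since it drops out almost immediately from Lemma~\ref{lem:cosets} together with \eqref{CTV_eq2}: when $x_{\mathtt{v}}=0$ for every $\mathtt{v}$, the initial VTC messages are the subgroups $\mathcal{M}_0^{j_\mathtt{v}}$; the CTV expression \eqref{CTV_eq2} reduces to a sumset of subgroups (hence a subgroup, by closure and by the first part of Lemma~\ref{lem:fund_prop}); and the VTC update \eqref{VTC} is an intersection of subgroups all of which contain $0$, so by the second part of Lemma~\ref{lem:fund_prop} it is again a subgroup. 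Induction on the iteration index $l$ then propagates this invariant through the entire decoding process.

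Next I would prove codeword-independence by establishing a stronger statement: for any transmitted codeword $\mathbf{x}$, every ${\rm VTC}_{\mathtt{v}\to\mathtt{c}}^{(l)}$ and ${\rm CTV}_{\mathtt{c}\to\mathtt{v}}^{(l)}$ admits a decomposition $\mathcal{S} + x_\mathtt{v}$, where $\mathcal{S}$ is precisely the subgroup that would be produced at the same edge and iteration under all-zero transmission with the \emph{same} erasure pattern. The base case follows from $\mathcal{M}_{x_\mathtt{v}}^{j_\mathtt{v}} = \mathcal{M}_0^{j_\mathtt{v}} + x_\mathtt{v}$. For the CTV update I would substitute ${\rm VTC}_{\mathtt{v}'\to\mathtt{c}}^{(l-1)} = \mathcal{S}_{\mathtt{v}'\to\mathtt{c}}^{(l-1)} + x_{\mathtt{v}'}$ into \eqref{CTV_def}, use the first part of Lemma~\ref{lem:fund_prop} to pull the shifts outside the sumset, and then invoke the parity-check identity $\sum_{\mathtt{v}'\in \mathcal{N}(\mathtt{c})} h_{\mathtt{c},\mathtt{v}'}\, x_{\mathtt{v}'} = 0$; since $\text{GF}(q)$ has characteristic two, this yields $\sum_{\mathtt{v}'\neq \mathtt{v}} (h_{\mathtt{c},\mathtt{v}'}/h_{\mathtt{c},\mathtt{v}})\, x_{\mathtt{v}'} = x_\mathtt{v}$, producing exactly the required coset representative. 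The VTC update is then an intersection of cosets that all contain $x_\mathtt{v}$, so the second part of Lemma~\ref{lem:fund_prop} with $\gamma = x_\mathtt{v}$ preserves the invariant.

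Having established this invariant, the first claim follows because the decoding failure event at variable node $\mathtt{v}$ is exactly $\{|\mathcal{S}|>1\}$, where $\mathcal{S}$ is the subgroup attached to the final message at $\mathtt{v}$, and the shift $x_\mathtt{v}$ merely translates the message without changing its cardinality. The evolution of $\mathcal{S}$ is determined entirely by the edge labels, the Tanner graph, and the random erasure types $\{j_\mathtt{v}\}$; since by \eqref{tran_matrix} the joint distribution of $\{j_\mathtt{v}\}$ does not depend on the transmitted symbols, the distribution of each $\mathcal{S}$---and therefore the failure probability---is independent of $\mathbf{x}$. I expect the main obstacle to be the CTV bookkeeping in the inductive step: carefully verifying that the collection of incoming shifts collapses, via the parity-check equation, to precisely the symbol $x_\mathtt{v}$ at the target node, and confirming that this continues to hold at later iterations when the incoming messages are no longer channel cosets but propagated cosets whose representatives must still obey the same shift identity (which itself follows from the inductive hypothesis applied edge-by-edge).
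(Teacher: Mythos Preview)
Your proposal is correct and follows essentially the same route as the paper: separate each message into an underlying subgroup and a coset shift, argue inductively via Lemma~\ref{lem:fund_prop} that the subgroup part depends only on the erasure pattern while the shift is $x_{\mathtt{v}}$, and conclude that failure (a cardinality condition) is codeword-independent. The only cosmetic difference is that you identify the coset representative $x_{\mathtt{v}}$ algebraically via the parity-check identity (using characteristic two), whereas the paper simply observes that $x_{\mathtt{v}}$ must lie in every CTV message to $\mathtt{v}$ because the channel introduces no errors---either argument works.
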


\begin{proof}
We formally prove the intuitive fact that decoding progress only depends on the underlying {\em subgroups} exchanged in the messages, and not on which cosets of these subgroups are exchanged. A VTC message from variable node $\mathtt{v}$ depends on the intersection of cosets as in \eqref{CTV_eq2}. However, an intersection of cosets is a coset of the intersection of the underlying subgroups (Lemma \ref{lem:fund_prop}). Thus, the \textit{cardinality} of the VTC message depends on the underlying subgroups $\mathcal{M}_0^{j_{\mathtt{v}}}$ only. In other words, it depends on the \textit{partial-erasure pattern}, i.e., on the cardinalities of the channel-information sets. Thus, the VTC message cardinalities are \textit{independent} of the actual transmitted codeword.

A decoding failure occurs if a variable node set cardinality is larger than one at the end of the decoding process (recall that the correct symbol is always contained in the messages). Thus, the probability of decoding failure is independent of the transmitted codeword. If the all-zero codeword is transmitted, ${x_\mathtt{v}}$ in \eqref{CTV_eq2} are all zero. Thus, the CTV messages are obtained as a sumset of \textit{subgroups}, resulting in subgroups. As a consequence, the intersection operation at variable nodes is performed between subgroups, resulting in subgroups as well.
\end{proof}

\subsection{Complexity}

The complexity of the iterative-decoding performance analysis depends on the size of the space of possible messages. Due to Theorem \ref{thm:all_zero}, the space of possible messages is upper bounded by the number of subgroups of $\text{GF}^{+}$($q$).

\begin{theorem}
\label{thm:msg_num}
The number of possible VTC and CTV messages passed in the decoding process, assuming that the all-zero codeword was transmitted, is upper bounded by
\begin{equation}
\label{eqn:num_subgroups}
T = \sum\limits_{j = 0}^s {\left( {\frac{{\prod\limits_{i = 1}^j {\left( {{2^s} - {2^{i - 1}}} \right)} }}{{\prod\limits_{i = 1}^j {\left( {{2^j} - {2^{i - 1}}} \right)} }}} \right)},
\end{equation}
which is the number of subgroups of $\text{GF}^{+}$($q$).
\end{theorem}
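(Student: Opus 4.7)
The plan is to leverage Theorem \ref{thm:all_zero} to reduce the message-space enumeration to a subgroup-counting problem, and then count the subgroups via a standard linear-algebraic argument.

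First, by Theorem \ref{thm:all_zero}, the probability of decoding failure does not depend on the transmitted codeword, so we may restrict attention to the all-zero codeword. Under this assumption, Theorem \ref{thm:all_zero} guarantees that every VTC and CTV message exchanged during iterative decoding is a subgroup of $\text{GF}^{+}(q)$. Hence, an upper bound on the number of distinct messages is the total number of subgroups of $\text{GF}^{+}(q)$, and it suffices to show this number equals the expression $T$ in \eqref{eqn:num_subgroups}.

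Next, I would observe that $\text{GF}^{+}(q)$ with $q=2^s$ is an elementary abelian $2$-group: every non-identity element has order $2$. Consequently, it is isomorphic to the $s$-dimensional vector space $\mathbb{F}_2^s$ over $\text{GF}(2)$, and a subset is a subgroup if and only if it is closed under addition, which in this setting is precisely the same as being a $\text{GF}(2)$-linear subspace. Therefore counting subgroups of $\text{GF}^{+}(q)$ reduces to counting $\text{GF}(2)$-subspaces of an $s$-dimensional vector space, partitioned by dimension $j\in\{0,1,\ldots,s\}$.

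To count the $j$-dimensional subspaces, I would use the usual two-step argument. The number of ordered tuples $(u_1,\ldots,u_j)$ of linearly independent vectors in $\mathbb{F}_2^s$ is $\prod_{i=1}^{j}\left(2^{s}-2^{i-1}\right)$, since having chosen $i-1$ independent vectors, their $\text{GF}(2)$-span contains exactly $2^{i-1}$ elements and $u_i$ must avoid this span. The number of ordered bases of any fixed $j$-dimensional subspace is obtained by the same argument applied inside that subspace, giving $\prod_{i=1}^{j}\left(2^{j}-2^{i-1}\right)$. Dividing the former by the latter yields the Gaussian binomial coefficient
\begin{equation*}
\binom{s}{j}_{2}=\frac{\prod_{i=1}^{j}\left(2^{s}-2^{i-1}\right)}{\prod_{i=1}^{j}\left(2^{j}-2^{i-1}\right)},
\end{equation*}
which counts the $j$-dimensional subspaces (with the convention that the empty product equals $1$, yielding exactly one $0$-dimensional subspace $\{0\}$).

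Finally, summing over all possible dimensions $j=0,1,\ldots,s$ yields the total number of subgroups of $\text{GF}^{+}(q)$, matching $T$ in \eqref{eqn:num_subgroups}. There is no serious obstacle here: the conceptual step is the identification of $\text{GF}^{+}(q)$ as an elementary abelian $2$-group (so that its subgroups coincide with $\text{GF}(2)$-subspaces), after which the counting is a textbook Gaussian-binomial argument.
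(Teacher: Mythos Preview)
Your proposal is correct and follows essentially the same approach as the paper: invoke Theorem~\ref{thm:all_zero} to reduce to counting subgroups of $\text{GF}^{+}(q)$, identify $\text{GF}^{+}(q)$ with the $s$-dimensional vector space over $\text{GF}(2)$ so that subgroups are exactly subspaces, and count $j$-dimensional subspaces via the Gaussian binomial coefficient. The paper merely cites a reference for the subspace count, whereas you spell out the ordered-basis argument, but the method is the same.
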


Note that the number of subgroups of $\text{GF}^{+}$($q$) of cardinality $2^j$ is the $j^{\rm th}$ summand in \eqref{eqn:num_subgroups}, which is the Gaussian coefficient ${s \choose j}_2$. The proof of Theorem \ref{thm:msg_num} is based on representing $\text{GF}^{+}$($q$) as an $s$-dimensional vector space over GF($2$). Then, the number of subgroups of order $2^j$ is found as the number of subspaces of dimension $2^j$ (see e.g. \cite{Prasad} for the details). We remark that the actual number of subgroups exchanged in the decoding process (assuming that the all-zero codeword was transmitted) is not necessarily $T$. Instead, it depends on the channel information and on the edge labels. As an example, the only possible subgroups in the full-erasure case (i.e., if the only non-zero partial-erasure probability is $\varepsilon_s$) are $\mathcal{M}_0^0 =\left\{ 0 \right\}$ and $\mathcal{M}_0^s$, where the latter set contains all the field elements.

The number of subgroups of $\text{GF}^{+}$($q$) is plotted in Figure \ref{fig:num_subgroups} compared to the number of non-empty subsets of $\text{GF}^{+}$($q$) as a reference. This figure reveals the importance of the QMBC iterative-decoder structure to the analysis feasibility, by which the number of subgroups is orders of magnitude smaller compared to the number of subsets of $\text{GF}^{+}$($q$). Hence performing density-evolution analysis for the QMBC is orders of magnitude less complex than for a general channel in the class of partial-erasure channels.

\begin{figure}[t!]
\centering
\includegraphics[scale=0.6]{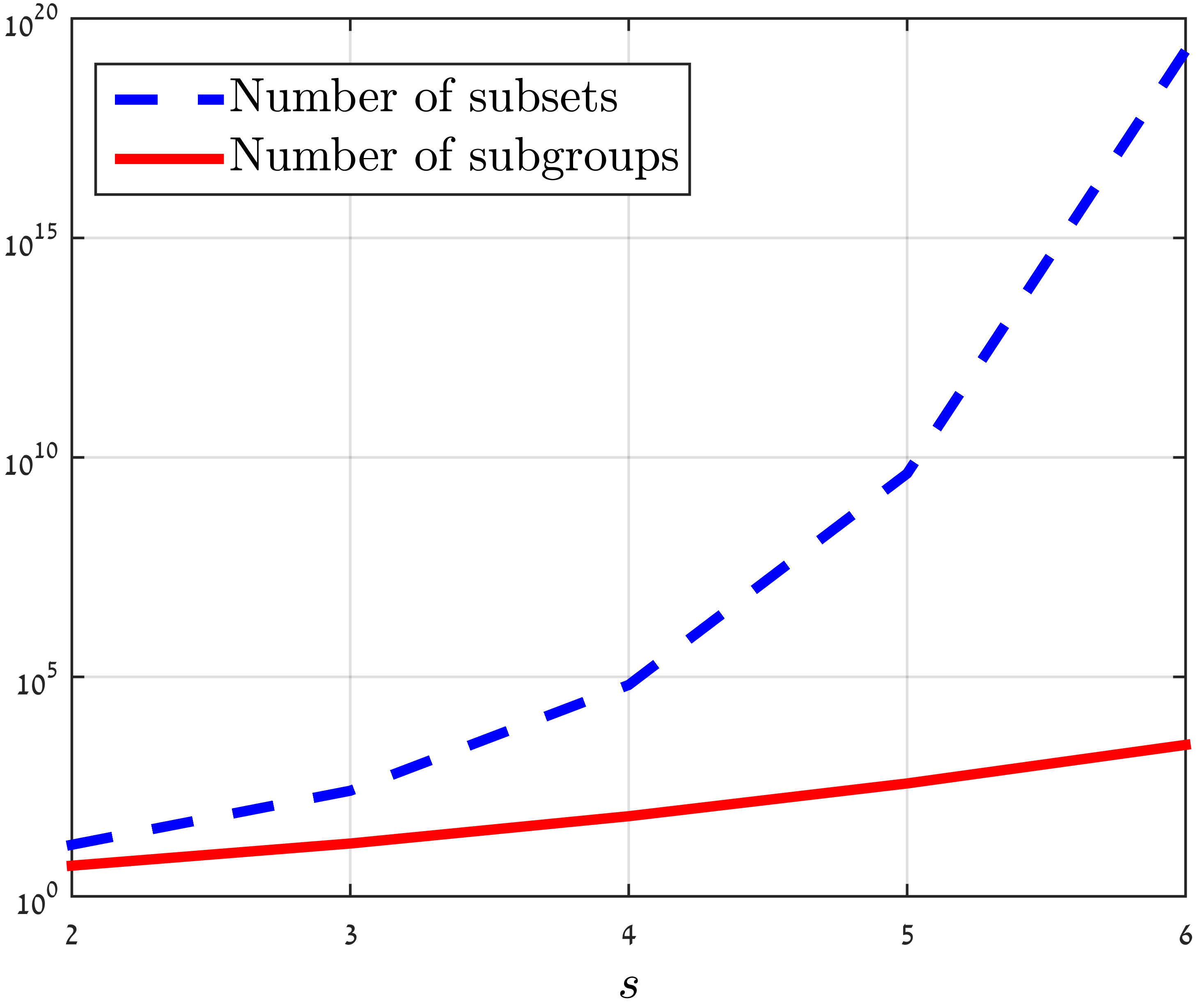}
\caption{The number of subgroups of $\text{GF}^{+}$($q$) compared to the number of subsets.}
\label{fig:num_subgroups}
\end{figure}


\section{The QMBC Decoding Threshold Region}
\label{sec:de_equations}

To evaluate the performance of the iterative decoder, we use the density evolution method \cite{RU, Rathi, bennatan}. In this method, the probabilities of the exchanged messages as a function of the decoding iteration are tracked. The code length is assumed to be sufficiently large, such that the exchanged messages are statistically independent with high probability \cite{RU}. Let us consider a Tanner graph drawn uniformly at random out of the graphs with certain degree distributions $\lambda(x)$ and $\rho(x)$. The transmission of the all-zero codeword is assumed (see Theorem \ref{thm:all_zero}), such that the possible messages are subgroups of $\text{GF}^{+}$($q$). We denote these subgroups by $\left\{ {{\mathcal{H}_t}} \right\}_{t = 1}^T$ (recall that $T$ is provided in \eqref{eqn:num_subgroups}). For convenience, we assume that $\mathcal{H}_1 = \mathcal{M}_0^0 = \{0\}$.
\begin{example} \upshape
\label{ex:subgroups_order}
Consider the representation of GF($4$) in Example \ref{ex1}. There are $T=5$ subgroups of $\text{GF}^{+}$($4$), which can be ordered as follows: ${\mathcal{H}_1} = \left\{ 0 \right\}$, ${\mathcal{H}_2} = \left\{ {0,1} \right\},{\mathcal{H}_3} = \left\{ {0,\alpha} \right\},{\mathcal{H}_4} = \left\{ {0,\alpha+1} \right\}$ and $\mathcal{H}_5 = \left\{ {0,1,\alpha,\alpha+1} \right\}$.
\end{example}

To obtain the QMBC density-evolution equations, we define $w_t^{\left( l \right)}$ (resp. $z_t^{\left( l \right)}$) as the probability that a CTV (resp. VTC) message at iteration $l$ is $\mathcal{H}_t$. We denote by $\mathcal{M}_{i-1}$ an ordered list containing $i-1$ subgroup indices taken from $\left\{ {1,2,...,T} \right\}$. These subgroups are interpreted as VTC (resp. CTV) messages to a check (resp. variable) node of degree $i$.

\begin{example} \upshape
Assume that $q=4$ (i.e., $T=5$ subgroups) and consider the $(3,6)$ LDPC code ensemble. Then $\mathcal{M}_2$ can be one of the ordered lists $\left[ {1,1} \right], \left[ {1,2} \right],\ldots,\left[ {5,5} \right]$. Similarly, $\mathcal{M}_5$ can be one of the ordered lists  $\left[ {1,1,1,1,1} \right]$, $\left[ {1,1,1,1,2} \right]$,$\ldots$,$\left[ {5,5,5,5,5} \right]$.
\end{example}

In the case of binary LDPC codes, the edge labels of a Tanner graph are simply '$1$'s. In the GF($q$) case, they are taken from the non-zero field elements. Thus, a GF($q$) LDPC ensemble is characterized by an edge-label distribution in addition to the degree distributions. Let us denote the edge-label probability distribution by $\mathbb{L}$. We define ${P_t}\left( {{{\cal M}_{i-1}},\mathbb{L}} \right)$ as the probability of $\mathcal{H}_t$ as a CTV message, given the VTC messages indexed in $\mathcal{M}_{i-1}$, and the distribution $\mathbb{L}$. We also define $I_{t,j}\left( {\mathcal{M}_{i-1}} \right)$ as an indicator function, which equals $1$ if the intersection of the CTV messages indexed in $\mathcal{M}_{i-1}$ and the channel-information set $\mathcal{M}_0^j$ is the VTC message $\mathcal{H}_t$. Otherwise, $I_{t,j}\left( {{\mathcal{M}_{i-1}}} \right)$ is $0$ (note that the calculation of $I_{t,j}$ is independent of the edge labels). The following density-evolution equations are obtained:
\begin{equation}
\label{de1}
w_t^{\left( l \right)} = \sum\limits_{i = 2}^{{d_c}} {{\rho _i}}
\sum\limits_{{\mathcal{M}_{i-1}}} {\left( {\prod\limits_{m \in \mathcal{M}_{i-1}} {z_m^{\left( {l - 1} \right)}} } \right)}  \cdot {P_t}\left( {\mathcal{M}_{i-1}},\mathbb{L} \right),
\end{equation}
\begin{align}
\label{de2}
z_t^{\left( l \right)} &= \sum\limits_{i = 2}^{{d_v}} {{\lambda _i}}  \sum\limits_{j = 0}^s {{\varepsilon _j}} \sum\limits_{{\mathcal{M}_{i-1}}} {\left( {\prod\limits_{m \in \mathcal{M}_{i-1}} {w_m^{\left( l \right)}} } \right)}   \cdot I_{t,j}\left( { \mathcal{M}_{i-1}} \right),
\end{align}
where the summation over $\mathcal{M}_{i-1}$ is understood over all the ordered lists containing $i-1$ subgroup indices taken from $\left\{ {1,2,...,T} \right\}$. The initial conditions of the density-evolution equations \eqref{de1}-\eqref{de2} are determined by the transition probabilities in \eqref{tran_matrix}. That is, for each $t$ such that $\mathcal{H}_t = \mathcal{M}_0^j$ ($j=0,1,...,s$), $z_t^{(0)}$ is initialized to $\varepsilon_j$. For example, if $q=4$ and the subgroups are numbered as in Example \ref{ex:subgroups_order}, then $z_1^{(0)}=\varepsilon_0$, $z_2^{(0)}=\varepsilon_1$, $z_5^{(0)}=\varepsilon_2$ and $z_3^{(0)}=z_4^{(0)}=0$. The asymptotic probability of decoding failure at iteration $l$, denoted $P_{\rm error}^{\left( l \right)}$, is the probability that a VTC message at iteration $l$ is not $\mathcal{H}_1 = \left\{ 0 \right\}$:
\begin{equation}
P_{\rm {error}}^{\left( l \right)} = \sum\limits_{i = 2}^T {z_i^{\left( l \right)}}=1-z_1^{(l)}.
\label{Pe}
\end{equation}

The QMBC is characterized by multiple partial-erasure probabilities $\left\{ {{\varepsilon _j}} \right\}_{j = 1}^s$ rather than by a single erasure probability (as in the BEC or the QEC). Thus, we define the \textit{QMBC decoding threshold region} by extending the BEC decoding threshold \cite{mct}. First, define the following \textit{QMBC $\mathbb{L}$-region} for given $\left(\lambda(x),\rho(x)\right)$ degree-distribution pair and edge-label distribution $\mathbb{L}$
\begin{equation}
\label{deL_threshold}
\Omega_{\mathbb{L}}\left( {{\lambda},{\rho}} \right) = \left\{ {{\varepsilon _1},{\varepsilon _2},...,{\varepsilon _s} \in {{\left[ {0,1} \right]}^s}:\mathop {\lim }\limits_{l \to \infty } P_{\rm error}^{\left( l \right)}(\mathbb{L}) = 0} \right\}.
\end{equation}
That is, an $\mathbb{L}$-region contains the partial-erasure probabilities leading asymptotically to zero probability of decoding failure under the edge-label distribution $\mathbb{L}$. The QMBC decoding-threshold region is the union of the QMBC $\mathbb{L}$-regions over all possible choices of $\mathbb{L}$:
\begin{equation}
\label{eq:threshold_def}
\Omega \left(\lambda,\rho \right)  = \bigcup\limits_\mathbb{L} \Omega_{\mathbb{L}}\left( {{\lambda},{\rho}} \right).
\end{equation}
If both the boundaries of $\Omega \left(\lambda,\rho \right)$ and $\Omega_{\mathbb{L}}\left( {{\lambda},{\rho}} \right)$ contain the same certain point, we say that $\mathbb{L}$ is \textit{optimal} with respect to this point.

\subsection{Optimal edge-label distributions}
\label{subsection:eld}

As mentioned earlier, GF($q$) LDPC code ensembles are characterized by edge-label probability distributions in addition to degree distributions. In the following theorem, it is demonstrated that a poor selection of label distribution may degrade performance to that of a much worse channel. Denote by ${\varepsilon_{{\rm{BEC}}}}$ the decoding threshold of the BEC (or QEC) for a given degree-distribution polynomial pair $\lambda\left(x\right)$ and $\rho\left(x\right)$.

\begin{theorem}
\label{thm:QMBC_BEC}
If the edge-label distribution $\mathbb{L}$ is chosen such that one of the non-zero GF($q$) elements appears with probability $1$ (i.e., all the labels are the same), then
\begin{equation}
\Omega _{\mathbb{L}}\left(\lambda,\rho \right)  =\left\{ {{\varepsilon _1},{\varepsilon _2},...,{\varepsilon _s} \in {{\left[ {0,1} \right]}^s}:\sum\limits_{j = 1}^s {{\varepsilon _j}}  \le {\varepsilon_{{\rm{BEC}}}}} \right\}\label{eq:MHP_threshold}.
\end{equation}
\end{theorem}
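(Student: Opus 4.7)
The plan is to reduce the QMBC density evolution under a single-label distribution to the standard BEC density evolution with effective erasure probability $\varepsilon = \sum_{j=1}^{s}\varepsilon_j$. The key observation is that when all labels equal a common nonzero $h$, the multiplicative factor $h_{\mathtt{c},\mathtt{v}'}/h_{\mathtt{c},\mathtt{v}}$ in \eqref{CTV_def} collapses to $1$, so the CTV computation is just the plain sumset of incoming VTC messages, with no mixing between different subgroups of $\text{GF}^{+}(q)$.

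Invoking Theorem \ref{thm:all_zero}, I work under the all-zero codeword assumption and track subgroups. The next step is to show by induction on $l$ that every exchanged message lies in the \emph{chain}
\[
\{0\} = \mathcal{M}_0^0 \subset \mathcal{M}_0^1 \subset \cdots \subset \mathcal{M}_0^s.
\]
The base case is immediate: the initial VTC messages are exactly channel-information subgroups from this chain. For the inductive step, I use the elementary fact that if $\mathcal{H}_a \subseteq \mathcal{H}_b$ are nested subgroups, then $\mathcal{H}_a + \mathcal{H}_b = \mathcal{H}_b$ and $\mathcal{H}_a \cap \mathcal{H}_b = \mathcal{H}_a$. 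Consequently, the sumset of any collection of chain elements at a check node equals the one of \emph{maximum} index, and the intersection at a variable node equals the one of \emph{minimum} index. Both operations preserve membership in the chain, so the invariant propagates through all iterations.

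Given this chain structure, I collapse the full list $\{\mathcal{H}_t\}_{t=1}^T$ into a binary state per message: \emph{resolved} ($=\mathcal{M}_0^0$) or \emph{erased} ($\neq \mathcal{M}_0^0$). Let $q_v^{(l)}$ and $q_c^{(l)}$ denote the probabilities of the erased state for VTC and CTV messages, respectively. From the max/min structure above, a CTV is erased iff at least one incoming VTC is erased, and a VTC is erased iff the channel-information set is erased \emph{and} all incoming CTVs are erased. These conditions yield
\[
1 - q_c^{(l)} = \rho\!\left(1 - q_v^{(l-1)}\right), \qquad q_v^{(l)} = \varepsilon \cdot \lambda\!\left(q_c^{(l)}\right),
\]
with initial condition $q_v^{(0)} = \Pr(\mathcal{M}_0^{j} \neq \{0\}) = \sum_{j=1}^{s}\varepsilon_j = \varepsilon$. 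This is precisely the BEC density evolution recursion at erasure probability $\varepsilon$. Combining with \eqref{Pe}, which gives $P_{\rm error}^{(l)} = 1 - z_1^{(l)} = q_v^{(l)}$, convergence $P_{\rm error}^{(l)} \to 0$ holds iff $\varepsilon \leq \varepsilon_{\rm BEC}$, which is exactly \eqref{eq:MHP_threshold}.

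The main (mild) obstacle is the chain-invariance induction: one must verify that the sumset/intersection identities for nested subgroups suffice to close the induction for arbitrary degrees and iteration counts, and that the reduction to a binary resolved/erased state does not lose information relevant to $P_{\rm error}^{(l)}$. Both follow cleanly from the total ordering of the chain and from the definition of decoding failure in terms of whether the VTC equals $\{0\}$, so no further machinery beyond Lemma \ref{lem:fund_prop} and Theorem \ref{thm:all_zero} is required.
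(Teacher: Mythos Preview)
Your proposal is correct and follows essentially the same approach as the paper's proof: both observe that identical labels make the ratio in \eqref{CTV_def} collapse to $1$, then use the chain structure $\mathcal{M}_0^0\subset\cdots\subset\mathcal{M}_0^s$ to reduce sumset and intersection to max and min of indices, which yields the BEC density-evolution recursion with effective erasure probability $\varepsilon=\sum_{j=1}^s\varepsilon_j$. Your version is slightly more explicit in framing the chain invariance as an induction on $l$ and in naming the binary collapsed states, but the argument is the same.
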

That is, when the labels are all the same, a partial erasure is asymptotically equivalent to a full erasure, which is an undesired property. The key observation in proving this theorem is that messages exchanged in this case are restricted to the channel information messages (i.e., to the initial subgroups $\mathcal{M}_0^j$). Thus, the only way to get cardinality-$1$ intersection at a variable node is when a neighbouring check node has all its other neighbours with cardinality $1$, same as when decoding over the BEC. The details are provided in Appendix \ref{proof:QMBC_BEC}. As an immediate consequence of Theorem \ref{thm:QMBC_BEC}, simply taking binary ensembles (where the edge labels are all '1') with good performance (e.g., BEC capacity-achieving) necessarily gives poor performance over the QMBC.

In the following, we derive explicitly \textit{optimal} $\mathbb{L}$ distributions for key points of interest in the QMBC decoding threshold region. For the derivation, we assume that the only non-zero partial-erasure probability is $\varepsilon_{j_{\rm max}}$. This choice does not mean that we are only interested in correcting partial erasures of type $j_{\rm max}$, but rather that we want to analyze the case when these are the dominant type of erasures. We assume a polynomial basis of GF($q$), where $\mathcal{M}_0^j$ (for $j=0,1,...,s$) contains all the polynomials of degree at most $j-1$ with coefficients in GF($2$). These polynomials are evaluated at a primitive element of GF($q$), denoted $\alpha$. In this case, a basis to GF($q$) over GF($2$) is $\{1,\alpha,\alpha^2,...,\alpha^{s-1}\}$.



\begin{theorem}
\label{thm:L_optimal}
Suppose that $j_{\max}$ divides $s$. Then choosing $\mathbb{L}$ as the uniform distribution on $\left\{ {{\alpha ^{t \cdot {j_{\max }}}}} \right\}_{t = 0}^{s/{j_{\max }} - 1}$ is optimal with respect to achieving capacity.
\end{theorem}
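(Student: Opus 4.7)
The plan is to reduce iterative decoding on the QMBC under the chosen $\mathbb{L}$ to BEC peeling on an auxiliary ``block-refined'' Tanner graph $G'$ whose BEC capacity coincides exactly with the QMBC capacity $C=1-j_{\max}\varepsilon_{j_{\max}}/s$ from Theorem~\ref{th:capacity}. Let $r:=s/j_{\max}$, and by Theorem~\ref{thm:all_zero} assume the all-zero codeword. In the polynomial basis, $\mathcal{M}_0^{j_{\max}}=\langle 1,\alpha,\ldots,\alpha^{j_{\max}-1}\rangle =:B_0$, and each label $\alpha^{tj_{\max}}$ sends $B_0$ to the block $B_t:=\langle \alpha^{tj_{\max}},\ldots,\alpha^{(t+1)j_{\max}-1}\rangle$; the $r$ blocks form a direct-sum decomposition $\text{GF}(q)=B_0\oplus B_1 \oplus\cdots\oplus B_{r-1}$.

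The core step is a check-node block decomposition. At a check $\mathtt{c}$ with edge labels $\alpha^{t_i j_{\max}}$ and erased neighbors $\mathtt{v}_i\in B_0$, each term $h_i\mathtt{v}_i$ lies in $B_{t_i}$, so by the direct-sum structure the parity equation $\sum_i h_i \mathtt{v}_i=0$ decouples into $r$ independent block-equations (one per $t$):
\begin{equation*}
\sum_{i:\,t_i=t}\mathtt{v}_i=0 \qquad \text{in } B_0.
\end{equation*}
Since $B_0\cong\text{GF}(2)^{j_{\max}}$ and the $j_{\max}$ bit coordinates are uncoupled, the set-valued decoder splits into $j_{\max}$ parallel BEC peeling problems on a single bipartite graph $G'$: each original $\mathtt{v}$ is a binary variable (erased with probability $\varepsilon_{j_{\max}}$); each $\mathtt{c}$ spawns $r$ block-checks $(\mathtt{c},t)$; and each edge with label $\alpha^{tj_{\max}}$ is routed to $(\mathtt{c},t)$. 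An induction on the iteration index, leveraging the block decomposition, shows that VTC messages restricted to $B_0$ take values only in $\{\{0\},B_0\}$ throughout, so the QMBC set decoder and the BEC peeler on $G'$ succeed on exactly the same erasure patterns and thus share a threshold.

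Finally, the rates match: $G'$ has $n$ variables and $mr$ checks, so design rate $R'=1-r(1-R)$ and BEC capacity $1-R'=r(1-R)=s(1-R)/j_{\max}$, precisely the QMBC capacity-boundary value $\varepsilon_{j_{\max}}=s(1-R)/j_{\max}$ obtained from Theorem~\ref{th:capacity}. Whenever $\Omega(\lambda,\rho)$ attains the QMBC capacity, the induced distribution on $G'$ is BEC-capacity-achieving, so the BEC threshold of $G'$ reaches $s(1-R)/j_{\max}$; by the equivalence of step two, the QMBC threshold of $\mathbb{L}$ also reaches $s(1-R)/j_{\max}$, establishing optimality. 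The chief obstacle is the inductive closure claim of step two: naively, a CTV message viewed as a subset of $\text{GF}(q)$ is an $|S|j_{\max}$-dimensional subspace (with $S$ the set of distinct erased-neighbor labels), and for label ratios $\alpha^{(t-t^\star)j_{\max}}$ with $t<t^\star$ the intersection with $B_0$ is not transparent from the formula; the block-equation decomposition rescues this by showing that the $B_0$-projection of the CTV is governed entirely by matching-label neighbors in the $t^\star$-sub-equation, and is therefore forced to be either $\{0\}$ or $B_0$.
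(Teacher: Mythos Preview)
Your core argument is correct and rests on the same insight as the paper: the direct-sum decomposition $\text{GF}(q)=\bigoplus_{t=0}^{r-1} B_t$ with $B_t=\alpha^{tj_{\max}}B_0$ forces the $B_0$-projection of any CTV message to be nontrivial iff at least one still-erased neighbor carries the \emph{same} label as the outgoing edge. Your packaging of this as a reduction to BEC peeling on an auxiliary graph $G'$ is a valid reformulation of what the paper does more directly: from the matching-label criterion the paper writes the single-letter recursion $y_{l+1}=\varepsilon_{j_{\max}}\,\lambda\!\left(1-\rho(1-y_l/r)\right)$ and, via the substitution $z_l=y_l/r$, reads off that the QMBC threshold equals $r$ times the BEC threshold of $(\lambda,\rho)$. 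The paper's route is shorter and gives an explicit threshold formula for \emph{all} $(\lambda,\rho)$, not only capacity-achieving ones; your $G'$ route yields the same recursion (its check edge-polynomial is $\rho'(x)=\rho(1-(1-x)/r)$) but with more overhead, and $G'$ generically has degree-$0$ and degree-$1$ checks, so the naive ``rate $=1-mr/n$'' identification needs care.

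Your final step, however, has a genuine logical gap. You write ``whenever $\Omega(\lambda,\rho)$ attains the QMBC capacity, the induced distribution on $G'$ is BEC-capacity-achieving,'' but this implication is neither proved nor true in general: $G'$ depends only on your specific $\mathbb{L}$, not on whatever other label distribution might realize the boundary of $\Omega$. The correct (and simpler) direction is the one the paper takes: since Shannon capacity upper-bounds the threshold for \emph{every} $\mathbb{L}'$, it suffices to exhibit that $\mathbb{L}$ attains it. From the recursion, if $(\lambda,\rho)$ is BEC-capacity-achieving with threshold $1-R$, the QMBC threshold under $\mathbb{L}$ is $r(1-R)=s(1-R)/j_{\max}$, exactly the capacity boundary from Theorem~\ref{th:capacity}. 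That completes the optimality argument; your $G'$ construction can reach the same endpoint once you reverse the implication and condition on $(\lambda,\rho)$ being BEC-capacity-achieving rather than on $\Omega(\lambda,\rho)$ touching capacity.
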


\begin{proof}

Suppose that the edge labels are taken from $\left\{ {{\alpha ^{t \cdot {j_{\max }}}}} \right\}_{t = 0}^{s/{j_{\max }} - 1}$. Denote the probability that a variable node is partially erased to $\mathcal{M}_0^{j_{\rm max}}$ at iteration $l$ by $y_l$, where $y_0 = \varepsilon_{j_{\rm max}}$. We claim that a CTV message to a variable node has a non-trivial intersection (i.e., containing a non-zero element) with $\mathcal{M}_0^{j_{\rm max}}$  if and only if at least one of the incoming VTC messages is a partial erasure \textit{and} the label on this incoming VTC message edge is the same as the label on the outgoing CTV message edge.

To see that, note that if the labels are the same, then $\mathcal{M}_0^{j_{\rm max}}$ is an argument in the CTV sumset operation (see~\eqref{CTV_def}), whose result must contain  $\mathcal{M}_0^{j_{\rm max}}$. Conversely, if edges from all partially-erased variable nodes have labels different from the label $h$ to the target variable node, we show that the CTV message intersects with $\mathcal{M}_0^{{j_{\max }}}$ only on $\{0\}$. Take an edge label $h_i$ of one partially-erased variable node. The labels $h,h_i\in\left\{ {{\alpha ^{t \cdot {j_{\max }}}}} \right\}_{t = 0}^{s/{j_{\max }} - 1}$ as monomials in $\alpha$ have degrees separated by at least $j_{\rm max}$. That means ${{h} \cdot \mathcal{M}_0^{{j_{\max }}}}$ and ${{h_i} \cdot \mathcal{M}_0^{{j_{\max }}}}$ intersect only on the $0$ polynomial. This is true for all $i$, and thus any sum $\sum\limits_{i} {{h_i} \cdot {x_i}}$, where $x_i$'s are elements from $\mathcal{M}_0^{{j_{\max }}}$ not all zero, gives a polynomial not in $h\cdot \mathcal{M}_0^{{j_{\max }}}$. Equivalently, the CTV message intersects with $\mathcal{M}_0^{{j_{\max }}}$ only on the symbol $0$.

Now by choosing $\mathbb{L}$ as the uniform distribution on $\left\{ {{\alpha ^{t \cdot {j_{\max }}}}} \right\}_{t = 0}^{s/{j_{\max }} - 1}$, each label has probability $j_{\rm max}/s$, and by the argument above a CTV message contains $\mathcal{M}_0^{j_{\rm max}}$ with probability
\begin{equation}
\sum\limits_{i = 2}^{{d_c}} {{\rho _i}} \left( {1 - {{\left( {1 - {y_l}\frac{{{j_{\max }}}}{s}} \right)}^{i - 1}}} \right) = 1 - \rho \left( {1 - \frac{{{y_l}}}{{s/{j_{\max }}}}} \right).
\end{equation}
The product ${y_l}\frac{j_{\max}}{s}$ is the probability that both ``bad'' events happen: the variable node connected by the incoming edge is partially erased (with probability $y_l$), and its edge has the same label as the one on the outgoing edge (with probability $\frac{j_{\max}}{s}$). The two events are statistically independent hence the product. A variable node remains partially-erased at iteration $l+1$ if and only if it was partially-erased initially (with probability $\varepsilon_{j_{\max}}$), and all its incoming CTV messages contain $\mathcal{M}_0^{j_{\rm max}}$. This leads to the \textit{single-letter} recurrence relation
\begin{align}
\label{de_sl}
&{y_{l + 1}} = {\varepsilon _{{j_{\max }}}\cdot}\lambda \left( {1 - \rho \left( {1 - \frac{{{y_l}}}{{s/{j_{\max }}}}} \right)} \right).
\end{align}
The expression in \eqref{de_sl} is the same recurrence equation as the BEC/QEC density evolution, only with $y_l$ divided by $s/j_{\max}$ in the argument of $\rho(x)$. That is, we obtained a QMBC decoding threshold that is $s/j_{\max}$ times the BEC/QEC threshold for the same ensemble (when $\varepsilon_{j_{\max}}$ is the only non-zero partial-erasure probability). This is optimal because a BEC/QEC capacity-achieving ensemble will give a capacity-achieving QMBC ensemble according to \eqref{qmbc_capacity}.
\end{proof}

We remark that as all finite fields of the same order are isomorphic, the basis elements in $\left\{ {{\alpha ^{t \cdot {j_{\max }}}}} \right\}_{t = 0}^{s/{j_{\max }} - 1}$ can always be mapped to basis elements in any other representation of GF($q$). As a consequence of Theorem \ref{thm:L_optimal}, we can calculate \textit{explicitly} the threshold of the optimal label distribution for any code ensemble, for $j_{\rm max}$ and $q$ values given in the theorem. We now demonstrate how the optimal edge-label distribution derived in Theorem \ref{thm:L_optimal} improves the decoding performance. Assume that $q=4$ and partial erasures of type $j_{\rm max}=1$. In Figure \ref{fig:capacity}, the QMBC $\mathbb{L}$-region defined in \eqref{deL_threshold} is plotted for the optimal distribution (solid line) and is compared to the uniform distribution on the non-zero field elements (dotted line), for the $(3,6)$ LDPC code ensemble. The QMBC Shannon capacity region is plotted (dashed line) for reference.

For the optimal distribution, the lower-right corner is $\varepsilon_1=0.858$, double the QEC threshold $0.429$, according to \eqref{de_sl}. At the upper-left corner ($\varepsilon_1=0$), both label distributions attain the same $\varepsilon_2$ threshold -- identical to the standard QEC threshold for full erasures. While the optimal distribution is superior at the lower-right corner, Figure~\ref{fig:capacity} reveals that there are values of $\varepsilon_2>0$ at which the uniform distribution has a higher $\varepsilon_1$ threshold. This hints that in general there is no single distribution $\mathbb{L}$ universally optimal for all combinations of $\{\varepsilon_j\}_{j=1}^{s}$.

\begin{figure}[!t]
\centering
\includegraphics[scale=0.6]{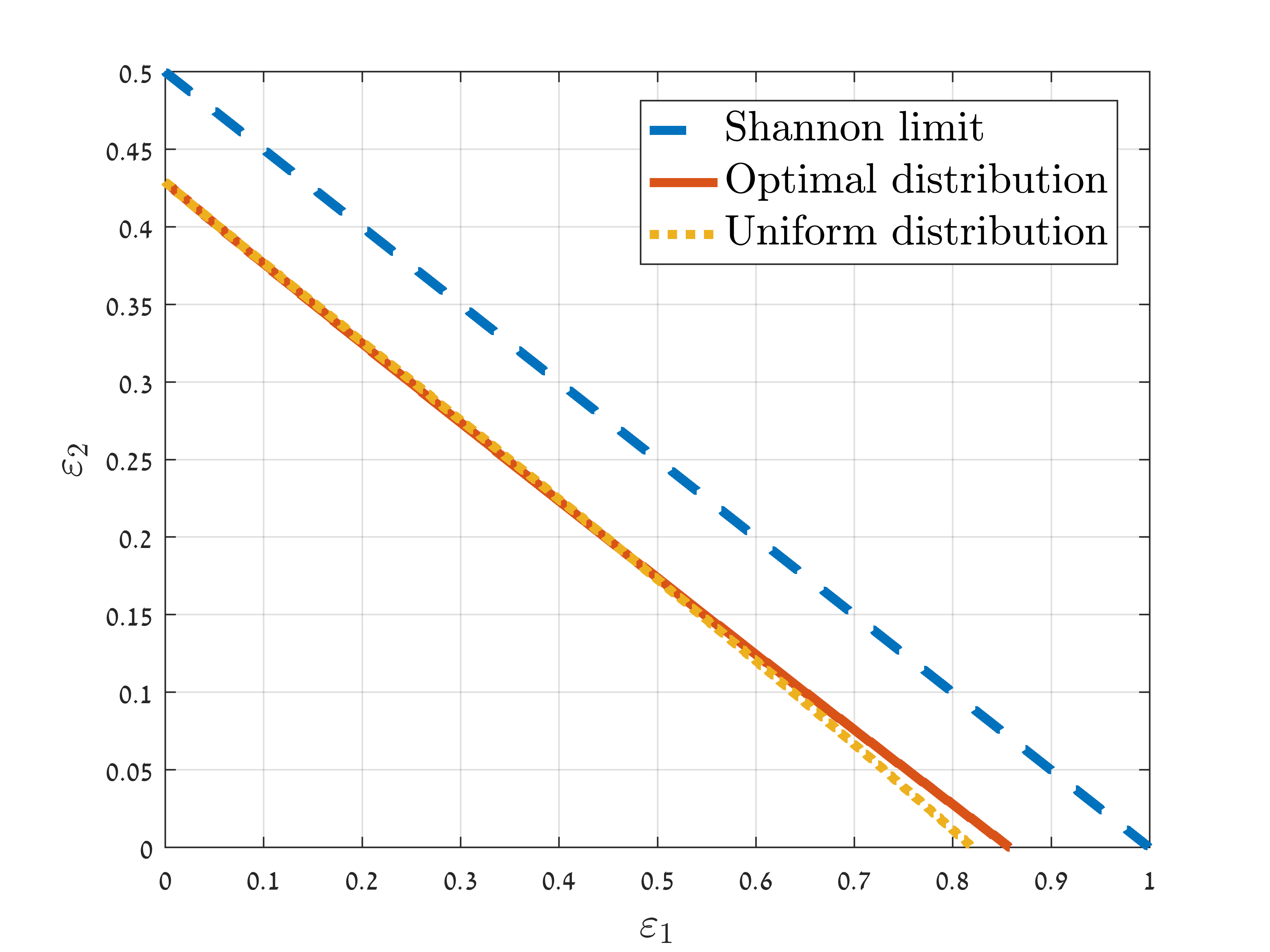}
\caption{The GF($4$) QMBC $\mathbb{L}$-regions of two edge-label distributions for the $(3,6)$ LDPC code ensemble. The QMBC Shannon capacity region is plotted for reference.}
\label{fig:capacity}
\end{figure}

It is an interesting fact that achieving optimality requires a label distribution that is {\em not} the uniform distribution on the non-zero field elements. We note that we can alternatively achieve optimality by using a binary capacity-achieving ensemble on $j_{\max}$ least significant bits of the symbols. However, the advantage of $q$-ary ensembles with an optimal edge-label distribution is that in addition to the optimality for $\varepsilon_{j_{\max}}$, the same code has good correction performance for infinitely many combinations of partial-erasure probabilities. 

\section{Edge-labeling Algorithm for Improved Finite-Length Performance}
\label{sec:ss}

In this section, we show how improved finite-length decoding performance is achieved by a wise labeling of the LDPC graph edges.

\subsection{Stopping sets and local resolvability}

A stopping set $\mathcal{S}$ is defined as a subset of variable nodes, such that all neighbours (check nodes) of $\mathcal{S}$ are connected to $\mathcal{S}$ at least twice. A key result in BEC finite-length iterative-decoding performance analysis is that the variable nodes in the maximal (fully) erased stopping set remain erased when the decoder stops \cite{Di, mct}. However, QMBC partially-erased variable nodes that belong to a stopping set might be eventually \textit{resolved}. The reason is that with partial erasures the iterative decoder can make progress even if two or more neighbours of a check node are partially erased. This is demonstrated in the following example.

\begin{example} \upshape \label{example:stopping_set}
Consider the Tanner graph in Figure \ref{fig:ss}, where the variable nodes $\mathtt{v}_1$ and $\mathtt{v}_2$ form a partially-erased stopping set ($q=4$ is assumed). The initial CTV messages from the check node at the bottom are $\{0,h_2/h_1\}$ to $\mathtt{v}_1$ and $\{0,h_1/h_2\}$ to $\mathtt{v}_2$. If $h_1=h_2$, the variable nodes are not resolved, as the intersection operation at variable nodes results in $\{0,1\}$. Otherwise, they are resolved as $\{0\}$.
\end{example}

\begin{figure}[!t]
\centering
\includegraphics[scale=0.65]{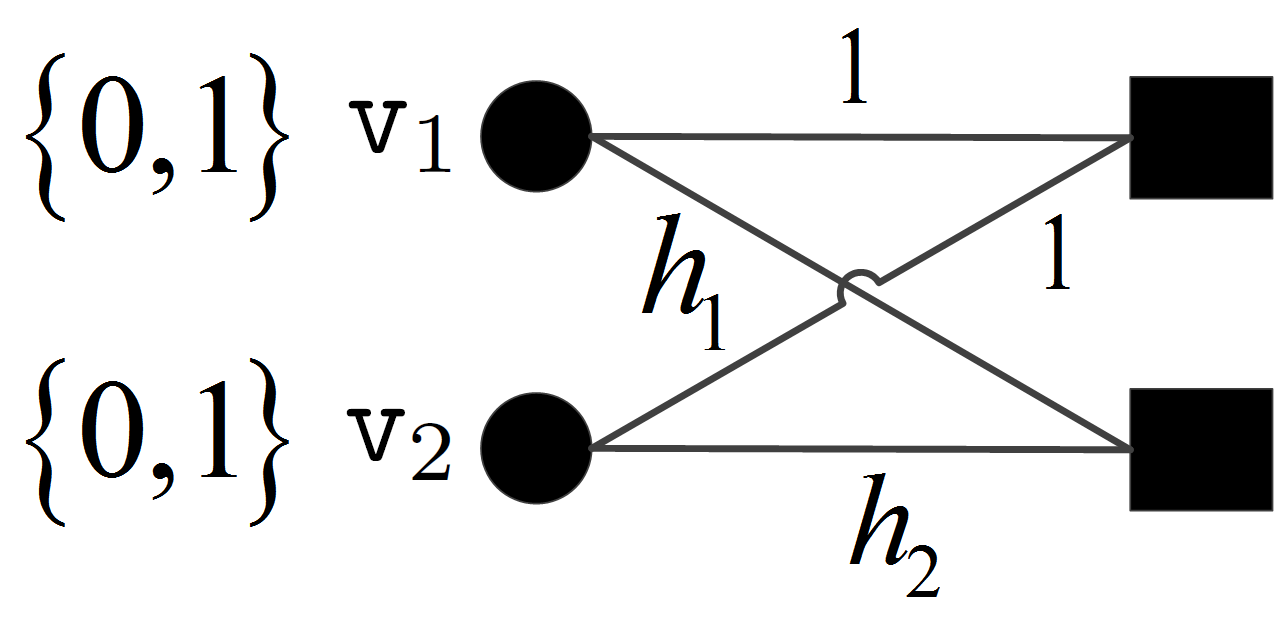}
\caption{$\mathtt{v}_1$ and $\mathtt{v}_2$ form a partially-erased stopping set (the channel information sets appear to the left). The resolvability of $\mathtt{v}_1$ and $\mathtt{v}_2$ depends on the values of $h_1$ and $h_2$.}
\label{fig:ss}
\end{figure}



As shown in Example \ref{example:stopping_set}, partially-erased variable nodes in a stopping set might be eventually resolved, depending on the edge-label configuration. However, non-resolved partial erasures must belong to a stopping set. Let us denote by $\mathcal{E}$ the set of partially-erased variable nodes.

\begin{lemma}
\label{lem:QMBC_stopping_sets}
The variable nodes that remain unresolved when the iterative QMBC decoder terminates belong to the maximum stopping set contained in $\mathcal{E}$.
\end{lemma}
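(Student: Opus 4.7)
The plan is to mirror the classical BEC stopping-set argument, but working with coset-valued messages instead of binary erasure flags, and to use the structural results of Section \ref{sec:structural} to guarantee that singleton messages really do propagate.

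First, I would observe that any variable node outside $\mathcal{E}$ has a singleton channel-information set $\mathcal{M}_{x_\mathtt{v}}^{0}=\{x_\mathtt{v}\}$; by the VTC rule \eqref{VTC}, its outgoing messages at every iteration $l\ge 0$ are intersections with this singleton and therefore equal to $\{x_\mathtt{v}\}$. Hence such nodes are resolved from the outset, and the set $\mathcal{U}$ of variable nodes still unresolved when the decoder terminates is contained in $\mathcal{E}$.

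Next I would show that $\mathcal{U}$ is itself a stopping set. Suppose for contradiction that some check node $\mathtt{c}\in\mathcal{N}(\mathcal{U})$ has exactly one neighbour in $\mathcal{U}$, call it $\mathtt{v}$. Every other neighbour $\mathtt{v}'\in\mathcal{N}(\mathtt{c})\setminus\{\mathtt{v}\}$ is resolved at termination, so by Lemma \ref{lem:cosets} its stable VTC message is a coset containing the correct codeword symbol $x_{\mathtt{v}'}$, and since it has cardinality $1$ it equals $\{x_{\mathtt{v}'}\}$. Plugging these singletons into \eqref{CTV_def}, the sumset of singletons is the singleton $\big\{\sum_{\mathtt{v}'\neq \mathtt{v}} (h_{\mathtt{c},\mathtt{v}'}/h_{\mathtt{c},\mathtt{v}})\,x_{\mathtt{v}'}\big\}$; by the parity-check equation at $\mathtt{c}$ this singleton equals $\{x_\mathtt{v}\}$. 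Intersecting this singleton into $\mathtt{v}$'s VTC update \eqref{VTC} forces the outgoing message of $\mathtt{v}$ to also be $\{x_\mathtt{v}\}$, contradicting $\mathtt{v}\in\mathcal{U}$. Hence every check neighbour of $\mathcal{U}$ touches $\mathcal{U}$ at least twice, i.e.\ $\mathcal{U}$ is a stopping set.

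Finally, I would invoke the standard fact that the union of two stopping sets is a stopping set, so there is a unique maximum stopping set contained in $\mathcal{E}$; since $\mathcal{U}\subseteq\mathcal{E}$ is a stopping set, it is contained in this maximum stopping set, finishing the proof. The only subtle point—where the proof differs from the pure BEC argument—is the step that a singleton CTV message at a check actually propagates through the set-valued VTC update to resolve the target variable; this is exactly where the coset structure of Lemma \ref{lem:fund_prop} and Lemma \ref{lem:cosets} is used, together with the fact that the correct codeword symbol lies in every exchanged message, so an intersection with a singleton containing that symbol cannot be empty.
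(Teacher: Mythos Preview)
Your overall strategy---show that the terminal unresolved set $\mathcal{U}$ is itself a stopping set contained in $\mathcal{E}$, hence lies in the maximum one---is sound and is somewhat more direct than the paper's argument, which (informally) appeals to BEC peeling to conclude that every node outside the maximum stopping set eventually acquires a check whose other neighbours are already known.

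However, your key inference ``$\mathtt{v}'$ is resolved, hence its stable VTC message to $\mathtt{c}$ has cardinality $1$'' is not justified and is in general false. Being \emph{resolved} means the final estimate---the intersection of the channel set with \emph{all} incoming CTV messages---is a singleton; the extrinsic VTC message $\mathrm{VTC}_{\mathtt{v}'\to\mathtt{c}}$ omits the CTV from $\mathtt{c}$ and can remain a non-trivial subgroup. Concretely, over $\mathrm{GF}(4)$ a variable with channel set $\{0,1\}$, CTV $\{0,1,\alpha,\alpha{+}1\}$ from one check and CTV $\{0,\alpha\}$ from $\mathtt{c}$ is resolved (estimate $\{0\}$), yet its VTC to $\mathtt{c}$ equals $\{0,1\}$. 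So the sumset in~\eqref{CTV_def} need not be a singleton, and your contradiction does not follow from the stated reason.

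The step can be repaired without abandoning your approach. Write $A_i=h_{\mathtt{c},\mathtt{v}_i}\cdot\mathrm{VTC}_{\mathtt{v}_i\to\mathtt{c}}$ for the neighbours $\mathtt{v}_0=\mathtt{v},\mathtt{v}_1,\ldots,\mathtt{v}_k$ of $\mathtt{c}$; these are subgroups by Lemma~\ref{lem:cosets} (and Theorem~\ref{thm:all_zero}). ``$\mathtt{v}_j$ resolved'' for $j\ge 1$ is exactly $A_j\cap\sum_{i\ne j}A_i=\{0\}$. If nonetheless some nonzero $a_0=\sum_{i\ge1}a_i$ lies in $A_0\cap\sum_{i\ge1}A_i$, pick $j\ge1$ with $a_j\ne0$; then $a_j=a_0-\sum_{i\ge1,\,i\ne j}a_i\in A_j\cap\sum_{i\ne j}A_i=\{0\}$, a contradiction. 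Hence $A_0\cap\sum_{i\ge1}A_i=\{0\}$, i.e., $\mathtt{v}$ is resolved, and your stopping-set contradiction goes through. Alternatively, one can prove the lemma by a one-line induction showing that whenever the BEC message on an edge is known at iteration $l$, the QMBC VTC on that edge is a singleton at iteration $l$; this is closer to the paper's intended argument.
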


\begin{proof}
Consider a variable node $\mathtt{v}$. If $\mathtt{v} \notin \mathcal{E}$ then $\mathtt{v}$ is trivially resolved. Suppose $\mathtt{v} \in \mathcal{E}$ but $\mathtt{v}$ is not in a stopping set contained in $\mathcal{E}$. In this case, it has at least one neighbouring check node whose connected variable nodes except $\mathtt{v}$ are not partially erased and $\mathtt{v}$ is resolved. Finally, if $\mathtt{v}$ belongs to a stopping set contained in $\mathcal{E}$, any neighbouring check node of $\mathtt{v}$ is connected to at least one additional partially-erased variable node. Only in this case $\mathtt{v}$ may not be resolved.
\end{proof}

Consider a check node connected to $\kappa$ partially-erased variable nodes denoted $\mathtt{v}_1, \mathtt{v}_2,...,\mathtt{v}_{\kappa}$, via edge labels $h_1,h_2,...,h_{\kappa}$, respectively. We show that there are values of the edge labels such that a decoding progress is guaranteed, independently of information from any other variable node. Recall that in the full-erasure case (i.e., BEC or QEC), the local parity-check equation at a check node resolves at most one (full) erasure. However, it is possible to resolve \textit{multiple} partial erasures in the QMBC case.

\begin{definition}
\label{def:good_labels}
The edge labels $h_1,h_2,...,h_{\kappa}$ are said to be $\kappa$-resolvable if $\mathtt{v}_1, \mathtt{v}_2,...,\mathtt{v}_{\kappa}$ are resolvable (i.e., decoded successfully), independently of other variable nodes.
\end{definition}

The motivation for Definition \ref{def:good_labels} is that by placing resolvable edge labels in stopping sets, improved decoding performance is expected. Let us denote by $j_{{\rm{max}}}$ the dominant partial-erasure type, which occurs with the partial-erasure probability $\varepsilon_{j_{\rm max}}$. Consider the basis $\{1,\alpha,\alpha^2,...,\alpha^{s-1}\}$ of GF($q$) over GF($2$) (see Section \ref{sec:de_equations}).


\begin{theorem}
\label{theorem:kappa}
Suppose that $j_{\max}$ divides $s$. The edge labels $\left\{ \alpha ^{t \cdot {j_{\max }}} \right\}_{t = 0}^{s/{j_{\max }} - 1}$ are $\left(s/j_{\rm max}\right)$-resolvable for any set of $s / j_{\max}$ variable nodes of partial-erasure type at most $j_{\max}$. In addition, there is no larger set of resolvable edge labels in this case.
\end{theorem}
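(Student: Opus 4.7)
The plan is to reduce the $\kappa$-resolvability question to injectivity of a $\mathrm{GF}(2)$-linear map and then handle the two directions separately. By Theorem~\ref{thm:all_zero}, I may assume the all-zero codeword is transmitted, in which case the parity-check equation at the check node, after removing the zero contributions of the non-erased neighbors, reduces to $\sum_{i=1}^{\kappa} h_i v_i = 0$ with $v_i\in\mathcal{M}_0^{j_i}$ and $j_i\le j_{\max}$. Using Lemma~\ref{lem:fund_prop}, $v_1,\ldots,v_\kappa$ are resolvable from this single check if and only if the $\mathrm{GF}(2)$-linear map $\phi:\prod_{i=1}^{\kappa}\mathcal{M}_0^{j_i}\to\mathrm{GF}(q)$, $(v_1,\ldots,v_\kappa)\mapsto\sum_i h_i v_i$, has trivial kernel, equivalently $h_i\mathcal{M}_0^{j_i}\cap\sum_{k\ne i}h_k\mathcal{M}_0^{j_k}=\{0\}$ for every $i$.

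For the achievability direction I plug in $h_i=\alpha^{(i-1)j_{\max}}$. Each $v_i\in\mathcal{M}_0^{j_i}$ is a $\mathrm{GF}(2)$-polynomial in $\alpha$ of degree less than $j_i\le j_{\max}$, so $h_iv_i$ is supported on the basis monomials $\alpha^k$ with $k\in[(i-1)j_{\max},\,(i-1)j_{\max}+j_i-1]\subseteq[(i-1)j_{\max},\,ij_{\max}-1]$. As $i$ ranges over $1,\ldots,s/j_{\max}$ these intervals are pairwise disjoint subsets of $\{0,1,\ldots,s-1\}$, so the terms $h_iv_i$ have disjoint support in the basis $\{1,\alpha,\ldots,\alpha^{s-1}\}$. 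Consequently $\sum_i h_iv_i=0$ forces each $h_iv_i=0$ and hence each $v_i=0$, proving $(s/j_{\max})$-resolvability.

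For the upper bound I take any labels $h_1,\ldots,h_\kappa$ with $\kappa>s/j_{\max}$ and restrict to the worst case $j_i=j_{\max}$ for all $i$. The $\mathrm{GF}(2)$-dimension of $\prod_{i=1}^{\kappa}\mathcal{M}_0^{j_{\max}}$ is $\kappa j_{\max}>s$, which strictly exceeds the dimension of the codomain $\mathrm{GF}(q)$, so $\phi$ must have a nontrivial kernel element $(u_1,\ldots,u_\kappa)\ne 0$. Both $(0,\ldots,0)$ and $(u_1,\ldots,u_\kappa)$ then satisfy the parity-check constraint within the channel sets, so at least one $v_i$ is not uniquely determined. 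The main technical insight is the observation in the achievability direction: choosing the labels as powers of $\alpha$ spaced by exactly $j_{\max}$ lands the images $h_i\mathcal{M}_0^{j_i}$ into pairwise disjoint coordinate blocks of $\mathrm{GF}(q)$ viewed as a $\mathrm{GF}(2)$-vector space, after which both directions follow from elementary dimension counting.
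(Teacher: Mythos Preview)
Your proof is correct. The achievability direction is essentially identical to the paper's: both observe that with $h_i=\alpha^{(i-1)j_{\max}}$ the products $h_i v_i$ land in pairwise disjoint degree windows of the polynomial basis $\{1,\alpha,\ldots,\alpha^{s-1}\}$, so the check equation forces each term to vanish.

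For the optimality direction you take a different route. The paper argues by pigeonhole on polynomial degrees: among $s/j_{\max}+1$ labels, two must have leading degrees in the same block of width $j_{\max}$, and it then asserts that a nonzero solution $h_1x_1+h_2x_2=0$ with $x_1,x_2\in\mathcal{M}_0^{j_{\max}}$ exists. Your argument is more direct and more robust: you simply note that the $\mathrm{GF}(2)$-linear map $\phi:\prod_i\mathcal{M}_0^{j_{\max}}\to\mathrm{GF}(q)$ has domain of dimension $\kappa j_{\max}>s$ and codomain of dimension $s$, so a nontrivial kernel element must exist. This dimension count works for arbitrary label choices without any reference to degrees, and it makes the impossibility immediate; the paper's version relies on an additional (unstated) step to extract the nonzero solution from the degree collision. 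Both reach the same conclusion, but your version is the cleaner of the two.
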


\begin{proof}
Following the proof of Theorem \ref{thm:L_optimal}, if $h_i$ are distinct labels taken from $\left\{ {{\alpha ^{t \cdot {j_{\max }}}}} \right\}_{t = 0}^{s/{j_{\max }} - 1}$, the non-zero polynomials $h_i\cdot x_i$ of the variable nodes $\mathtt{v}_i$ have disjoint degrees, and thus can only satisfy the check equation if they are all zero. Hence the variable nodes can be resolved locally at the check node.  To see that no larger set of resolvable edge labels exists, note that any set of $ {s/{j_{\max }}} +1$ must contain at least two polynomials with degrees separated by less than $j_{\rm max}$. In this case, we can find a non-zero assignment to $x_1,x_2$ such that $h_1x_1+h_2x_2=0$, and the variable nodes cannot be resolved by this check.
\end{proof}


\subsection{Universal edge labeling}

In addition to partial erasures of type $j_{\rm max}$, a wider spectrum of partial erasures can be resolved when considering check nodes of degree $2$. The resolvability of variable nodes connected to such check nodes is important, as every stopping set (in graphs without singly-connected variable nodes) is comprised of cycles that contain degree-$2$ check nodes \cite{Tian}. As an example, the stopping set in Figure \ref{fig:ss} is comprised of one cycle of length $4$, with two check nodes of degree $2$. The next theorem shows that for degree-$2$ check nodes we can always find edge labels that resolve QMBC partial erasures {\em universally}, that is, the same pair of labels will resolve any partial-erasure combination $(j_1,j_2)$ satisfying $j_1+j_2\leq s$.

\begin{theorem}
\label{th:2res}
For any $q=2^s$ there exists a pair of GF$(q)$ field elements $h_1,h_2$ such that if $x\in\mathcal{M}_{x_1}^{j_1}$, $x'\in\mathcal{M}_{x_2}^{j_2}$, then $h_1 \cdot x+h_2 \cdot x'=0$ implies $x=x_1$, $x'=x_2$, for any $j_1 +j_2 \le s$.
\end{theorem}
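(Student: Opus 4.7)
The plan is to reduce the statement to an existence claim about a single field element and then establish existence via polynomial continued fractions. Set $h_1=1$ and $g=h_2$. Since $(x_1,x_2)$ is the actual codeword pair, it must already satisfy $h_1 x_1 + h_2 x_2 = 0$, so writing $x=x_1+\delta_1$ and $x'=x_2+\delta_2$ with $\delta_i\in\mathcal{M}_0^{j_i}$ (by translation invariance of the cosets), the equation $h_1 x+h_2 x'=0$ becomes $\delta_1+g\delta_2=0$. The theorem is therefore equivalent to the existence of $g\in\text{GF}(q)^*$ with $\mathcal{M}_0^{j_1}\cap g\,\mathcal{M}_0^{j_2}=\{0\}$ for every $j_1+j_2\le s$. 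Because the $\mathcal{M}_0^j$ form a nested chain, it suffices to verify the $s-1$ boundary conditions $\mathcal{M}_0^{j}\cap g\,\mathcal{M}_0^{s-j}=\{0\}$ for $j=1,\ldots,s-1$; geometrically, the two flags $(\mathcal{M}_0^j)$ and $(g\,\mathcal{M}_0^j)$ must be in general (``opposite'') position.

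Using the polynomial basis of Section~\ref{sec:de_equations}, identify $\text{GF}(q)=\text{GF}(2)[x]/(p(x))$ with $p$ the minimal polynomial of $\alpha$, so $\mathcal{M}_0^j$ corresponds to polynomials of degree below $j$. Each boundary condition then reads: no non-trivial pair $(u,v)\in\text{GF}(2)[x]^2$ with $\deg u<j$, $\deg v<s-j$ satisfies $u\equiv g\,v\pmod p$. I would produce such $g$ by choosing it of degree $s-1$ so that every partial quotient in the Euclidean algorithm applied to $(p,g)$ has degree exactly $1$. The resulting convergents $(r_i,v_i)$ then obey $\deg r_i+\deg v_i=s-1$, and by the standard best-rational-approximation theorem for polynomial continued fractions every non-trivial solution of $u\equiv g v\pmod p$ satisfies $\deg u+\deg v\ge s-1$, equivalently $j_1+j_2\ge s+1$, contradicting $j_1+j_2\le s$.

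Existence of such a $g$ is established by inverting the Euclidean recursion. For each of the $2^s$ sequences $(q_0,\ldots,q_{s-1})\in\{x,x+1\}^s$ of monic linear polynomials, the convergent recursion $p_n=q_n p_{n-1}+p_{n-2}$, $g_n=q_n g_{n-1}+g_{n-2}$ (with $p_{-1}=1,p_{-2}=0,g_{-1}=0,g_{-2}=1$) yields a monic $p_{s-1}$ of degree $s$ together with a degree-$(s-1)$ polynomial $g_{s-1}$. I would then show that the given irreducible $p$ is hit by at least one sequence (in fact, by exactly two, related by sequence reversal, which corresponds to the multiplicative inversion $g\leftrightarrow g^{-1}$ already observed to preserve the ``good'' property). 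Taking $g$ to be the corresponding $g_{s-1}$ completes the argument.

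The main obstacle is the surjectivity step: proving that every irreducible monic $p$ of degree $s$ lies in the image of the sequence-to-$p_{s-1}$ map. I expect this to follow from the observation that the only polynomials missed by the map are those whose factorization forces some partial quotient to have degree $\ge 2$—a situation incompatible with irreducibility, since irreducibility of $p$ guarantees $\gcd(p,r_i)=1$ at every intermediate step. As a backup, one can bypass continued fractions altogether and prove existence by directly bounding the bad set $\bigcup_{j=1}^{s-1} B_j$ with $B_j=\{g:\mathcal{M}_0^j\cap g\,\mathcal{M}_0^{s-j}\neq\{0\}\}$, exploiting the inversion duality $B_{s-j}=B_j^{-1}$ and the fact that whenever $B_j$ contains a point $g$, the fibre of $(u,v)\mapsto u/v$ over $g$ has size $|\mathcal{M}_0^j\cap g\,\mathcal{M}_0^{s-j}|-1\ge 1$, causing substantial slack in the crude bound $|B_j|\le(2^j-1)(2^{s-j}-1)$.
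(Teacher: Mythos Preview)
Your reduction in the first paragraph is correct and coincides with the paper's: set $h_1=1$, translate to the zero-codeword case, and reduce to finding $g\in\text{GF}(q)^*$ with $\mathcal{M}_0^{j}\cap g\,\mathcal{M}_0^{s-j}=\{0\}$ for $j=1,\ldots,s-1$.

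Where you diverge is in how you establish existence. The continued-fraction route is a genuinely different (and elegant) characterization: one can check that $g$ is ``good'' in the above sense \emph{if and only if} every partial quotient in the Euclidean algorithm on $(p,g)$ has degree exactly~$1$ (your best-approximation paragraph is essentially this forward implication; the converse comes from noting that a quotient $q_i$ of degree $\ge 2$ makes the convergent pair $(r_i,b_i)$ itself a bad pair for $j=\deg r_i+1$). So the two existence problems are equivalent. The trouble is that your proposed justification for surjectivity is not a proof: irreducibility of $p$ guarantees $\gcd(p,r_i)=1$, but that says nothing about $\deg r_{i-1}-\deg r_i$, which is what controls the degree of $q_i$. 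A reducible $p$ can perfectly well have a $g$ with all degree-$1$ quotients, and conversely nothing about $\gcd(p,r_i)=1$ forces $\deg q_i=1$. So the surjectivity step is exactly as hard as the original existence claim, and you have not closed it.

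The paper bypasses this entirely with a short sequential counting argument, which is what your backup paragraph gestures toward but does not carry out. The crude union bound $\sum_{j=1}^{s-1}(2^j-1)(2^{s-j}-1)$ exceeds $2^s$ already for $s\ge 3$, so the ``slack'' you allude to is essential, and it comes not from fibre sizes or the inversion duality but from the \emph{nesting} of the bad sets. Concretely, the paper orders the $s-1$ conditions and observes that the $j$-th condition, after removing the subcase already covered by conditions $1,\ldots,j-1$, contributes at most $2^{s-j}$ new bad elements. Summing gives $\sum_{i=1}^{s-1}2^{i}=2^s-2$ bad $h$'s, hence at least two good ones survive (which, pleasingly, matches your ``exactly two, related by reversal'' prediction). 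This nested counting is both simpler and complete; your continued-fraction approach, while structurally more informative, ultimately needs an argument of this type to finish.
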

\begin{proof}
From the subgroup structure and similarly to the zero-codeword assumption in Theorem~\ref{thm:all_zero}, we can assume without loss of generality that $x_1=x_2=0$. We now fix $h_1=1$ and prove the existence of a non-zero field element $h_2=h$ such that the only solution to the equation $x+h \cdot x' = 0$ for $x \in \mathcal{M}_0^{j_1}$ and $x' \in \mathcal{M}_0^{j_2}$ ($j_1 + j_2 \le s$) is the trivial solution $x=x'=0$. Consider a fixed but arbitrary basis $\{\omega_1,\omega_2,...,\omega_s\}$ of GF($q=2^s$) over GF($2$) (see Section \ref{sec:gfq_rep}). Such an $h$ exists if the subgroups $h \cdot \mathcal{M}_0^{j_2}$ are spanned by basis elements disjoint from the basis elements spanning $\mathcal{M}_0^{j_1}$, which guarantees that there is no dependent combination of elements from $\mathcal{M}_0^{j_1}$ and $h \cdot \mathcal{M}_0^{j_2}$. From the requirement to cover all possible $j_1,j_2:j_1 + j_2 \le s$, this is equivalent to requiring that each of the sets $\{\omega_1,\omega_2,...,\omega_{s-1},h \cdot w_1\}$, $\{\omega_1,\omega_2,...,\omega_{s-2},h \cdot \omega_1,h \cdot \omega_2\}$ $,\ldots,$ $\{w_1,h \cdot \omega_1,h \cdot \omega_2, ..., h \cdot \omega_{s-1}\}$ is a basis of GF($q$) over GF($2$). Since $\{\omega_1,\omega_2,...,\omega_s\}$ is a basis, the requirement above holds if $h \cdot {\omega _1} \notin \left\langle {{\omega _1},{\omega _2},...,{\omega _{s - 1}}} \right\rangle $, $h \cdot {\omega _1} \notin \left\langle {h \cdot {\omega _2},{\omega _1},...,{\omega _{s - 2}}} \right\rangle $ $\ldots$ $h \cdot {\omega _1} \notin \left\langle {h \cdot {\omega _2},h \cdot {\omega _3},...,h \cdot {\omega _{s - 1}},{\omega _1}} \right\rangle$.

For $h \cdot {\omega _1} \notin \left\langle {{\omega _1},{\omega _2},...,{\omega _{s - 1}}} \right\rangle$ to hold, we must discard from the candidates for $h$ (where we start with all the field elements as candidates) the ${2^{s - 1}}$ field elements in $\left\langle {1,{\omega _2}/{\omega _1},...,{\omega _{s - 1}}/{\omega _1}} \right\rangle$. For $h \cdot {\omega _1} \notin \left\langle {h \cdot {\omega _2},{\omega _1},...,{\omega _{s - 2}}} \right\rangle$ to hold, we have to discard the field elements in $\left\langle {h \cdot {\omega _2}/{\omega _1},1,...,{\omega _{s - 2}}/{\omega _1}} \right\rangle$. But, elements in $\left\langle {h \cdot {\omega _2}/{\omega _1},1,...,{\omega _{s - 2}}/{\omega _1}} \right\rangle$ obtained with the coefficient of ${h \cdot {\omega _2}/{\omega _1}}$ set to zero were already discarded. Thus, we now discard only ${2^{s - 2}}$ elements not discarded in the previous step. Continuing in a similar fashion, we remain with ${2^s} - \sum\limits_{i = 1}^{s - 1} {{2^i}}  = 2 > 0$ $h$'s satisfying the requirements, which proves existence.
\end{proof}
Note that in addition to existence, the proof of Theorem \ref{th:2res} provides a \textit{constructive} way for finding universally resolvable edge labels for degree-2 check nodes.

\subsection{Edge-labeling algorithm}
\label{ss:edge_labeling}

Based on the existence of resolvable and universally-resolvable edge labels, we propose an edge-labeling algorithm for improved finite-length decoding performance. The idea is to distribute resolvable edge labels  within edges of stopping sets such that partially-erased variable nodes are more likely to be resolved. Consider an LDPC graph with edge labels uniformly selected from the non-zero elements of GF$(q)$. Suppose that the dominant partial-erasure type is $j_{\max}$, and that $j_{\max}$ divides $s$. If $j_{\max}$ does not divide $s$, then the maximal partial-erasure type (smaller than $j_{\max}$) that divides $s$ is considered instead.

\begin{algorithm}
{(Edge labeling)}
\upshape
\label{QMBC_mitigation}
~\begin{enumerate}
\item Run the BEC iterative decoder with the channel parameter $\varepsilon = {\varepsilon_{j_{\rm max}}}$ for a predefined number of times. After each run, store the set of unresolved variable nodes. \label{step1}
\item Initialize $\Sigma$ as the subgraph induced by the variable nodes from the sets of Step \ref{step1}. Rank the variable nodes by their number of occurrences in the sets.
\item Modify the edge labels of check nodes of degree $2$ connected to variable nodes in $\Sigma$ to the universally resolvable edge labels found using Theorem \ref{th:2res}. Set the rank of connected variable nodes to $0$. \label{2res_step}
\item Modify the edge labels of check nodes in $\Sigma$ of degree larger than $2$ but no larger than $s/{j_{\max}}$ to labels taken from $\left\{ {{\alpha ^{t \cdot {j_{\max }}}}} \right\}_{t = 0}^{s/{j_{\max }} - 1}$. Set the rank of connected variable nodes to $0$. \label{kres_step}
\item Run over the sets found in Step \ref{step1} by ascending cardinality. For each check node connected to a set:
\begin{enumerate}
\item Set $\kappa'$ as the minimum between the number of non-zero ranking variable nodes and $s/{j_{\max}}$.
\item Modify the $\kappa'$ edge labels connected to non-zero highest-ranking variable nodes according to either Step \ref{2res_step} (if $\kappa'=2$) or Step \ref{kres_step} (otherwise).
\end{enumerate}
\end{enumerate}
\end{algorithm}

The steps of Algorithm \ref{QMBC_mitigation} are explained as follows. First, we circumvent the hardness of finding stopping sets \cite{McGregor, Krishnan} by running the BEC decoder, which fails on stopping sets. To focus on variable nodes that are likely to belong to a partially-erased stopping set, we rank the variable nodes according to their occurrences in the stopping sets found in Step \ref{step1}. We construct the subgraph induced by the union of the sets found in Step \ref{step1}, considered as a union of stopping sets, which is a stopping set as well. We then distribute resolvable edge labels using Theorem \ref{theorem:kappa} and Theorem \ref{th:2res}. Algorithm \ref{QMBC_mitigation} assumes no prior information on the code graph structure and requires no topology changes. Specifically, one of its advantages is that the degree distributions are not affected.

As an alternative to Algorithm \ref{QMBC_mitigation}, one may consider to distribute resolvable edge labels on the graph edges (i.e., without concentrating on stopping sets). However, this will result in a Tanner graph with at most ${s/{j_{\max }}} + 1$ edge label values instead of the possible $q-1=2^s-1$ edge labels. As a consequence, the probability of edge labels of the same value is increased, degrading the decoding performance (see Theorem \ref{thm:QMBC_BEC}). Thus, it is desired to first distribute the $q-1$ non-zero field elements uniformly on the edge labels and then to apply Algorithm \ref{QMBC_mitigation} to stopping sets only. The performance improvement of Algorithm \ref{QMBC_mitigation} is shown in Section~\ref{sec:sim_results}.


\section{Finite-Length Analysis of Maximum-Likelihood Decoding}
\label{sec:finite}

In this section, we analyze the finite-length decoding performance when a maximum-likelihood (ML) is used. We study the ML decoding performance for both the standard non-binary linear ensemble and LDPC ensembles. We denote by $\mathcal{E}_j$ ($j=1,2,...,s$) the index set of variable nodes partially-erased to $\mathcal{M}_0^j$ (see Section \ref{prelim}), and define $\mathcal{E} \buildrel \Delta \over = {\bigcup\limits_{j = 1}^s {{\mathcal{E}_j}} }$. We start with the following lemma.

\begin{lemma}
\label{lem:ML_decoding_all_zeros}
Consider a linear code used for transmission over the QMBC. The probability of decoding failure under ML decoding is independent of the transmitted codeword.
\end{lemma}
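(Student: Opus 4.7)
My plan is to exploit the coset structure of the partial-erasure sets, already established in Section \ref{sec:gfq_rep} and Lemma \ref{lem:cosets}, together with the linearity of the code. The main idea is to show that the set of codewords consistent with the channel output is a translate (by the transmitted codeword) of a set that depends only on the partial-erasure pattern.

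Concretely, suppose a codeword $c=(c_1,\ldots,c_n)\in\mathcal{C}$ is transmitted and that coordinate $i$ suffers a partial erasure of type $j_i$, so the channel output at coordinate $i$ is $\mathcal{M}_{c_i}^{j_i}$. As noted in Section \ref{sec:gfq_rep}, we can write $\mathcal{M}_{c_i}^{j_i}=c_i+\mathcal{M}_0^{j_i}$ (coset of the subgroup $\mathcal{M}_0^{j_i}$ of $\mathrm{GF}^{+}(q)$). A codeword $c'\in\mathcal{C}$ is compatible with the channel output if and only if $c'_i\in c_i+\mathcal{M}_0^{j_i}$ for every $i$, i.e., if and only if $c'-c\in\mathcal{M}_0^{j_1}\times\cdots\times\mathcal{M}_0^{j_n}$. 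By linearity of $\mathcal{C}$, $c'-c$ ranges over $\mathcal{C}$ as $c'$ does, so the set of compatible codewords is
\begin{equation*}
S(c,\vec{j})=c+\left(\mathcal{C}\cap\bigl(\mathcal{M}_0^{j_1}\times\cdots\times\mathcal{M}_0^{j_n}\bigr)\right).
\end{equation*}
In particular, $|S(c,\vec{j})|$ depends only on the erasure pattern $\vec{j}=(j_1,\ldots,j_n)$, not on $c$.

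Next I would note that ML decoding fails precisely when $|S(c,\vec{j})|>1$ (with the usual tie-breaking convention; even a more conservative convention yields an event determined by $|S(c,\vec{j})|$). Averaging over the channel randomness, the failure probability given $c$ is
\begin{equation*}
P_{\mathrm{fail}}(c)=\sum_{\vec{j}}\left(\prod_{i=1}^{n}\varepsilon_{j_i}\right)\mathbb{1}\bigl[|S(c,\vec{j})|>1\bigr],
\end{equation*}
where the product $\prod_i\varepsilon_{j_i}$ is the channel probability of the pattern $\vec{j}$ from \eqref{tran_matrix} and is clearly independent of $c$. Combined with the previous paragraph, $P_{\mathrm{fail}}(c)$ is the same for every $c\in\mathcal{C}$, which is the claim.

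There is no real obstacle here; the only subtlety is keeping the algebraic roles straight, namely that each $\mathcal{M}_0^{j_i}$ is an additive subgroup (a $\mathrm{GF}(2)$-subspace of $\mathrm{GF}(q)$) rather than a $\mathrm{GF}(q)$-subspace, so the compatible set is an \emph{additive} coset rather than a linear subcode. Since the argument uses only additivity of $\mathcal{C}$ over $\mathrm{GF}(2)$, linearity over $\mathrm{GF}(q)$ is more than enough, and the translation step $c'\mapsto c'-c$ works without any restriction on $c$.
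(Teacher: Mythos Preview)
Your proof is correct and follows essentially the same approach as the paper: both use the coset identity $\mathcal{M}_{c_i}^{j_i}=c_i+\mathcal{M}_0^{j_i}$ and the translation $c'\mapsto c'-c$ to put the set of output-compatible codewords in bijection with $\mathcal{C}\cap\bigl(\mathcal{M}_0^{j_1}\times\cdots\times\mathcal{M}_0^{j_n}\bigr)$, a set independent of $c$. The only cosmetic difference is that the paper phrases the argument through the parity-check matrix (solutions of ${\bf H}\boldsymbol{x}=\boldsymbol{0}$) rather than through $\mathcal{C}$ directly.
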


The proof of this lemma is provided in Appendix \ref{proof:ML_decoding_all_zeros}. As a result of Lemma \ref{lem:ML_decoding_all_zeros}, we assume in the rest of this section the transmission of the all-zero codeword. The following definition will serve us in analyzing the ML decoding performance.

\begin{definition}
\label{def:consistent}
A vector of length $\left| \mathcal{E} \right|$ with elements from GF($q$) is said to be \textit{consistent} with respect to $\left\{ {{{\cal E}_j}} \right\}_{j = 1}^s$ if an element of this vector indexed in $\mathcal{E}_j$ is contained in $\mathcal{M}_0^j$.
\end{definition}

\begin{example} \upshape
Suppose $q=4$, $\mathcal{E}_1 = \{ 1 \}$ and $\mathcal{E}_2 = \{2\}$, and consider the representation of GF($4$) as in Example \ref{ex1}. There are $8$ consistent vectors with respect to $\mathcal{E}_1, \mathcal{E}_2$: $\left( {0,0} \right)$, $\left( {0,1} \right)$,$\left( {0,\alpha } \right)$, $\left( {0,1 + \alpha } \right)$, $\left( {1,0} \right)$, $\left( {1,1} \right)$, $\left( {1,\alpha } \right)$ and $\left( {1,1 + \alpha } \right)$.
\end{example}


\subsection{Standard non-binary random ensemble}
\label{subsec:SNBRE}

In this part we calculate the expected probability of ML decoding failure over the standard non-binary random ensemble (SNBRE) of linear codes. Each code in the SNBRE is defined by a parity-check matrix $\bf{H}$ of dimensions $\left( {n - k} \right) \times n$, whose entries are i.i.d. uniform random variables taken from the GF($q$) elements. ${\bf{H}}_{{\cal{E}}}$ denotes its submatrix formed by the columns indexed in $\cal{E}$. To calculate the probability of decoding failure in the SNBRE case, we present the following definition.

\begin{definition}
\label{def_pr}
The columns of ${\bf{H}}_{\mathcal{E}}$ are said to be partially linearly independent if no consistent vector apart from the zero vector exists in the null space of ${\bf{H}}_{\mathcal{E}}$.
\end{definition}

The partial linear independence definition reduces to the ordinary linear independence definition when the partial erasures are full erasures (i.e., only $\mathcal{E}_s$ is non empty). However, the columns of $\bf{H}_\mathcal{E}$ can be partially linearly independent even if they are not linearly independent under the ordinary definition (e.g., when there are more columns than rows). This is demonstrated in the following example.

\begin{example} \upshape
Consider the representation of GF($4$) as in Example \ref{ex1}. Assume that $\left|\mathcal{E}_1\right| =2$ (all the other $\mathcal{E}_j$ are empty), such that the columns of $\bf{H}_{\mathcal{E}}$ are $\left(1,1\right)^T$ and $\left(\alpha,\alpha \right)^T$. These columns are linearly \textit{dependent} (e.g., the vector $\left(\alpha,1\right)^T$ is in the null space of $\bf{H}_{\mathcal{E}}$). However, there is no vector of length $2$ with elements taken from $\mathcal{M}_0^1 = \{0,1\}$ (with at least one non-zero element) in the null space of $\bf{H}_{\mathcal{E}}$. Therefore, the columns are partially linearly \textit{independent} according to Definition \ref{def_pr}.
\end{example}

To derive the probability of ML decoding success, we calculate the probability of partial linear independence. Let us define the set
\begin{equation}
\label{quot_set}
\mathcal{M}_0^{j,j'} \buildrel \Delta \over = {\left\{ {\frac{{{h_j}}}{{{h_{j'}}}}:{h_j} \in \mathcal{M}_0^j,{h_{j'}} \in \mathcal{M}_0^{j'}/\left\{ 0 \right\}} \right\}},
\end{equation}
obtained by an element-wise division of the set ${{\cal M}_0^j}$ by ${{\cal M}_0^{j'}/\left\{ 0 \right\}}$ (for certain $j,j' \le s$). Further, define $\chi ^{j,j'}$ as the cardinality of $\mathcal{M}_0^{j,j'}$:
\begin{equation}
\label{xi_def}
\chi ^{j,j'} \buildrel \Delta \over = \left|\mathcal{M}_0^{j,j'} \right|.
\end{equation}
Note that from group properties $\chi ^{j,j'}$ is symmetric, i.e., $\chi ^{j,j'}= \chi ^{j',j}$. In addition, $\chi ^{j,s} = q$ for any $j$.
\begin{example}
Assume that $q=4$. Then ${\chi ^{1,1}}=2$ and ${\chi ^{j,j'}}$ for $j \ne 1$ or $j' \ne 1$ are $4$.
\end{example}

Let $\psi$ denote the probability that the columns of a randomly drawn $\bf{H}_\mathcal{E}$ are partially linearly independent. For later use, we define ${x^ + } \buildrel \Delta \over = \max \left( {0,x} \right)$.

\begin{lemma}
\label{lem:psi_lb}
Given $\left\{ {{{\mathcal{E}_j}}} \right\}_{j = 1}^s$, let $\mathcal{O}$ contain all vectors of length $\left| {\cal E} \right|$ in which $j$ occurs $\left| {{\mathcal{E}_j}} \right|$ times. Then
\begin{align}
\label{psi_lb}
\psi \ge {\max _{\boldsymbol{o} \in \mathcal{O}}}\prod\limits_{i = 1}^{|\mathcal{E}|} {{{\left( {1 - \left( {\prod\limits_{l = 1}^{i - 1} {{\chi ^{{o_l},{o_i}}}} } \right)/{q^{n - k}}} \right)}^ + }}.
\end{align}
\end{lemma}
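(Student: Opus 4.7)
The plan is to process the columns of $\mathbf{H}_{\mathcal{E}}$ one at a time in the order specified by some fixed $\boldsymbol{o}\in\mathcal{O}$, and at each step bound the conditional probability that the newly appended column creates a ``new'' consistent dependency. Because the resulting product bound is valid for every admissible order, we may finally take the maximum over $\boldsymbol{o}\in\mathcal{O}$.

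Concretely, relabel the columns of $\mathbf{H}_{\mathcal{E}}$ as $h_1,\ldots,h_{|\mathcal{E}|}$ so that index $l$ carries partial-erasure type $o_l$, and let $A_i$ denote the event that no non-zero consistent vector $(c_1,\ldots,c_i)$ with $c_l\in\mathcal{M}_0^{o_l}$ satisfies $\sum_{l=1}^{i} c_l h_l = 0$. Padding with a trailing zero shows $A_i\subseteq A_{i-1}$, so
\begin{equation*}
\psi = \Pr(A_{|\mathcal{E}|}) = \prod_{i=1}^{|\mathcal{E}|}\Pr(A_i\mid A_{i-1}),
\end{equation*}
with $\Pr(A_1\mid A_0)\triangleq\Pr(A_1)$. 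The key step is to bound $\Pr(A_i^c\mid A_{i-1})$. Conditional on $A_{i-1}$, a violation at stage $i$ must have $c_i\ne 0$ (otherwise the prefix $(c_1,\ldots,c_{i-1})$ would already violate $A_{i-1}$), and so forces
\begin{equation*}
h_i = -\sum_{l=1}^{i-1}(c_l/c_i)\,h_l,\qquad c_l/c_i \in \mathcal{M}_0^{o_l,o_i},
\end{equation*}
where $\mathcal{M}_0^{o_l,o_i}$ is the set defined in \eqref{quot_set}. The number of distinct right-hand sides is therefore at most $\prod_{l=1}^{i-1}\chi^{o_l,o_i}$ by \eqref{xi_def}. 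Since $h_i$ is uniform on $\mathrm{GF}(q)^{n-k}$ and independent of $h_1,\ldots,h_{i-1}$, a union bound yields $\Pr(A_i^c\mid A_{i-1})\le \bigl(\prod_{l=1}^{i-1}\chi^{o_l,o_i}\bigr)/q^{n-k}$. Replacing the possibly negative bound on $\Pr(A_i\mid A_{i-1})$ by $0$ whenever appropriate (this is the role of the $(\,\cdot\,)^+$ in the statement), multiplying over $i$, and finally optimizing over $\boldsymbol{o}\in\mathcal{O}$ produces exactly the bound in~\eqref{psi_lb}. For the base case $i=1$ the empty product equals $1$, recovering the familiar factor $1-q^{-(n-k)}=\Pr(h_1\ne 0)$.

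The main obstacle is in the conditioning step: one must verify that the ``bad'' set for $h_i$ is determined solely by $h_1,\ldots,h_{i-1}$ and by the ratio set $\mathcal{M}_0^{o_l,o_i}$, so that after conditioning on $A_{i-1}$ the residual randomness of $h_i$ is still uniform on $\mathrm{GF}(q)^{n-k}$. The chain-rule decomposition $\Pr(A_{|\mathcal{E}|})=\prod \Pr(A_i\mid A_{i-1})$ relies on the monotone nesting $A_i\subseteq A_{i-1}$, which must be justified by the padding argument mentioned above; everything else is a direct union bound combined with the group-theoretic fact that quotients of consistent coordinates live in the set $\mathcal{M}_0^{o_l,o_i}$ of size $\chi^{o_l,o_i}$. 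The freedom to optimize over $\boldsymbol{o}\in\mathcal{O}$ is a consequence of the fact that the row dependencies are unaffected by column permutations within $\mathcal{E}$; different orderings simply yield different (valid) upper bounds on the same probability $1-\psi$.
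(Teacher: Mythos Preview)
Your proposal is correct and takes essentially the same approach as the paper: process the columns of $\mathbf{H}_{\mathcal{E}}$ sequentially in an order $\boldsymbol{o}$, at each step upper-bound the number of ``forbidden'' values for the new column by $\prod_{l<i}\chi^{o_l,o_i}$ via the quotient sets $\mathcal{M}_0^{o_l,o_i}$, and then optimize over orderings. Your formulation via the chain rule $\psi=\prod_i\Pr(A_i\mid A_{i-1})$ and the explicit observation that $h_i$ remains uniform and independent after conditioning on $A_{i-1}$ is somewhat more carefully stated than the paper's counting version, but the argument is the same.
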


\begin{proof}
As the matrices in the SNBRE are equiprobable, $\psi$ is a function of $\left\{ {\left| {{\mathcal{E}_j}} \right|} \right\}_{j = 1}^s$ rather than of $\left\{ {{\mathcal{E}_j}} \right\}_{j = 1}^s$. Let us concentrate on some fixed but arbitrary choice of index sets with cardinalities $\left\{ {\left| {{{\cal E}_j}} \right|} \right\}_{j = 1}^s$. This choice is represented by a vector $\boldsymbol{o}$ that contains $j$ in indices of codeword symbols partially-erased to ${{\cal M}_0^j}$. Consider a matrix $\bf{H}_\mathcal{E}$ with columns $\boldsymbol{e}_i$ and denote by $\mathcal{A}_i$ the partial-erasure set indexed in $o_i$ ($i=1,2,...,\left|\mathcal{E}\right|$). We count in how many ways partially linearly independent columns can be placed in $\bf{H}_\mathcal{E}$.

Assume that the first $i' -1$ columns of $\bf{H}_{\mathcal{E}}$ are partially linearly independent. The next column, $\boldsymbol{e}_{i'}$, must satisfy ${\mathcal{A}_{i'}} \cdot {\boldsymbol{e}_{i'}} \ne \sum\limits_{l = 1}^{i'-1} {{\mathcal{A}_l} \cdot {\boldsymbol{e}_l}}$. Thus, $\boldsymbol{e}_{i'}$ must be different from the vectors in $\Gamma ={\sum\limits_{l = 1}^{i' - 1} {{\mathcal{A}_l}/\left\{ {{\mathcal{A}_{i'}} \setminus 0} \right\} \cdot {{\boldsymbol{e}}_l}} }$. The number of elements in $\Gamma$ is \textit{upper bounded} by $\prod\limits_{l = 1}^{i' - 1} {{\chi ^{{o_l},{o_{i'}}}}}$, as the linear combinations of $\boldsymbol{e}_l$ in $\Gamma$ might not be distinct. This is since linear independence in the ordinary sense is not necessarily guaranteed. We maximize over $\boldsymbol{o} \in \mathcal{O}$ to tighten the bound, and to obtain a probability we normalize by $q^{(n-k) |\mathcal{E}|}$, which is the number of possible $\bf{H}_\mathcal{E}$ matrices.
\end{proof}

Apart from the lower bound on $\psi$ of Lemma \ref{lem:psi_lb}, there are cases where the \textit{exact} value of $\psi$ can be found. Consider a subset $\mathcal{J}^*$ of $\left\{ {1,2,...,s} \right\}$ such that each element in $\mathcal{J}^*$ divides $s$ and $j'$ divides $j$ for all $j,j' \in \mathcal{J}^*$, $j' \le j$. We assume a representation of GF($q$) (see Section \ref{sec:gfq_rep}) such that for each $j^* \in \mathcal{J}^*$, the partial erasure set $\mathcal{M}_0^{j^*}$ is mapped to a \textit{subfield} of GF($q$) (i.e., in addition to being an additive subgroup of GF($q$)). Moreover, for each pair $j,j' \in \mathcal{J}^*$, $j' \le j$, $\mathcal{M}_0^{j'}$ is mapped to a subfield of $\mathcal{M}_0^{j}$.

\begin{example}
If $q=4$, the possible choices of $\mathcal{J}^*$ are $\{1\}$, $\{2\}$ and $\{1,2\}$. If $q=8$, $\mathcal{J}^*$ can be $\{1\}$, $\{2\}$, $\{1,2\}$ or $\{1,3\}$.
\end{example}
The following lemma shows that when the divisibility conditions above are met, the upper bound on $\psi$ via sequential exclusion of dependencies (Lemma~\ref{lem:psi_lb}) becomes exact, {\em if we sort the partial erasures in non-increasing order}.
\begin{lemma}
\label{lem:psi_subfield}
Assume that $\mathcal{E}_{{j}} = \emptyset$ for $j \notin \mathcal{J}^*$. Denote by $\boldsymbol{o}$ the (now specific) vector of length $|\mathcal{E}|$ with $s$ in its first $|\mathcal{E}_{s}|$ entries, $s-1$ in its next $|\mathcal{E}_{s-1}|$ entries downto $1$ in its last $|\mathcal{E}_{1}|$ entries. Then
\begin{align}
\psi = \prod\limits_{i = 1}^{|\mathcal{E}|} {{{\left( {1 - \left( {\prod\limits_{l = 1}^{i - 1} {2^{{o}_l}} } \right)/{q^{n - k}}} \right)}^ + }} .
\end{align}
\end{lemma}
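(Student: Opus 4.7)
My plan is to revisit the inductive argument underlying the proof of Lemma~\ref{lem:psi_lb} and show that, under the subfield assumption combined with the non-increasing sort on $\boldsymbol{o}$, the key inequality in that proof becomes an equality. The formula in the lemma matches exactly the lower bound in~\eqref{psi_lb} once $\chi^{o_l,o_i}$ is replaced by $2^{o_l}$, so everything reduces to (i) justifying this replacement and (ii) showing that the bound is tight.

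For (i), I would first compute $\chi^{o_l,o_i}$ when $o_l \ge o_i$ and both partial-erasure sets are subfields. Since $\mathcal{M}_0^{o_i}\subseteq\mathcal{M}_0^{o_l}$ as subfields, any quotient $h_{o_l}/h_{o_i}$ with $h_{o_i}\neq 0$ lies in $\mathcal{M}_0^{o_l}$ by closure under multiplication and inversion. Conversely, every $g\in\mathcal{M}_0^{o_l}$ equals $g/1$ with $1\in\mathcal{M}_0^{o_i}$. Hence $\mathcal{M}_0^{o_l,o_i}=\mathcal{M}_0^{o_l}$ and $\chi^{o_l,o_i}=2^{o_l}$. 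The non-increasing sort on $\boldsymbol{o}$ is what guarantees $o_l\ge o_i$ whenever $l<i$, so that this computation applies throughout.

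For (ii), I would argue inductively that, conditioned on the first $i-1$ columns $\boldsymbol{e}_1,\ldots,\boldsymbol{e}_{i-1}$ being partially linearly independent, the set $\Gamma$ defined in the proof of Lemma~\ref{lem:psi_lb} reduces, by step (i), to $\{\sum_{l<i}c_l\boldsymbol{e}_l : c_l\in\mathcal{M}_0^{o_l}\}$. The crux is showing $|\Gamma|=\prod_{l<i}2^{o_l}$, i.e.\ that distinct consistent coefficient tuples give distinct sums. If $\sum_l c_l\boldsymbol{e}_l = \sum_l c'_l\boldsymbol{e}_l$ with $c_l,c'_l\in\mathcal{M}_0^{o_l}$, then $\sum_l (c_l-c'_l)\boldsymbol{e}_l = 0$, and each $c_l-c'_l$ still lies in $\mathcal{M}_0^{o_l}$ because this set is an additive subgroup of $\text{GF}^+(q)$; partial linear independence then forces $c_l=c'_l$ for all $l$.

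Finally, a uniformly random column $\boldsymbol{e}_i\in\text{GF}(q)^{n-k}$ avoids $\Gamma$ with probability exactly $(1-|\Gamma|/q^{n-k})^+$, and avoiding $\Gamma$ is precisely what preserves partial linear independence through index $i$. Multiplying these conditional probabilities over $i=1,\ldots,|\mathcal{E}|$ yields the formula. The main obstacle is step (ii): the tightness of the bound rests on the interplay between the subfield hypothesis (which forces coefficients into $\mathcal{M}_0^{o_l}$ rather than a strictly larger quotient set) and the non-increasing sort (which makes this hypothesis apply to all $l<i$); without either ingredient the linear combinations can collapse and the product would strictly overcount $|\Gamma|$.
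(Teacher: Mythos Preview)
Your proposal is correct and follows essentially the same route as the paper: reduce $\chi^{o_l,o_i}$ to $2^{o_l}$ via the subfield hypothesis and the non-increasing sort, then show that the count of excluded vectors in $\Gamma$ is exact because distinct consistent coefficient tuples yield distinct sums. Your injectivity argument in step (ii) — subtracting two equal combinations and invoking partial linear independence directly on the difference tuple — is in fact a cleaner formulation of the paper's contradiction argument, which instead isolates the largest index with a nonzero difference and shows it violates the exclusion at that earlier step; both hinge on the same closure of $\mathcal{M}_0^{o_l}$ under subtraction.
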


\begin{proof}
\label{proof:lem:psi_subfield}

Consider the placement process depicted in the proof of Lemma \ref{lem:psi_lb} and assume that we place the  $i'$th column. From the  ordering of ${\boldsymbol{o}}$ we get that $\chi^{{o_l},{o_{i'}}}=2^{o_l}$ for $l<i'$. In choosing the vector $\boldsymbol{e}_{i'}$ we exclude all combinations of previous vectors $\boldsymbol{e}_l$ with coefficients in $\mathcal{M}_0^{{o_l},{o_{i'}}}$. Assume by contradiction that two of these $\prod\limits_{l = 1}^{i' - 1} {2^{o_l}}$ combinations result in the same vector. But this would imply an $\boldsymbol{e}_{i''}$, $i''<i'$, that is a combination of vectors $\boldsymbol{e}_l$, $l<i''$, with coefficients in $\mathcal{M}_0^{{o_l},{o_{i'}}}$. Since for any $l$ , $\mathcal{M}_0^{{o_l},{o_{i'}}}=\mathcal{M}_0^{{o_l},{o_{i''}}}$, this is a contradiction because it means that at step $i''$ we did not exclude all partially dependent vectors, and thus the count is exact with no over-subtraction.
%
%
\end{proof}
Based on either the lower bound on $\psi$ of Lemma \ref{lem:psi_lb} or its exact value for the cases of Lemma \ref{lem:psi_subfield}, we calculate the expected value of $P_{{\rm{error}}}^{{\rm{ML}}}$ for the SNBRE.

\begin{theorem}
\label{thm:PML_B}

The expected probability of decoding failure over codes drawn from the non-binary random ensemble under ML decoding is
\begin{align}
\label{PML_B}
&\mathbb{E}_{{\rm SNBRE}}\left[ P^{{\rm{ML}}}\left(\bf{H}\right) \right]
\\\nonumber
&\le \sum\limits_{\scriptstyle\left| {{\mathcal{E}_0}} \right|,\left| {{\mathcal{E}_1}} \right|,...,\left| {{\mathcal{E}_s}} \right|:\hfill\atop \scriptstyle \sum\limits_{j = 0}^s {\left| {{\mathcal{E}_j}} \right|}  = n\hfill} {\frac{{n!}}{{\left| {{\mathcal{E}_0}} \right|!\left| {{\mathcal{E}_1}} \right|!...\left| {{\mathcal{E}_s}} \right|!}}\prod\limits_{j = 0}^s {{{ {{\varepsilon_j}} }^{{|\mathcal{E}_j|}}}} } \cdot \left(1 - \tilde{\psi} \right),
\end{align}
where $\tilde{\psi}$ is\footnote{While implicit in the expressions, recall that $\tilde{\psi}$ depends on $\left\{ {\left| {{\mathcal{E}_j}} \right|} \right\}_{j = 1}^s$.} either the lower bound of Lemma \ref{lem:psi_lb}, or its exact value in the cases of Lemma \ref{lem:psi_subfield} (in the latter cases, an equality is attained in \eqref{PML_B}).
\end{theorem}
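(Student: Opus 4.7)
My plan is to decompose the expected ML failure probability by conditioning on the random partial-erasure pattern produced by the QMBC, and then to reduce the conditional failure event to partial linear dependence of the restricted parity-check submatrix $\mathbf{H}_{\mathcal{E}}$.

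First I would invoke Lemma \ref{lem:ML_decoding_all_zeros} to assume without loss of generality that the all-zero codeword was transmitted. Since the channel acts independently on each coordinate, producing a partial erasure of type $j$ with probability $\varepsilon_j$, the vector $(|\mathcal{E}_0|,\ldots,|\mathcal{E}_s|)$ follows a multinomial distribution. Applying the law of total probability with respect to these cardinalities then produces the outer sum, the multinomial coefficient, and the weight $\prod_{j=0}^s \varepsilon_j^{|\mathcal{E}_j|}$ in \eqref{PML_B}.

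Next I would reduce the conditional ML failure probability to partial linear dependence. Given the channel output, the codewords consistent with it are precisely those $\mathbf{c}$ satisfying $\mathbf{H}\mathbf{c}=\mathbf{0}$, $c_i=0$ for $i\notin \mathcal{E}$, and $c_i\in \mathcal{M}_0^{j}$ for $i\in \mathcal{E}_j$. Because the QMBC assigns the same likelihood $\varepsilon_j$ to every input symbol lying in a type-$j$ partial-erasure output, all such consistent codewords tie under maximum-likelihood decoding, hence ML succeeds if and only if this set equals $\{\mathbf{0}\}$. Restricting attention to the coordinates indexed in $\mathcal{E}$ (the remaining entries are forced to zero and play no role), this is exactly the statement that the null space of $\mathbf{H}_{\mathcal{E}}$ contains no non-zero consistent vector in the sense of Definition \ref{def_pr}, i.e., that the columns of $\mathbf{H}_{\mathcal{E}}$ are partially linearly independent. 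Averaging over the SNBRE, the conditional ML success probability then equals $\psi$, which is lower bounded by $\tilde{\psi}$ via Lemma \ref{lem:psi_lb}, and coincides with $\tilde{\psi}$ under the divisibility hypotheses of Lemma \ref{lem:psi_subfield}. The conditional failure probability is therefore at most $1-\tilde{\psi}$, and combining with the multinomial weights delivers \eqref{PML_B}, with equality in the special subfield case.

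The main obstacle I anticipate is the clean justification of the reduction in the second step, specifically the posterior equiprobability of all consistent codewords under the QMBC likelihood and the passage from $\mathbf{H}$ to the restricted submatrix $\mathbf{H}_{\mathcal{E}}$. Once this identification of ML success with partial linear independence is firmly in place, the remainder of the argument is routine multinomial bookkeeping together with a direct invocation of the already-proved bounds on $\psi$ from Lemma \ref{lem:psi_lb} and Lemma \ref{lem:psi_subfield}.
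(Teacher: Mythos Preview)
Your proposal is correct and follows essentially the same approach as the paper: assume the all-zero codeword via Lemma~\ref{lem:ML_decoding_all_zeros}, identify conditional ML failure with the existence of a non-zero consistent solution to $\mathbf{H}_{\mathcal{E}}\boldsymbol{x}_{\mathcal{E}}^T=\boldsymbol{0}$ (i.e., failure of partial linear independence), and then average using the multinomial distribution together with Lemmas~\ref{lem:psi_lb} and~\ref{lem:psi_subfield}. Your treatment of the posterior-equiprobability step is in fact more explicit than the paper's, which simply asserts the equivalence without elaboration.
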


\begin{proof}
Recall that the transmission of the all-zero codeword is assumed without loss of generality. Consider a fixed but arbitrary partial-erasure index sets $\left\{ {{\mathcal{E}_j}} \right\}_{j = 1}^s$. The channel output is not resolved as the all-zero codeword if and only if there is a non-zero consistent solution to ${{\bf{H}}_{\cal E}} \boldsymbol{x}_\mathcal{E}^T=\boldsymbol{0}$. This happens if the columns of $\bf{H}_{\mathcal{E}}$ are partially linearly independent, with probability which is $1-\psi$. Finally, we sum over the possible cardinalities of the partial-erasure index sets, using the multinomial distribution and the channel partial-erasure probabilities, to obtain \eqref{PML_B}.
\end{proof}


If $s=1$ and all the partial-erasure sets are $\{0,1\}$ (i.e., BEC full-erasures), we obtain \cite[Theorem 3.1]{Di} as a special case of Theorem \ref{thm:PML_B} (with equality). In Figure \ref{fig:esnbre} we plot $\mathbb{E}_{{\rm SNBRE}}\left[ P^{{\rm{ML}}}\left(\bf{H}\right) \right]$ for a $q=4$ channel with $\varepsilon_2 = \varepsilon_1/10$ and different $n$ values. This is compared to an asymptotically equivalent $q$-ary erasure channel (QEC), i.e., with $\varepsilon = \varepsilon_1/2 + \varepsilon_2$. It is demonstrated that the QMBC finite-length ML performance is orders of magnitude better, though the Shannon limit is the same.

\begin{figure}[!t]
\centering
\includegraphics[scale=0.6]{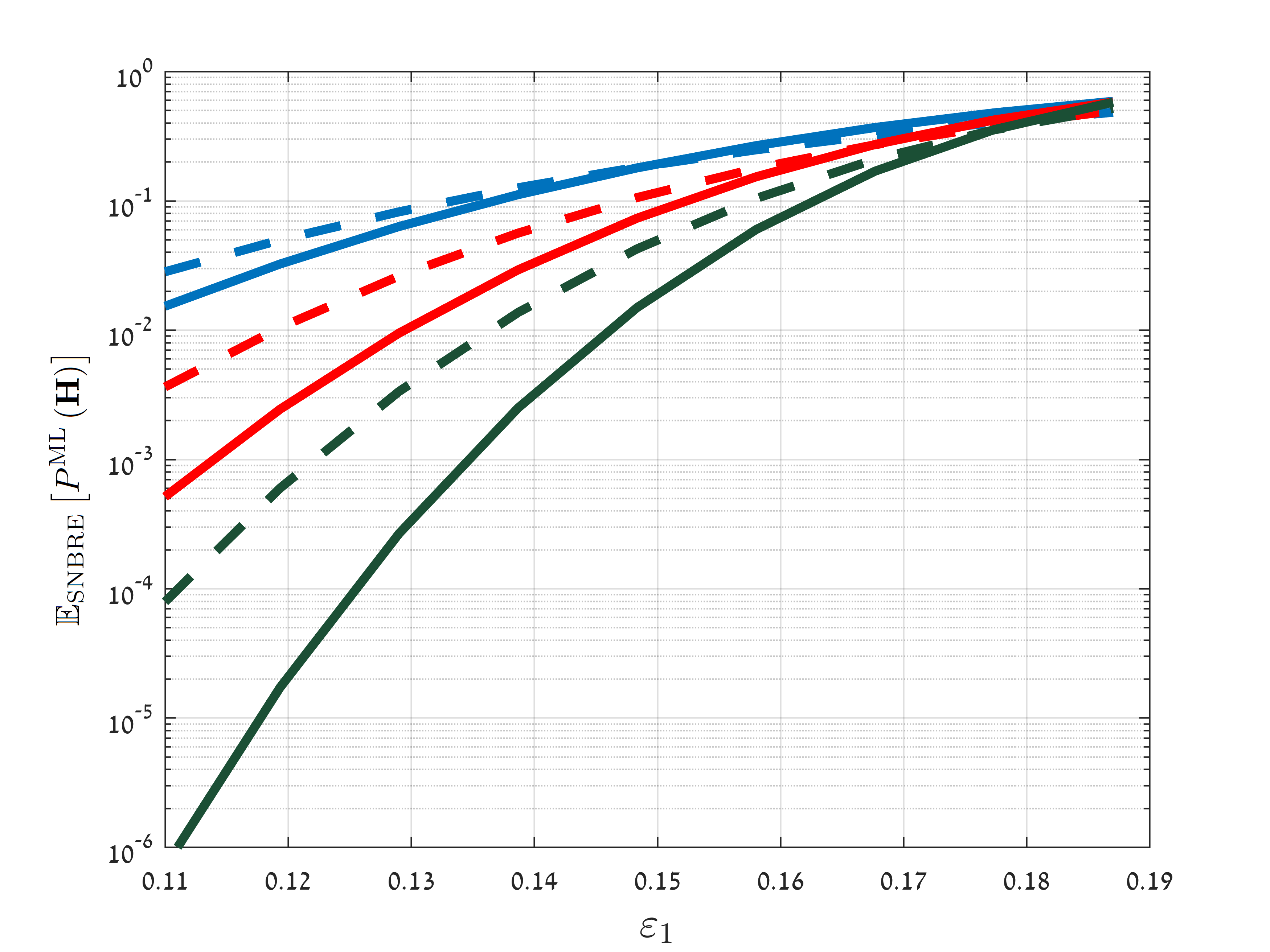}
\caption{Exact $\mathbb{E}_{{\rm SNBRE}}\left[ P^{{\rm{ML}}}\left(\bf{H}\right) \right]$ as a function of $\varepsilon_1$, for $\varepsilon_2 = \varepsilon_1/10$ and $q=4$ (solid lines). An asymptotically equivalent QEC with $\varepsilon=\left(3/5\right) \varepsilon_1$ is also shown (dashed lines). The codeword lengths are $n=128,256,512$ (top to bottom) and the rate is $8/9$ (Shannon limit: $0.185$).}
\label{fig:esnbre}
\end{figure}

\subsection{LDPC ensembles under ML decoding}

In this part, we derive an upper bound on the expected ML decoding performance over the regular non-binary ($d_v,d_c$) LDPC ensemble. We start with the following lemma, which will serve us later in calculating the probability that a certain check node is satisfied.

\begin{lemma}
\label{lem:zero_sum}
Consider a vector $\boldsymbol{a}$ of length $m \ge 2$, whose entries are i.i.d. random variables  uniformly distributed on the non-zero GF($q=2^s$) elements. The probability that the entries of $\boldsymbol{a}$ sum to $0$ is
\begin{align}
\label{w_1}
\Pr \left( {\sum\limits_{i = 1}^m {{a_i}}  = 0} \right) &= \frac{{1 - {{\left( {1 - q} \right)}^{1 - m}}}}{q} \le \frac{1}{q-1}.
\end{align}
\end{lemma}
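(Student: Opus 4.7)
My plan is to establish the closed-form expression by a one-line recursion, solve it explicitly, and then deduce the uniform upper bound.

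First I would define $p_m \triangleq \Pr\bigl(\sum_{i=1}^m a_i = 0\bigr)$ and set up a recursion by conditioning on the partial sum $S_{m-1} = \sum_{i=1}^{m-1} a_i$. If $S_{m-1}=0$, then the full sum is $0$ iff $a_m=0$, which has probability $0$ since $a_m$ is uniform on the non-zero field elements. If $S_{m-1}\neq 0$, then the full sum is $0$ iff $a_m = -S_{m-1}$, a specific non-zero value, which has probability $1/(q-1)$. This yields
\begin{equation*}
p_m = \frac{1 - p_{m-1}}{q-1}, \qquad p_1 = 0.
\end{equation*}

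Next I would solve this affine recursion in the standard way. Its fixed point is $p^* = 1/q$, and writing $d_m \triangleq p_m - 1/q$ gives the geometric recursion $d_m = -d_{m-1}/(q-1)$, so $d_m = -\frac{1}{q}\bigl(-\frac{1}{q-1}\bigr)^{m-1}$. Using the identity $(-1/(q-1))^{m-1} = (1-q)^{1-m}$, I recover
\begin{equation*}
p_m = \frac{1}{q}\bigl(1 - (1-q)^{1-m}\bigr),
\end{equation*}
which is the claimed expression. (As a sanity check, $p_2 = \tfrac{1}{q}\bigl(1+\tfrac{1}{q-1}\bigr) = \tfrac{1}{q-1}$.)

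For the inequality, I would rewrite $(1-q)^{1-m} = (-1)^{m-1}/(q-1)^{m-1}$ and split on parity. For even $m$ the parenthesized factor equals $1 + 1/(q-1)^{m-1}$, which is maximized at $m=2$ and gives $p_m \le \tfrac{1}{q}\cdot \tfrac{q}{q-1} = \tfrac{1}{q-1}$, with equality iff $m=2$. For odd $m$ the factor equals $1 - 1/(q-1)^{m-1} < 1$, hence $p_m < 1/q \le 1/(q-1)$. Either way the bound holds, completing the argument.

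I do not anticipate a genuine obstacle here: the whole statement is a one-variable linear recursion whose solution is immediate. If one wanted a slicker derivation, the same closed form drops out of a character-sum computation over the additive group of GF$(q)$, since for any non-trivial additive character $\chi$ one has $\sum_{a\neq 0}\chi(a)=-1$ while the trivial character contributes $(q-1)^m$, yielding $p_m = \tfrac{1}{q(q-1)^m}\bigl[(q-1)^m + (q-1)(-1)^m\bigr]$, which rearranges to the same formula. I would mention this alternative only if space permits.
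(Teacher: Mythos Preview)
Your proof is correct and follows essentially the same idea as the paper's: the paper counts zero-sum vectors of length $m$ by observing they correspond to non-zero-sum vectors of length $m-1$, which is exactly your recursion $p_m=(1-p_{m-1})/(q-1)$ phrased in probabilities rather than counts, and the paper's bound argument (reducing to $(1-q)^{2-m}\le 1$) is equivalent to your parity split. Your character-sum remark is a genuinely different derivation not present in the paper; it is a nice alternative but not needed here.
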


\begin{figure}[!t]
\centering
\includegraphics[scale=0.6]{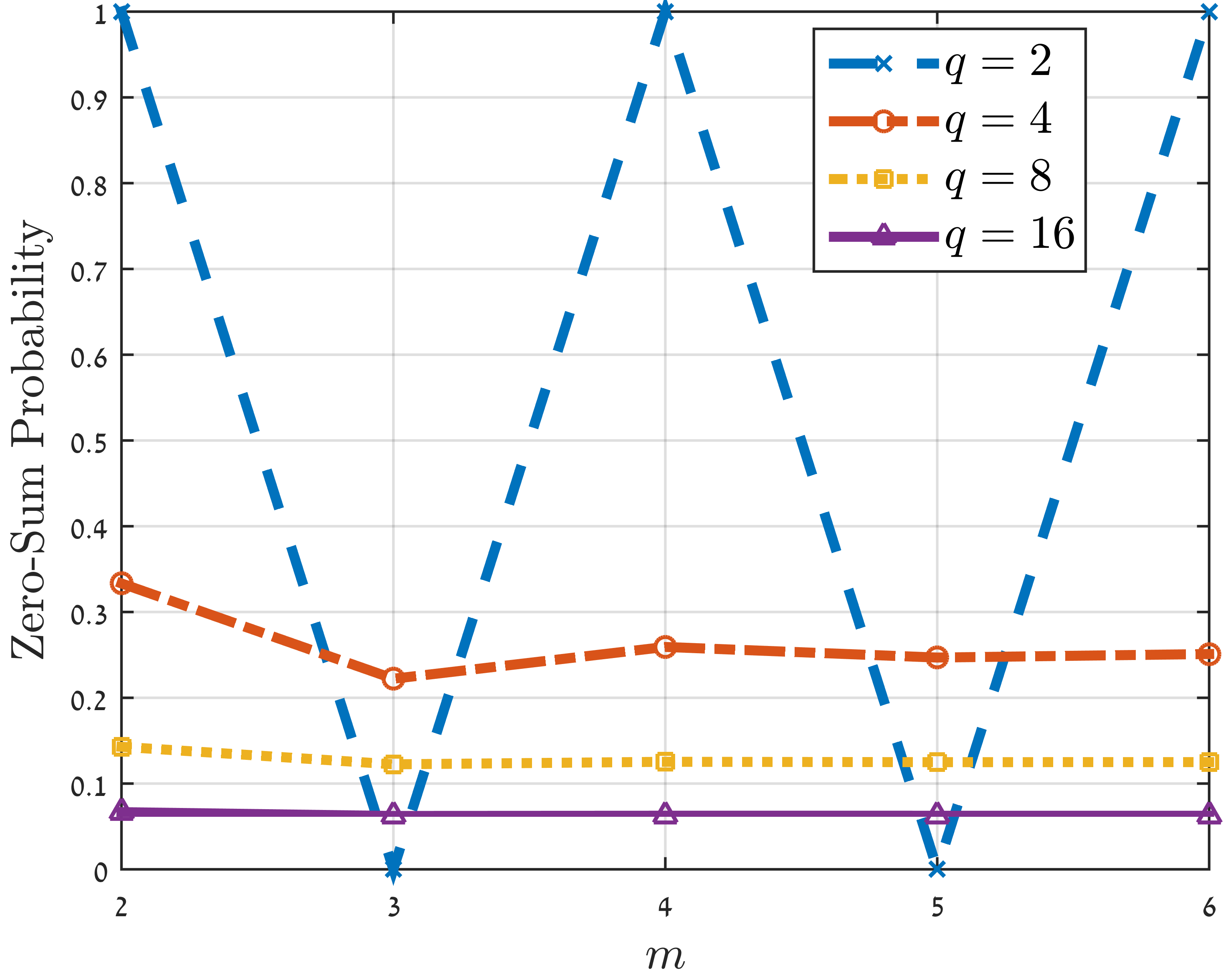}
\caption{The probability that a check node is satisfied given $m$ non-zeros among its connected variable nodes, under the uniform distribution of the edge labels. The binary case ($q=2$, no sensitivity to edge labels) is provided for reference.}
\label{fig:zero_sum_prob}
\end{figure}

The proof of this lemma is provided in Appendix \ref{proof:zero_sum}. In Figure \ref{fig:zero_sum_prob}, the zero-sum probability $\Pr \left( {\sum\limits_{i = 1}^m {{a_i}}  = 0} \right)$ is shown for several values of $m$ and $q$. Note that in the binary case ($q=2$) this probability is $1$ if $m$ is even, and $0$ otherwise, as expected. It is demonstrated in Figure \ref{fig:zero_sum_prob} that the zero-sum probability is approximately independent of $m$ when $q \ge 2$, and that the upper bound ${1}/({q-1})$ is tight. Note that $\Pr \left( {\sum\limits_{i = 1}^m {{a_i}}  = 0} \right)$ depends on the number of non-zero entries in $\boldsymbol{a}$ and not on the entries themselves. In the following lemma, we calculate the number of consistent vectors (see Definition \ref{def:consistent}) with a certain number of non-zero entries.

\begin{lemma}
\label{lem:weight}
Given $\mathcal{E}=\left\{ {{{\mathcal{E}_j}}} \right\}_{j = 1}^s$, the number of vectors with $w$ non-zero entries that are consistent with $\mathcal{E}$ is
\begin{equation}
\eta \left( w \right) = \sum\limits_{\scriptstyle{\boldsymbol{u}}:\sum\limits_{j = 1}^s {{u_j}}  = w,\hfill\atop
\scriptstyle{u_j} \le \left| {{{\cal E}_j}} \right|\hfill} {\prod\limits_{j = 1}^s {\left( \begin{array}{c}
\left| {{{\cal E}_j}} \right|\\
{u_j}
\end{array} \right)} } {\left( {{2^j} - 1} \right)^{{u_j}}}.
\end{equation}
\end{lemma}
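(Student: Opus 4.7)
The plan is to count consistent vectors by partitioning the non-zero positions according to which partial-erasure class $\mathcal{E}_j$ they belong to, then using independence across the distinct classes and positions to multiply the counts.

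First, I would recall from Definition~\ref{def:consistent} that a consistent vector $\boldsymbol{x}$ of length $|\mathcal{E}|$ has its entry at each index $i\in\mathcal{E}_j$ drawn from $\mathcal{M}_0^j$. Since $|\mathcal{M}_0^j|=2^j$, exactly $2^j-1$ of the admissible values at such a position are non-zero and one is zero. Thus for every $j$ the choices at the $|\mathcal{E}_j|$ positions of class $j$ are independent of one another and of the other classes.

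Next, I would introduce, for a given consistent vector with $w$ non-zero entries, the counting variable $u_j$ equal to the number of non-zero entries located at indices in $\mathcal{E}_j$. By construction $u_j\le|\mathcal{E}_j|$ and $\sum_{j=1}^s u_j = w$. Conditional on the composition vector $\boldsymbol{u}=(u_1,\ldots,u_s)$, the number of consistent vectors with this profile is
\begin{equation}
\prod_{j=1}^s \binom{|\mathcal{E}_j|}{u_j}(2^j-1)^{u_j},
\end{equation}
where $\binom{|\mathcal{E}_j|}{u_j}$ selects which positions of class $j$ carry non-zero entries, $(2^j-1)^{u_j}$ assigns a non-zero value from $\mathcal{M}_0^j\setminus\{0\}$ to each of them, and the remaining $|\mathcal{E}_j|-u_j$ positions are forced to zero (one choice each). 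Summing over all admissible compositions $\boldsymbol{u}$ yields the stated formula for $\eta(w)$.

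There is no real obstacle here; the argument is a direct multinomial-style enumeration, and correctness hinges only on the observation that the constraints imposed by consistency act independently on each coordinate. The only point requiring care is to ensure the sum ranges over exactly those $\boldsymbol{u}$ satisfying both $\sum_j u_j=w$ and $u_j\le|\mathcal{E}_j|$ for all $j$, which prevents overcounting positions that do not exist.
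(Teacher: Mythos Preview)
Your proposal is correct and follows essentially the same approach as the paper: both arguments partition by the composition vector $\boldsymbol{u}=(u_1,\ldots,u_s)$, count position choices via $\binom{|\mathcal{E}_j|}{u_j}$ and value choices via $(2^j-1)^{u_j}$, and sum over admissible $\boldsymbol{u}$. Your write-up is slightly more explicit about the independence across coordinates and the role of the constraints $u_j\le|\mathcal{E}_j|$, but the underlying idea is identical.
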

\begin{proof}
An element $u_j$ of $\boldsymbol{u}$ counts the number of non-zero entries taken from the $\left| {{\mathcal{E}_j}} \right|$ partial-erasure set $\mathcal{M}_0^j$. The number of ways to choose the locations of the partial-erasure sets is counted with the factor ${\left| {\cal E}_j \right| \choose {u_j}}$, where for each choice there are ${\left( {{2^j} - 1} \right)^{{u_j}}}$ ways to choose the non-zero entries.
\end{proof}

Note that when $s=1$ and all the partial-erasure sets are $\{0,1\}$ (i.e., BEC full-erasures), $\eta \left( w \right)$ degenerates into ${\left| {\cal E} \right| \choose {w}}$, which is the number of binary vectors of length $|\mathcal{E}|$ whose Hamming weight is $w$. Let us denote by $P^{{\rm{ML}}}\left(\mathcal{G}\right)$ the probability of ML decoding failure for a certain Tanner graph $\mathcal{G}$ from the regular $\left(d_v,d_c\right)$ ensemble. We now use Lemma \ref{lem:zero_sum} and Lemma \ref{lem:weight} to upper bound the expected value (over graphs in the $\left(d_v,d_c\right)$ ensemble) of $P^{{\rm{ML}}}\left(\mathcal{G}\right)$. As in \cite{Di, Orlitsky}, we use polynomial characteristic functions to identify graph configurations leading to failure events. We denote by ${\rm{coef}}\left( {f\left( x \right),{x^i}} \right)$ the $i\rm{th}$ coefficient $f_i$ of $x^i$ in the polynomial $f\left( x \right) = \sum\limits_{i \ge 0} {{f_i}{x^i}}$ (note that ${\rm{coef}}\left( {{{\left( {1 + y} \right)}^n},{x^k}} \right) = {n \choose k}$). We also denote by $\mathbb{E}_{{\rm LDPC} (d_v,d_c)}\left[ P^{{\rm{ML}}}\left(\mathcal{G}\right) \right]$ the expected probability of decoding failure, where the expectation is taken over LDPC codes in the $(d_v,d_c)$ ensemble. Recall that $\eta \left( w \right)$ is a function of $\left\{ {\left| {{\mathcal{E}_j}} \right|} \right\}_{j = 1}^s$.

\begin{theorem}
\label{thm:ub_LDPC}
\begin{align}
\label{LDPC_ML}
&\mathbb{E}_{{\rm LDPC} (d_v,d_c)}\left[ P^{{\rm{ML}}}\left(\mathcal{G}\right) \right]
\le
\\\nonumber
&\sum\limits_{\scriptstyle\left| {{\mathcal{E}_0}} \right|,\left| {{\mathcal{E}_1}} \right|,...,\left| {{\mathcal{E}_s}} \right|:\hfill\atop \scriptstyle \sum\limits_{j = 0}^s {\left| {{\mathcal{E}_j}} \right|}  = n\hfill} {\frac{{n!}}{{\left| {{\mathcal{E}_0}} \right|!\left| {{\mathcal{E}_1}} \right|!...\left| {{\mathcal{E}_s}} \right|!}}\prod\limits_{j = 0}^s {{{ {{\varepsilon_j}}}^{{|\mathcal{E}_j|}}}} }
\\\nonumber
&\cdot \min \bigg\{ {1,\sum\limits_{w = 1}^{|\mathcal{E}|} {{\eta(w)} } \frac{{{\rm{coef}}\left( {{{\left( {{{\left( {1 + y} \right)}^{d_c}} - 1 - yd_c} \right)}^{n\frac{{{d_v}}}{{{d_c}}}}},{y^{wd_v}}} \right)}}{{ {nd_v \choose wd_v} }}}
\\\nonumber
&{{\left( {\frac{1}{{q - 1}}} \right)}^{w\frac{{{d_v}}}{{{d_c}}}}}\bigg\}.
\end{align}


\end{theorem}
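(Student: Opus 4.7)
The approach mirrors the SNBRE analysis of Theorem \ref{thm:PML_B}, extended to the combined randomness of a random $(d_v,d_c)$-regular Tanner graph (via the configuration model) and i.i.d.\ uniform non-zero edge labels. I first invoke Lemma \ref{lem:ML_decoding_all_zeros} to assume the all-zero codeword was transmitted, so that ML decoding fails iff some non-zero consistent error vector lies in the kernel of $\mathbf{H}$. Conditioning on the partial-erasure pattern produces the outer multinomial factor $\tfrac{n!}{\prod_j |\mathcal{E}_j|!}\prod_j \varepsilon_j^{|\mathcal{E}_j|}$, exactly as in Theorem \ref{thm:PML_B}.

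Given the pattern, I apply a union bound over non-zero consistent vectors, partitioned by their Hamming weight $w$. Lemma \ref{lem:weight} supplies the count $\eta(w)$, giving the conditional failure probability bound $\min\{1,\sum_w \eta(w)\,P_w\}$, where $P_w$ is the probability---over graph and labels---that a \emph{fixed} length-$n$ weight-$w$ vector is a codeword. The $\min$ with $1$ is the standard union-bound improvement.

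The core step bounds $P_w$. Using the configuration model, the Tanner graph is a uniform matching between the $n d_v$ variable-side sockets and the $(n-k)d_c = n d_v$ check-side sockets; the support of the fixed vector marks $w d_v$ variable-side sockets as \emph{live}, producing $m_c$ live sockets at each check $c$ after the matching. Conditioning on the matching, the check-satisfaction events factorize across checks because the labels are independent. Since multiplication by a non-zero label is a bijection on the non-zero field elements, Lemma \ref{lem:zero_sum} yields per-check satisfaction probabilities of $1$ for $m_c=0$, $0$ for $m_c=1$, and at most $\tfrac{1}{q-1}$ for $m_c\ge 2$. Hence $P_w \le \mathbb{E}_{\mathrm{match}}[\mathbf{1}\{\forall c\colon m_c\neq 1\}\,(\tfrac{1}{q-1})^{|\{c:m_c\ge 2\}|}]$.

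Finally I convert the matching expectation to a generating-polynomial coefficient. Since each check with $m_c\ge 2$ absorbs at most $d_c$ live edges and together they absorb all $wd_v$, the count $|\{c:m_c\ge 2\}|\ge w d_v/d_c$; monotonicity of $x\mapsto (\tfrac{1}{q-1})^x$ extracts the factor $(\tfrac{1}{q-1})^{w d_v/d_c}$. The remaining matching-level count is captured by the per-check generating polynomial $(1+y)^{d_c}-1-y d_c=\sum_{m\ge 2}\binom{d_c}{m}y^m$, which tracks the allowed live-socket subsets at a check; raising this polynomial to the $n d_v/d_c$-th power (one factor per check) and reading out the coefficient of $y^{wd_v}$, normalized by the total number $\binom{nd_v}{w d_v}$ of ways to place $w d_v$ live sockets among the $n d_v$ check sockets, produces the ratio in the theorem's statement. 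Substituting back into the outer multinomial sum yields the claimed bound. The main obstacle I anticipate is the combinatorial bookkeeping for the per-check polynomial---in particular, correctly accounting for checks with $m_c=0$ within the chosen normalization so that the compact form $(1+y)^{d_c}-1-y d_c$ (rather than the more permissive $(1+y)^{d_c}-y d_c$) emerges and still upper-bounds the true failure probability; once this identity is in place, assembly of the $\min\{1,\cdot\}$ truncation and the outer multinomial sum is routine.
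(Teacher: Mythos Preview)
Your overall strategy matches the paper's proof step for step: all-zero-codeword reduction via Lemma~\ref{lem:ML_decoding_all_zeros}, multinomial averaging over the partial-erasure pattern, a union bound over consistent vectors stratified by Hamming weight with the count $\eta(w)$ from Lemma~\ref{lem:weight}, a configuration-model socket argument for the graph randomness, and Lemma~\ref{lem:zero_sum} (together with the bijection $x\mapsto hx$ on nonzero elements) to bound each active check's satisfaction probability by $1/(q-1)$.

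Your concern about the per-check polynomial is exactly right, and it is not merely bookkeeping. The argument you describe---and the paper's own proof---naturally produces the generating function $(1+y)^{d_c}-d_c y$, because the event $\{m_c\neq 1\text{ for all }c\}$ allows $m_c=0$; subtracting the extra $1$ forces \emph{every} check to receive $m_c\ge 2$ live sockets and therefore undercounts the admissible socket placements. Since this count appears in the numerator of an upper bound, replacing it by a smaller quantity does not preserve the inequality. Concretely, whenever $w<2n/d_c$ the coefficient ${\rm coef}\bigl(((1+y)^{d_c}-1-yd_c)^{nd_v/d_c},y^{wd_v}\bigr)$ vanishes, yet $P_w>0$ is certainly possible for such $w$ (e.g., configurations where all $wd_v$ live sockets land in a handful of checks). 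The paper's proof glosses over this point; read literally, the displayed bound is not valid for those low-weight terms. With the polynomial corrected to $(1+y)^{d_c}-yd_c$ your outline goes through verbatim and yields a valid upper bound, so you should not expect to be able to derive the ``$-1$'' version as stated.
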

\begin{proof}
An ML decoder fails if and only if there is a non-trivial solution to the equation ${{{\bf{H}}_\mathcal{E}}{\boldsymbol{x}}_{\cal E}^T = {\boldsymbol{0}}}$, which is consistent with respect to $\left\{ {{{\cal E}_j}} \right\}_{j = 1}^s$:
\begin{align}
\label{prob_LDPC}
&\Pr \left( {\exists \boldsymbol{x}_\mathcal{E} \ne \boldsymbol{0}, \boldsymbol{x}_\mathcal{E} {\hspace{2pt} \rm is \hspace{2pt} consistent}:{\bf{H}_\mathcal{E}}{\boldsymbol{x}_\mathcal{E}^T} = \boldsymbol{0}} \right)
\\\nonumber
& \le \sum\limits_{{\boldsymbol{x}_\mathcal{E}} \ne \boldsymbol{0}, \boldsymbol{x}_\mathcal{E} \hspace{2pt} \rm is \hspace{2pt} consistent } {\Pr \left( {{{\bf{H}}_{\cal E}}{{\boldsymbol{x}_\mathcal{E}^T}} = \boldsymbol{0}} \right)},
\end{align}
where the upper bound follows by the union bound. Consider an arbitrary but fixed consistent vector $\boldsymbol{x}_\mathcal{E}$ and denote the number of its non-zero entries by $w(\boldsymbol{x}_\mathcal{E})$. There are $w(\boldsymbol{x}_\mathcal{E}) d_v $ edges connected to variable nodes corresponding to the non-zero elements of $\boldsymbol{x}_\mathcal{E}$. For  ${{{\bf{H}}_{\cal E}}{\boldsymbol{x}}_{\cal E}^T = {\boldsymbol{0}}}$ to hold, each neighbouring check of the $w(\boldsymbol{x}_\mathcal{E})$ non-zero variable nodes must be connected to these variable nodes at least twice. As the total number of check nodes is $n d_v /d_c$, we have ${{\rm{coef}}\left( {{{\left( {{{\left( {1 + y} \right)}^{d_c}} - 1 - d_cy} \right)}^{n\frac{{{d_v}}}{{{d_c}}}}},{y^{w(\boldsymbol{x}_\mathcal{E})d_v}}} \right)}$ configurations out of $nd_v \choose w(\boldsymbol{x}_\mathcal{E})d_v$ such configuration. According to Lemma \ref{lem:zero_sum}, the probability that a certain check node is satisfied is upper bounded by $1/(q-1)$ (recall that uniform edge labels are assumed). The number of check nodes connected to $w(\boldsymbol{x}_\mathcal{E})$ variable nodes is at least $w(\boldsymbol{x}_\mathcal{E}) d_v /d_c $.  Thus, ${\left( {1/\left( {q - 1} \right)} \right)^{w(\boldsymbol{x}_\mathcal{E}){d_v}/{d_c}}}$ is an upper bound on the probability that all check nodes connected to the $w(\boldsymbol{x}_\mathcal{E})$ non-zero variable nodes are satisfied. Finally, by summing over all the possible weights of consistent vectors (counted by $\eta\left(w\right)$ of Lemma \ref{lem:weight}) and taking into account the channel partial-erasure probabilities, \eqref{LDPC_ML} is obtained. The minimum in \eqref{LDPC_ML} is taken to tighten the upper bound.
\end{proof}

In Figure \ref{fig:LDPC_comparison}, we compare \eqref{LDPC_ML} for $q=4$, where the set $\{0,1\}$ is considered as either a partial erasure (decoded with the QMBC decoder) or a full erasure (decoded with the BEC decoder). In terms of the upper bound \eqref{LDPC_ML}, the QMBC model is expected to provide ML decoding performance orders of magnitude better compared to full-erasure decoding.

\begin{figure}[!t]
\centering
\includegraphics[scale=0.6]{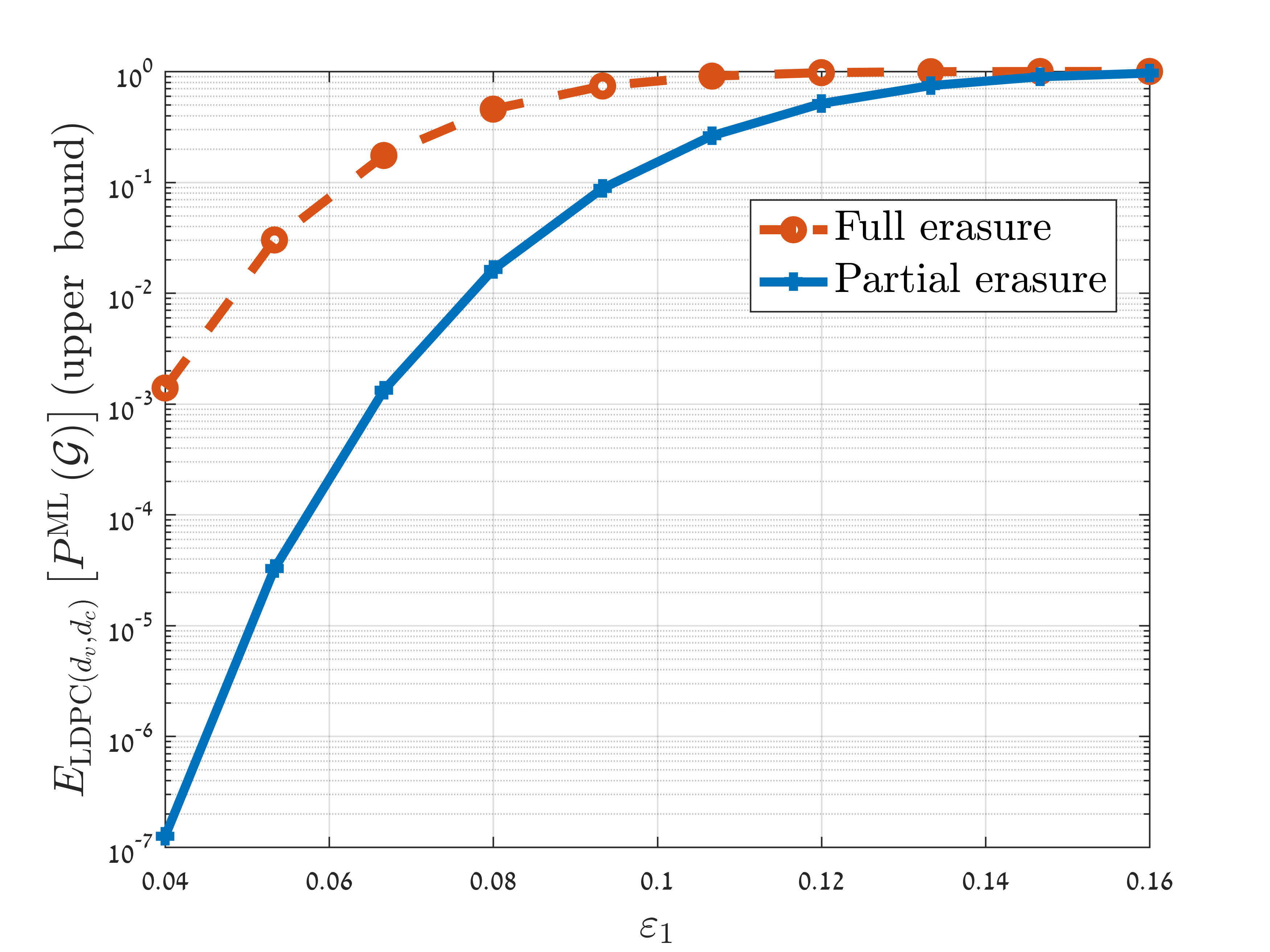}
\caption{A comparison of $\mathbb{E}_{{\rm LDPC} (d_v,d_c)}\left[ P^{{\rm{ML}}}\left(\mathcal{G}\right) \right]$ for the LDPC ensemble $(3,27)$ (rate $8/9$), for a GF($4$) code of length $252$. The set $\{0,1\}$ is either considered as a partial erasure or a full erasure with probability $\varepsilon_1$.}
\label{fig:LDPC_comparison}
\end{figure}

\section{Simulation Results}
\label{sec:sim_results}

In this part, we present simulation results of the QMBC iterative-decoding performance. We used the regular ($3,27$) LDPC code ensemble (rate $8/9$), whose rate is of interest in practical flash memories. Two codeword lengths were considered: $n=513$ and $n=1026$. The average decoding performance is measured by symbol erasure rate (SER), where each variable node that remains partially erased when the decoder terminates contributes to this quantity.

\subsection{Comparison to binary full erasures}

As a preliminary step, we considered binary coding with GF($q$) symbols converted to bits. In this setting, a GF($q$) symbol is decomposed into $s$ bits, where a partial-erasure event of type $j$ corresponds to $j$ (fully) erased least significant bits. We compare GF($q$) codes with partial erasures (decoded using the QMBC decoder) to binary codes with equivalent full erasures (decoded using the BEC decoder). The results are shown in Figure \ref{fig:sim_results_binary}. It is demonstrated that partial-erasure decoding outperforms binary erasure decoding, offering SER performance better by up to an order of magnitude. The improved performance of GF($q$) codes over binary codes is explained by the mitigated effect of stopping sets due to the non-binary edge labels, as we developed in Section \ref{sec:ss}.


\begin{figure}[h!]
        \centering
        \begin{subfigure}[t]{0.5\textwidth}
                \includegraphics[scale=0.6]{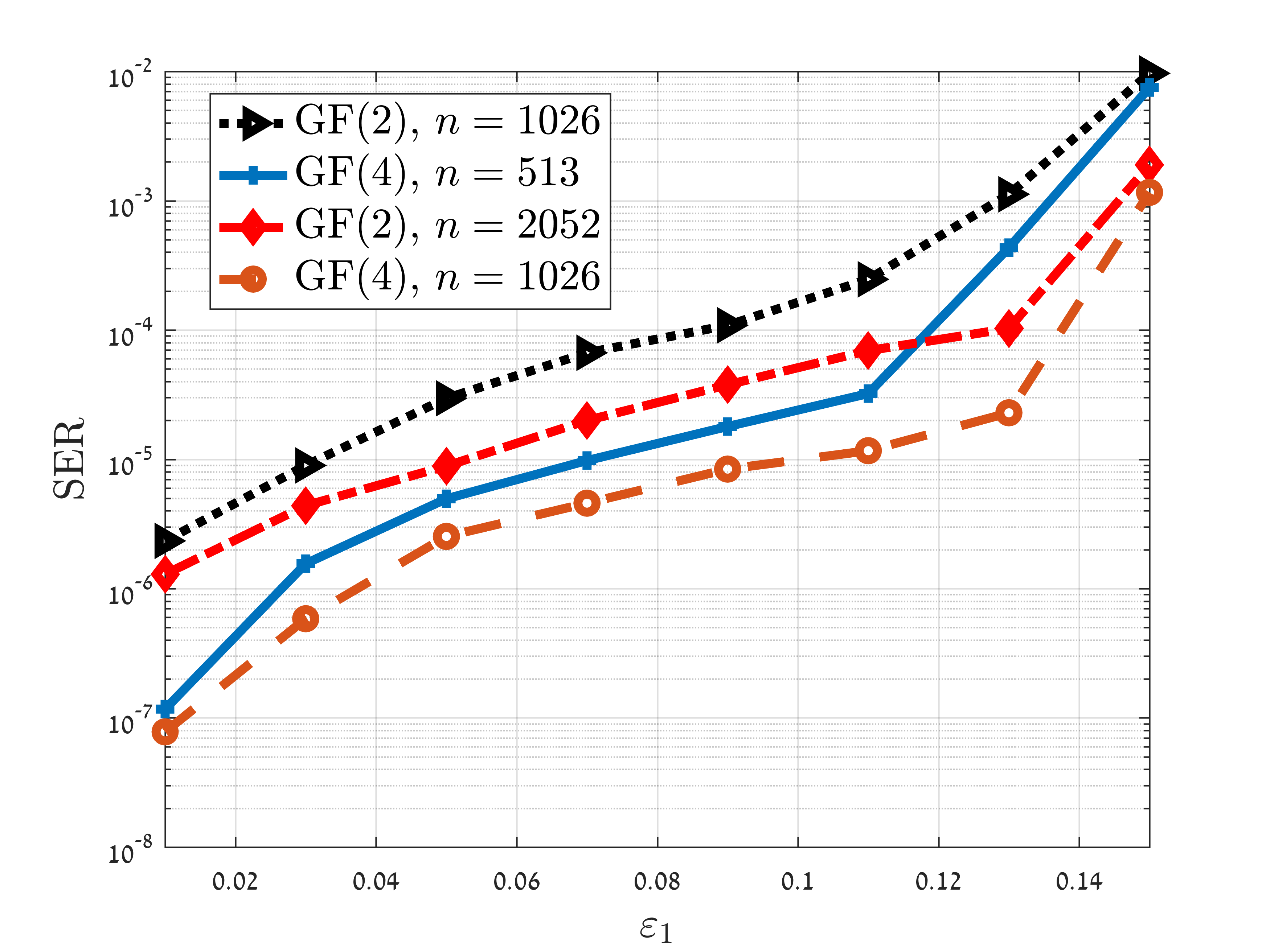}
                \caption{$q=4$, $j=1$ (decoding threshold: $0.184$).}
                \label{fig:q4_2}
        \end{subfigure}%
        ~ 

        \begin{subfigure}[t]{0.5\textwidth}
                \includegraphics[scale=0.6]{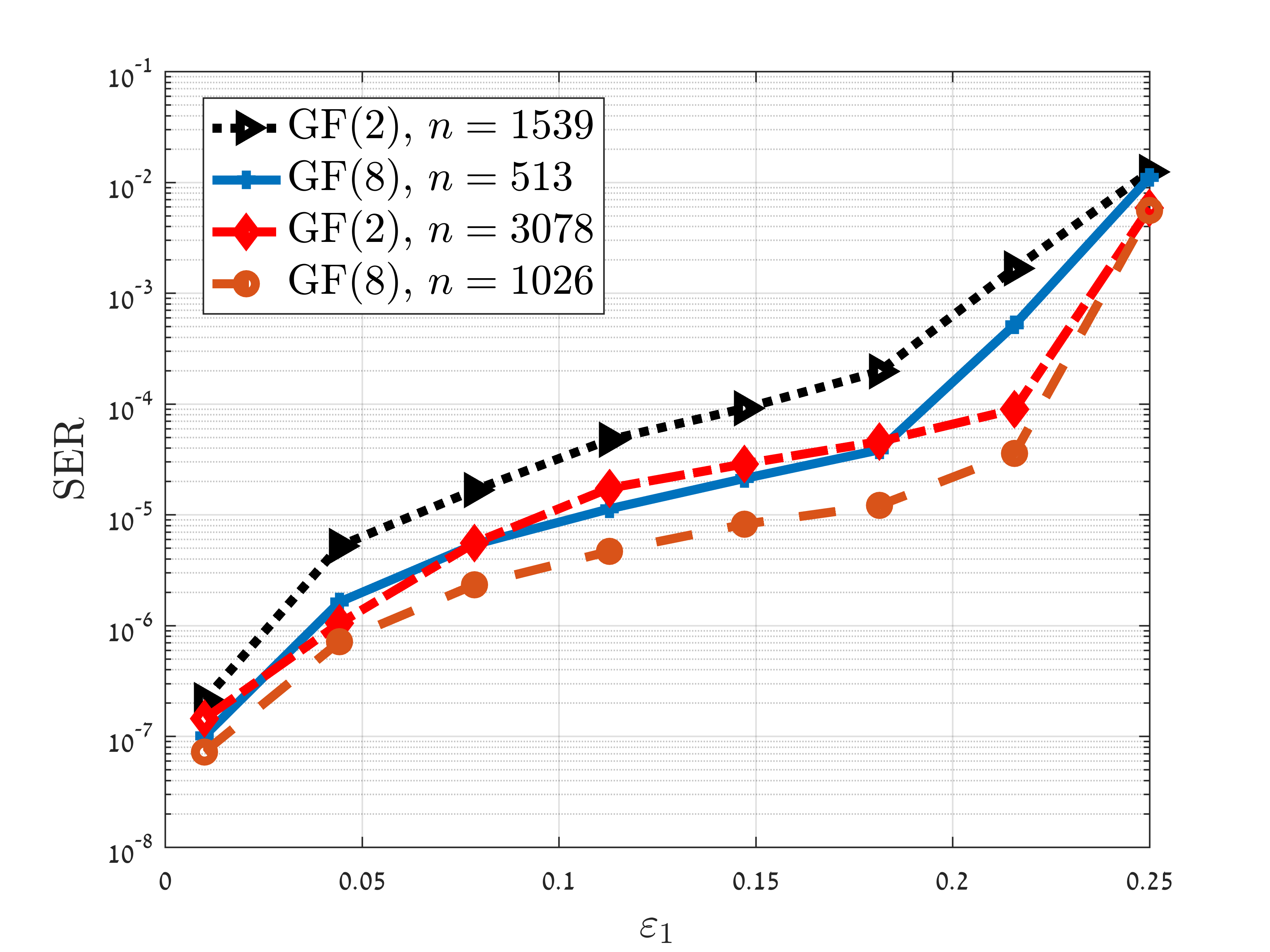}
                \caption{$q=8$, $j=1$ (decoding threshold: $0.276$).}
                \label{fig:q8_2}
        \end{subfigure}

        ~ 
\caption{SER performance comparison between GF($q$) and binary codes. The labels of the GF($q$) LDPC codes are uniformly distributed.}
\label{fig:sim_results_binary}
\end{figure}

\subsection{Performance of the edge-labeling algorithm}

As we saw in the previous subsection, GF($q$) LDPC codes over the QMBC are superior to binary codes. In this part, we show that the decoding performance of GF($q$) LDPC codes can be further improved using the edge-labeling algorithm (Algorithm \ref{QMBC_mitigation}) developed in Section \ref{ss:edge_labeling}. In Figure \ref{fig:sim_results}, we compare the iterative decoding performance of GF($q$) with uniformly-distributed edge labels to edge labels optimized using Algorithm \ref{QMBC_mitigation}. The optimized edge labels lead to a significant improvement in in SER performance, up to two orders of magnitude. It is demonstrated that the performance gap increases with $q$ for a fixed partial-erasure type. The reason is the larger number of resolvable edge labels, which increases with $q$ (see Theorem \ref{theorem:kappa}).

\begin{figure}[h!]
        \centering
        \begin{subfigure}[t]{0.5\textwidth}
                \includegraphics[scale=0.6]{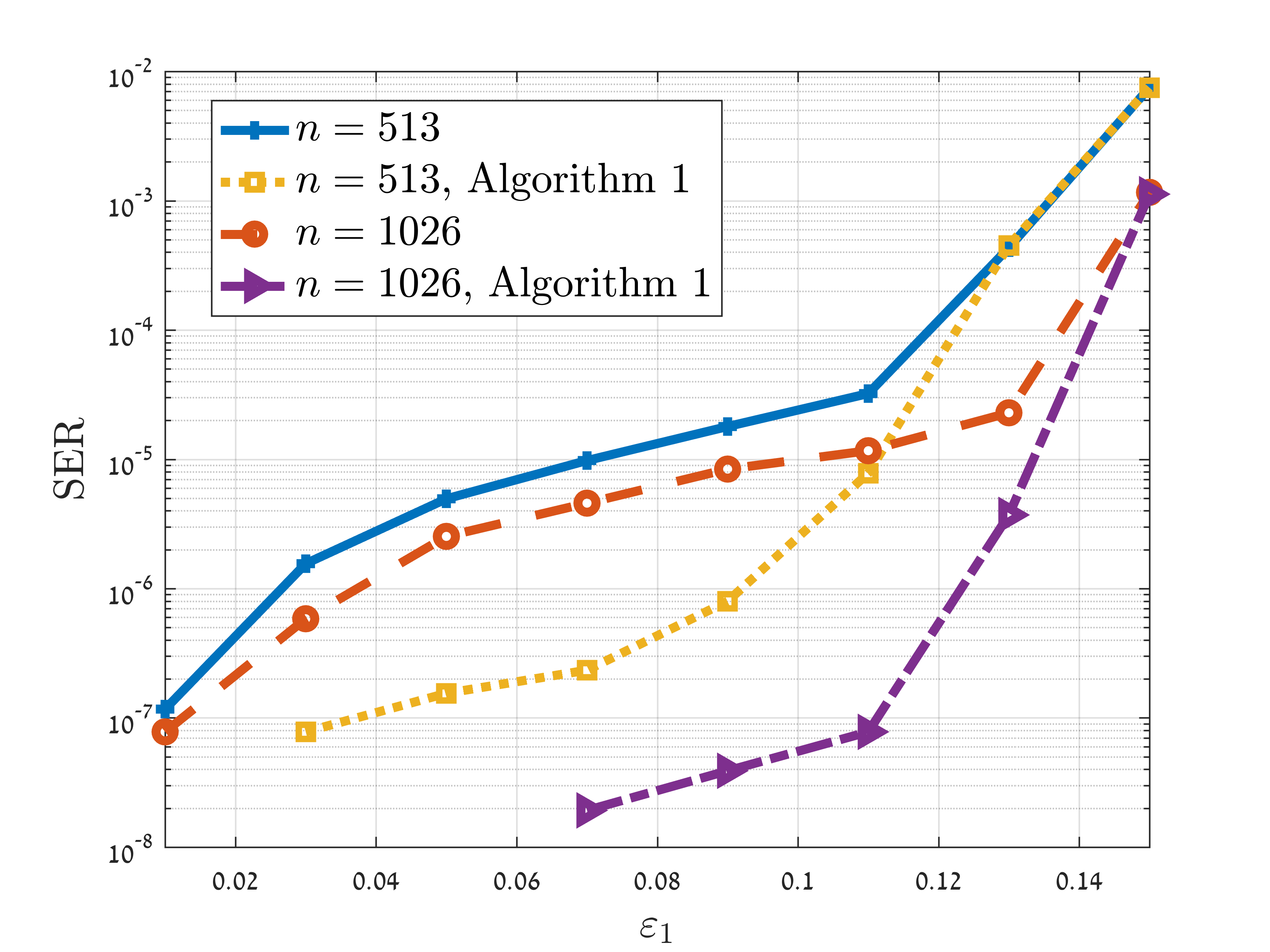}
                \caption{$q=4$, $j=1$ (decoding threshold $0.184$).}
                \label{fig:q4}
        \end{subfigure}%
        ~ 

        \begin{subfigure}[t]{0.5\textwidth}
                \includegraphics[scale=0.6]{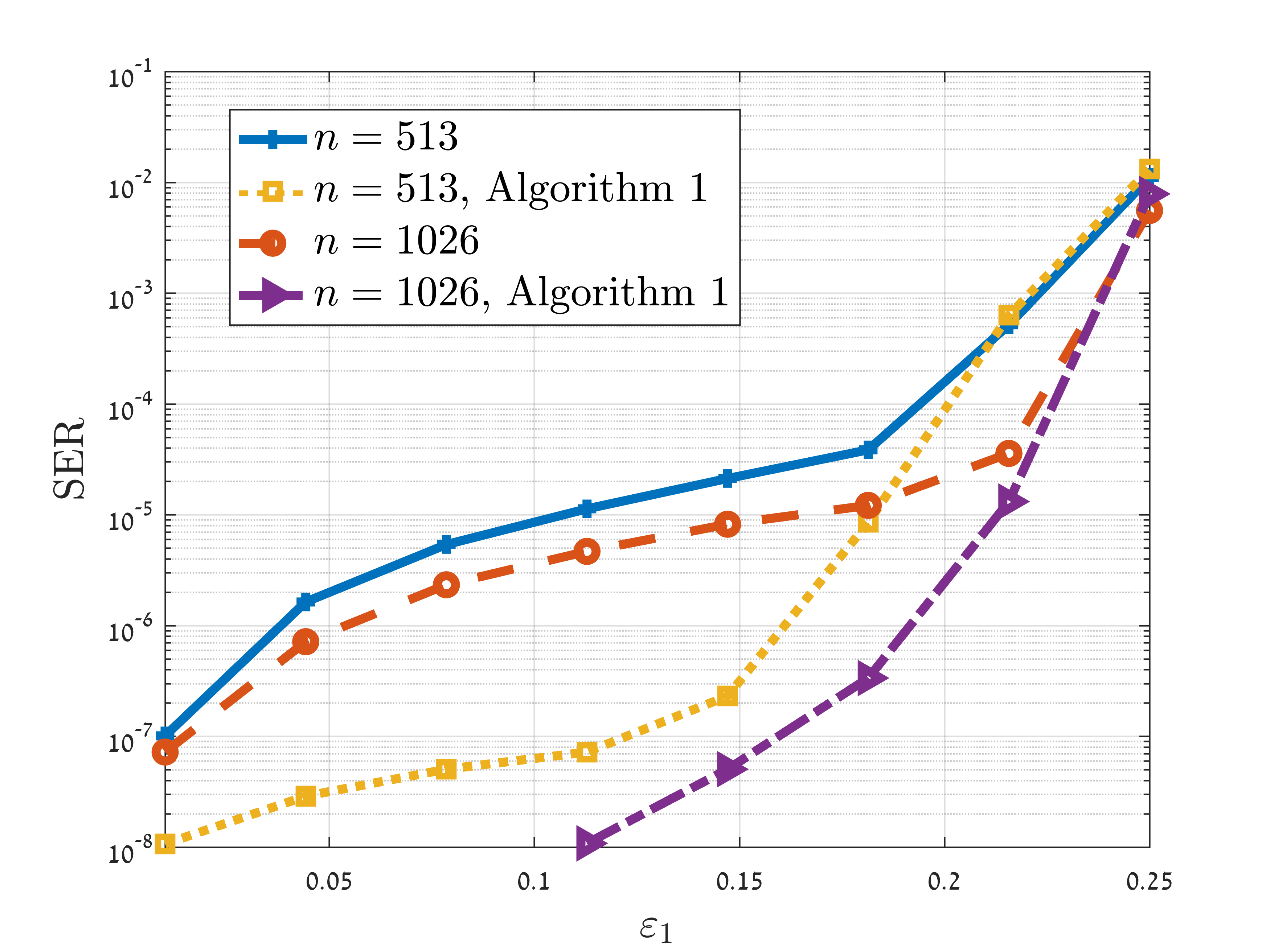}
                \caption{$q=8$, $j=1$ (decoding threshold $0.276$).}
                \label{fig:q8}
        \end{subfigure}

%
        ~ 
\caption{SER performance comparison of QMBC partial-erasure decoding, between uniformly-distributed and optimized edge labels. The decoding thresholds are given for optimal edge-label distributions.}
\label{fig:sim_results}
\end{figure}

\section{Conclusion}
\label{sec:conclusion}

This work offers a study of the performance of iterative decoding of GF($q$) LDPC codes over the QMBC. By an asymptotic threshold analysis, we demonstrated explicitly how the edge label distribution affects decoding performance. We later showed that unlike the binary case, partially-erased stopping sets can be resolved by a wise setting of edge labels. For this aim, we proposed and evaluated an edge-labeling algorithm for improved finite-length decoding performance. Finally, we derived expressions for the finite-length performance of a maximum-likelihood decoder, both for the standard non-binary random ensemble and for LDPC ensembles.

Our work leaves interesting problems for future research. Designing good GF($q$) LDPC codes for the QMBC is an important research direction. Unlike binary codes, GF($q$) LDPC codes require a joint optimization of degree and edge-label distributions. It is of importance to give an expression for the QMBC finite-length performance that depends on the edge-label distribution in addition to the stopping-set distribution. As another direction, the upper bound on the ML decoding performance for LDPC ensembles might be improved by considering non-uniform edge-label distributions.

\bibliographystyle{IEEEtran}
	\bibliography{QMBC}
\appendices

\section{Proof of Theorem \ref{th:capacity}}
\label{proof:capacity}

Define ${p_x} \triangleq \Pr (X = x)$ for $x \in \mathcal{X}$ to be the input distribution to the channel. The channel capacity $C$ is:
\begin{equation}
\label{cap_def}
C = {\max _{\left\{ {{p_x}} \right\}}}I\left( {X;Y} \right) = {\max _{\left\{ {{p_x}} \right\}}}\left( {H\left( Y \right) - H\left( {\left. Y \right|X} \right)} \right),
\end{equation}
where $I\left( {X;Y} \right)$ is the mutual information between the input $X$ and the output $Y$, and $H\left( Y \right)$, ${H\left( {\left. Y \right|X} \right)}$ are the entropy of $Y$ and the conditional entropy of $Y$ given $X$, respectively. The conditional entropy of $Y$ given $X$ can be calculated using the transition probabilities in \eqref{tran_matrix}. $Y$ is a set of $2^j$ elements with probability $\varepsilon_j$. Thus,
\begin{equation}
\label{HYX}
H\left( {\left. Y \right|X} \right) = -\sum\limits_{j = 0}^s {{\varepsilon _j}\log } \left( {{\varepsilon _j}} \right).
\end{equation}
We now move to maximize $H(Y)$, since $H(Y|X)$ is independent of the input distribution. For a given $j$, let us denote by $\Omega _i^j$ ($i=1,2,...,q/2^j$) the \textit{distinct} sets among $\mathcal{M}_x^j$. The entropy $H(Y)$ as a function of ${\left\{ {{p_x}} \right\}}$ is:
\begin{align}
\label{HY}
H\left( Y \right) =  -\sum\limits_{j = 0}^s {\sum\limits_{i = 1}^{q/{2^j}} {\left( {\sum\limits_{x \in \Omega _i^j} {{p_x}} {\varepsilon _j}} \right)\log \left( {\sum\limits_{x \in \Omega _i^j} {{p_x}} {\varepsilon _j}} \right)} }.
\end{align}

The capacity-achieving distribution can be found by solving the following maximization problem:
\begin{equation}
{\max _{\left\{ {{p_x}} \right\}}} H(Y),\hspace{7pt} {\text{s}}.{\text{t}}.\sum\limits_{x \in \mathcal{X}} {{p_x}}  = 1.
\end{equation}
Define the function $f\left( {\left\{ {{p_x}} \right\}_{x \in \mathcal{X}}} \right)$ to be the entropy $H(Y)$ as a function of $p_x$. Using the method of Lagrange multipliers, we get the following system of equations:
\begin{equation}
\label{lagrange}
\frac{{\partial f}}{{\partial {p_x}}} + \lambda  = 0,\hspace{7pt} \sum\limits_{x \in \mathcal{X}} {{p_x}}  = 1,
\end{equation}
where $\lambda$ is the Lagrange multiplier. The left-hand side of Equation \eqref{lagrange} leads to the following equations for $x \in \mathcal{X}$:
\begin{align}
 - \sum\limits_{j = 0}^s {\sum\limits_{i = 1}^{q//{2^j}} {\left[ {{\varepsilon _j}\log \left( {\sum\limits_{x \in \Omega _i^j} {{p_x}} {\varepsilon _j}} \right) + \sum\limits_{x \in \Omega _i^j} {{\varepsilon _j}} } \right]} }  + \lambda  = 0.
\end{align}
The equations are satisfied for the uniform distribution $p_x = 1/q$, where $\lambda$ assumes a constant value independent of $x$. As $I\left( {X;Y} \right)$ is a concave function of $p_x$ once $\Pr \left( {\left. {Y = y} \right|X = x} \right)$ is given, the uniform distribution leads to the global maximum of $I\left( {X;Y} \right)$, that is, to the capacity. Finally, calculating the capacity using Equations \eqref{HYX}-\eqref{HY} with $p_x$ substituted by $1/q$, leads to the capacity in \eqref{qmbc_capacity}. \qed

\section{Proof of Theorem \ref{thm:QMBC_BEC}}
\label{proof:QMBC_BEC}

Assume that all the edge labels are the same. In this case, the CTV messages are independent of the edge labels, and we have (see \eqref{CTV_def}):
\begin{equation}
{\rm CTV}_{\mathtt{c} \to \mathtt{v}}^{\left( l \right)} =  \sum\limits_{\mathtt{v}' \in \left\{ {\mathcal{N}\left( \mathtt{c} \right)\backslash \mathtt{v}} \right\}} { {{\rm VTC}_{\mathtt{v}' \to \mathtt{c}}^{\left( l-1 \right)}}}.
\end{equation}
That is, an outgoing CTV message is simply the sumset of the incoming VTC messages. Recall that the initial channel-information sets are contained in each other, i.e. $\mathcal{M}_0^j \subseteq \mathcal{M}_0^{j'}$ for $j \le j'$, and that each set is an additive subgroup of $\text{GF}^{+}$($q$), closed under addition. For example, the possible channel-information sets when $q=4$ are $\left\{ 0 \right\},\left\{ {0,1} \right\}$ and $\left\{ {0,1,2,3} \right\}$ (we define ${\mathcal{M}_0^0}$ as the singleton $\left\{ 0 \right\}$). Due to the closure property of subgroups, the initial sumset at a check node can be written as:
\begin{equation}
\label{sumset_union}
\sum\limits_{j \in \mathcal{M}_{\mathtt v}} {\mathcal{M}_0^j}  = \mathcal{M}_0^{\mathop {\max}\limits_{j \in \mathcal{M}_{\mathtt v}} j},
\end{equation}
where $\mathcal{M}_{\mathtt{v}}$ is an ordered list containing indices of incoming VTC messages (See Section \ref{sec:de_equations}). Thus, the sumset operation at check nodes simplifies to finding the incoming VTC message of the maximum cardinality. In a similar manner, the intersection operation performed at variable nodes simplifies to finding the incoming incoming CTV message of smallest cardinality:
\begin{equation}
\label{intersection_min}
\bigcap\limits_{j \in \mathcal{M}_{\mathtt{c}}} {\mathcal{M}_0^j}  = \mathcal{M}_0^{\mathop {\min}\limits_{j \in \mathcal{M}_\mathtt{c}} j},
\end{equation}
where $\mathcal{M}_{\mathtt{c}}$ is an ordered list containing indices of incoming CTV messages. As a result of \eqref{sumset_union} and \eqref{intersection_min}, the QMBC decoder simplifies to the BEC decoder. That is, a CTV message is a partial erasure if any of the incoming VTC messages is a partial erasure and a VTC message is a partial erasure if the corresponding variable was initially partially erased and all incoming CTV messages are partial erasure. This leads to the BEC density-evolution equation with $\varepsilon =\sum\limits_{j = 1}^s {{\varepsilon _j}}$.
\qed


\section{Proof of Lemma \ref{lem:ML_decoding_all_zeros}}
\label{proof:ML_decoding_all_zeros}

Assume the transmission of a codeword $\boldsymbol{c}$ from a linear code defined by a parity-check matrix $\bf{H}$. Let us denote by ${\boldsymbol{x}}^{(t)}$ ($t = 1,2,...,\prod\limits_{j = 1}^s {\left| {{{\cal{E}}_j}} \right|}$) the GF($q$) words (not necessarily codewords) consistent (see Definition \ref{def:consistent}) with the channel output $\boldsymbol{y}$. That is, any $\boldsymbol{x}^{(t)}$ as input would result in the output $\boldsymbol{y}$, given the partial-erasure index sets $\{{{\mathcal{E}_j}}\}_{j = 1}^s$. An ML decoder fails if and only if there exists $\boldsymbol{x}^{(t)} \ne \boldsymbol{c}$, such that ${\bf{H}} \boldsymbol{x}^{(t)} = \boldsymbol{0}$. Now assume the transmission of the all-zero codeword, and recall that $\mathcal{M}_{c_i}^j = \mathcal{M}_{0}^j + c_i$ (see {Section~\ref{sec:channel_model}). Then each $\boldsymbol{x}^{(t)}$ consistent with the sets $\mathcal{M}_{c_i}^j$ and satisfying ${\bf{H}} \boldsymbol{x}^{(t)} = \boldsymbol{0}$ has a corresponding $\boldsymbol{z}^{(t)}=\boldsymbol{x}^{(t)}-\boldsymbol{c}$ that is consistent with the sets $\mathcal{M}_{0}^j$ and satisfying ${\bf{H}} \boldsymbol{z}^{(t)} = \boldsymbol{0}$. Thus, the probability of decoding failure under ML decoding is independent of the transmitted codeword.

\section{Proof of Lemma \ref{lem:zero_sum}}
\label{proof:zero_sum}

Let us start with the case $m=2$. The elements of the vector $\boldsymbol{a}$ sum to zero if and only if they are the same. Thus, there are $q-1$ vectors with all non-zero elements of length $2$ whose elements sum to zero. As a consequence, there are $(q-1)^2 - (q-1)$ vectors with all non-zero elements whose elements sum to a non-zero field element. Let us move to the $m=3$ case, where we consider a  vector $\tilde{\boldsymbol{a}} = \left(\tilde{a}_1,\tilde{a}_2,\tilde{a}_3\right)$ of $3$ non-zero elements. The equation ${\tilde{a}}_1+ {\tilde{a}}_2 + {\tilde{a}}_3 = 0$ is equivalent to ${\tilde{a}}_1+  {\tilde{a}}_2 =  {\tilde{a}}_3$. As $\tilde{a}_3$ can be any non-zero field element, the number of ways to obtain ${\tilde{a}}_1+ {\tilde{a}}_2 + {\tilde{a}}_3 =0$ is the same as the number of ways to obtain a non-zero sum of ${\tilde{a}}_1+ {\tilde{a}}_2$. According to the previous $m=2$ result, this number is $(q-1)^2 - (q-1)$. Continuing in the same fashion, there are $\sum\limits_{i = 1}^{m - 1} {{{\left( {q - 1} \right)}^i}{{\left( { - 1} \right)}^{m - i - 1}}}$ ways to obtain a zero sum for a random vector of $m$ non-zero elements, $m \ge 2$. Simplifying the sum and normalizing by the number of possible vectors $(q-1)^m$ leads to \eqref{w_1}. The upper bound in \eqref{w_1} is equivalent to ${\left( {1 - q} \right)^{2 - m}} \le 1$, which holds for all $m \ge 2$. This upper bound is sharp, as it is attained with equality for $m=2$.

\end{document}